\documentclass{article} 


\usepackage{amsfonts}
\usepackage{graphicx,epstopdf}
\usepackage{booktabs}
\usepackage{subcaption}
\usepackage{rotating}
\usepackage{tabularx}
\usepackage{adjustbox}
\usepackage{multicol}
\usepackage{multirow}
\usepackage{lscape}
\usepackage{cite}
\newcommand{\package}[1]{\textbf{#1}} 

\usepackage{caption}
\usepackage{capt-of}
\usepackage{algorithm2e}
\SetKwIF{If}{ElseIf}{Else}{if}{}{else if}{else}{end if}%
\SetKwFor{While}{while}{}{end while}%
\SetKwFor{For}{for}{}{end for}%
\RestyleAlgo{ruled}
\ifpdf
  \DeclareGraphicsExtensions{.eps,.pdf,.png,.jpg}
\else
  \DeclareGraphicsExtensions{.eps}
\fi
\usepackage{hyperref}
\usepackage[dvipsnames]{xcolor}
\hypersetup{colorlinks=true, allcolors=NavyBlue}





\newtheorem{theorem}{Theorem}[section]

\newtheorem{lemma}{Lemma}[section]

\newcommand{\qed}{\hfill\rule{2.1mm}{2.1mm}}

\newcommand{\A}{\bs{A}}

\renewcommand{\L}{\mathcal{L}}

\renewcommand{\O}{\mathcal{O}}

\newcommand{\R}{\mathbf{R}}

\usepackage{amsmath}
\DeclareMathOperator*{\argmin}{\arg\min}
\DeclareMathOperator{\range}{range}

\newcommand{\diag}{\mathrm{diag}}
\newcommand{\Diag}{\mathrm{Diag}}
\newcommand{\dom}{\mathrm{dom}}

\DeclareMathOperator*{\prox}{prox}
\newcommand{\rank}{\mathrm{rank}}
\newcommand{\sign}{\mathrm{sign}}

\newcommand{\st}{\mathrm{s.t.}\;}

\renewcommand{\b}{\bs{b}}
\newcommand{\bb}{{\bs{\beta}}}
\newcommand{\bt}{{\bs{\theta}}}

\renewcommand{\d}{\bs d}

\newcommand{\e}{\bs {1}}
\newcommand{\I}{\bs{I}}
\newcommand{\p}{\bs {p}}
\newcommand{\bs}{\boldsymbol}
\renewcommand{\u}{\bs{u}}
\newcommand{\x}{\bs {x}}
\newcommand{\X}{\bs {X}}
\newcommand{\y}{\bs {y}}
\newcommand{\Y}{\bs {Y}}
\newcommand{\z}{\bs{z}}

\newcommand{\bO}{\bs{\Omega}}


\newcommand{\branchdef}[1] {\ensuremath{ \left\{\begin{array}{rl} #1 \end{array} \right. }} 
\newcommand{\rbra}[1]{\ensuremath{\left( #1 \right)}} 
\newcommand{\bra}[1]{\ensuremath{\left\{ #1 \right\}}} 

\newcommand{\ceil}[1]{\ensuremath{\left\lceil #1 \right\rceil}}

\newcommand{\ds}[1]{\displaystyle{#1}}

\newcommand{\half}{\frac{1}{2}}


\newcommand{\ra}{\rightarrow}




\usepackage[top=1in, bottom=1in, left=1in, right=1in]{geometry} 
\usepackage{authblk}

\begin{document}





\title{Proximal Methods for Sparse Optimal Scoring and Discriminant Analysis}
\author[1]{Summer Atkins} 
\author[2]{Gudmundur Einarsson} 
\author[3]{Brendan Ames}
\author[4]{Line Clemmensen} 
\affil[1]{Department of Mathematics, Louisiana State University, Baton Rouge, LA 70803-4918, srnatkins@lsu.edu}
\affil[2]{deCODE Genetics, gudmundur.einarsson2@decode.is}
\affil[3]{Department of Mathematics, University of Alabama,	Box 870350, Tuscaloosa, AL 35487-0350, bpames@ua.edu}
\affil[4]{Department of Applied Mathematics and Computer Science, Technical University of Denmark, Building 324, 2800 Kongens Lyngby, Denmark, lkhc@dtu.dk}

\maketitle

\begin{abstract}
 Linear discriminant analysis (LDA) is a classical method
		for dimensionality reduction, where
		discriminant vectors are sought to project data to a lower dimensional
		space for optimal separability of classes. Several recent papers have
		outlined strategies, based on exploiting sparsity of the discriminant vectors, for performing LDA in the high-dimensional setting where the number of features exceeds the number of observations in the data.
		However, many of these proposed methods lack scalable methods for solution
		of the underlying optimization problems.
		We consider an optimization scheme for solving
		the sparse optimal scoring formulation of LDA based
		on block coordinate descent. Each iteration of this algorithm requires an update of a scoring vector, which admits an analytic formula, and an update of the corresponding discriminant vector, which requires solution of a convex subproblem; we will propose several variants of this algorithm where the proximal gradient method
		or the alternating direction method of multipliers is used to solve this subproblem.
		We show that the per-iteration cost of these methods scales linearly in
		the
		dimension of the data provided restricted regularization
		terms are employed, and cubically in the dimension
		of the data in the worst case.
		Furthermore, we establish that  when this block coordinate descent
		framework
		generates convergent subsequences of iterates, then
		these subsequences converge to the stationary points of the
		sparse optimal scoring problem.
		We demonstrate
		the effectiveness of our new methods with empirical results
		for classification of Gaussian data and data sets drawn from
		benchmarking repositories, including time-series and multispectral X-ray data, and provide
		 \texttt{Matlab} and \texttt{R} implementations of our optimization schemes.

\end{abstract}
%

\section{Introduction}
Sparse discriminant techniques have become 
popular in the last decade due to their ability to provide increased interpretation as well as predictive performance for high-dimensional problems where few observations are present.
These approaches typically build upon successes from sparse linear regression, in particular the
LASSO and its variants (see \cite[Section 3.4.2]{elements} and \cite{hastie2015statistical}), by augmenting existing schemes for linear discriminant analysis (LDA)
with sparsity-inducing  regularization terms, such as the $\ell_1$-norm and elastic net.

Thus far, little focus has been put on the optimization strategies of these sparse discriminant methods, nor their computational cost.
We propose three novel optimization strategies to obtain discriminant directions in the high-dimensional setting where the number
of observations $n$ is much smaller than the ambient dimension $p$
or when features are highly correlated, and analyze the convergence properties of these methods.
The methods are proposed for multi-class sparse discriminant analysis using the
sparse optimal scoring formulation
with elastic net penalty
proposed in \cite{clemmensen2011sparse};  adding both the $\ell_1$- and $\ell_2$-norm penalties gives sparse solutions which, in particular, are competitive when high correlations exist in feature space due to the grouping behaviour of the $\ell_2$-norm.
The first two strategies are 
proximal gradient methods
based on modification of the (fast) iterative shrinkage algorithm \cite{beck2009fast} for linear
inverse problems.
The third method uses a variant of the alternating direction method of multipliers similar
to that proposed in \cite{ames2014alternating}.
We will formally define each of these terms and discuss them in greater detail in Section~\ref{sec:prox}.
We will see that these heuristics allow efficient classification of
high-dimensional data, which was previously
impractical using the current state of the art for sparse discriminant
analysis.
For example,  if a diagonal
or low-rank Tikhonov regularization term is used
and the number of observations is very small relative to $p$, then the per-iteration cost of each of our algorithms is $\O(p)$;
that is, the per-iteration cost of our approach scales linearly with the
number of features of our data.
Finally, we provide implementations of our algorithms in the form of the \texttt{R} package \package{accSDA}(see~\cite{accSDA})
and \texttt{Matlab} software \footnote{Available at \href{http://bpames.people.ua.edu/software}{http://bpames.people.ua.edu/software}}.

\subsection{Existing approaches for sparse LDA}


We begin with a brief overview of existing sparse discriminant analysis techniques.
Methods such as \cite{witten2011penalized,tibshirani2003class,fan2008independence} assume independence between the features in the given data. This can lead to poor performance in terms of feature selection as well as predictions, in particular when high correlations exist. Thresholding methods such as \cite{shao2011sparse}, although proven to be asymptotically optimal, ignore the existing multi-linear correlations when thresholding low correlation estimates. 
Thresholding, furthermore, does not guarantee an invertible correlation matrix, and often pseudo-inverses must be utilized.

For two-class problems,  the analysis of \cite{mai2013connection} established
an equivalence between the three methods described in \cite{wu2008sparse,clemmensen2011sparse,mai2012direct}. These three approaches are formulated as constrained versions of Fisher's discriminant problem, the optimal scoring problem, and a least squares formulation of linear discriminant analysis, respectively. For scaled regularization parameters, \cite{mai2013connection} showed that they all behave asymptotically as Bayes rules. Another two-class sparse linear discriminant method is the linear programming discriminant method proposed in \cite{cai2011direct}, which finds
an $\ell_1$-norm penalized estimate of the product between covariance matrix and difference in means.

The sparse optimal scoring (SOS) problem  was originally formulated in
\cite{clemmensen2011sparse} as a multi-class problem seeking
at most $K-1$ sparse discriminating directions, where $K$ is the number of classes present,  whereas \cite{mai2013connection} was formulated for binary problems.
This approach builds on earlier work by Hastie et.~al~\cite{hastie1994flexible, hastie1995penalized}.
Mai and Zou later proposed a multi-class sparse discriminant analysis  (MSDA) based on the Bayes rule formulation of linear discriminant analysis
in \cite{mai2015multiclass}. It imposes only the $\ell_1$-norm penalty, whereas the SOS imposes an elastic net penalty ($\ell_1$- plus $\ell_2$-norm). Adding the $\ell_2$-norm can give better predictive performance, in particular when very high correlations exist in data. MSDA, furthermore, finds all discriminative directions at once, whereas SOS finds them sequentially via deflation. A sequential solution can be an advantage if the number of classes is high, and a solution involving only a few directions (the most discriminating ones) is needed. On the other hand, if $K$ is small, finding all directions at once may be advantageous, in order to not propagate errors in a sequential manner.
Sparse optimal scoring models based on the group-LASSO are considered in~\cite{merchante2012efficient,roth2008group}, while sparse optimal scoring and sparse discriminant analysis has become a popular tool in cognitive neuroscience~\cite{grosenick2008interpretable}, among many other domains.

Finally, the zero-variance sparse discriminant analysis approach of \cite{ames2014alternating}
reformulates the sparse discriminant analysis problem as an $\ell_1$-penalized nonconvex
optimization problem in order to sequentially identify discriminative directions
in the null-space of the pooled within-class scatter matrix. Most relevant for our discussion here is the use of proximal methods to approximately solve the nonconvex optimization
problems in \cite{ames2014alternating}; we will adopt a similar approach for solving the SOS problem.

\subsection{Contributions}

The proposed optimization algorithms are not inherently novel, as they are specializations of existing algorithmic frameworks widely used in sparse regression.
However, the application of these proximal algorithms to solve the sparse optimal scoring problem is novel.
Moreover, this specialization highlights the strong relationship between the optimal scoring and regression: optimal scoring generalizes regression, in the sense that optimal scoring simultaneously fits both a linear model and a quantitative encoding of categorical class labels.

Further, the proposed sparse optimal scoring heuristics offer a substantial improvement upon the current state of the art for optimal scoring in terms of computational efficiency when the number of predictor variables is large and dense discriminant vectors are desired.
Specifically, we show that the computation required for each iteration of the proposed algorithms scales linearly with the number of predictor variables, which yields a potential significant improvement upon the cubic scaling of the least angle regression-based algorithm initially adopted in~\cite{clemmensen2011sparse}; our comparison of computational complexity can be found in~Section~\ref{sec: comp} and Appendix~\ref{sec:it-comp}.

We performed a detailed empirical analysis of our proposed heuristics for sparse optimal scoring, including comparisons of classification accuracy and computational complexity for simulated data and real-world data sets drawn from the UC Riverside Time Series Archive~\cite{keogh2006ucr,dau2019ucr}, investigation of convergence phenomena, and scaling tests. These empirical results agree with our theoretical analyses of computational complexity (Sect.~\ref{sec: comp}, App.~\ref{sec:it-comp}) and convergence (Sect.~\ref{convergence}).


\subsection{Notation}
\label{sec:notation}

Before we proceed, we first summarize the notation that is used throughout the text.
We denote the space of $p$-dimensional vectors by $\R^p$ and the space of $m$ by $n$ real matrices by $\R^{m\times n}$.
Bold capital letters, e.g., $\X$, will be used to denote matrices, lower-case bold letters, e.g., $\x$, $\bs{\beta}$, denote vectors, and unbolded letters will denote scalars (unless otherwise noted).
\newcommand{\0}{\bs{0}}
We denote the $p$-dimensional all-zeros and all-ones vectors by $\0_p$ and $\e_p$; we omit the subscript $p$ when the dimension is clear by context. We denote the transpose of a matrix $\X$ by $\X^T$ and the inverse of (nonsingular) $\X$ by $\X^{-1}$.
On the other hand, we use lower-case superscripts to indicate the indices of elements of sequences of vectors, e.g., $\{\x^i\}_{i=0}^\infty = \x^0, \x^1, \dots$ and use subscripts to denote the indices of elements of sequences of scalars, e.g., $\{\alpha_i\}_{i=0}^\infty = \alpha_0, \alpha_1, \dots.$
Subscripts will also indicate indices of entries of vectors and matrices; for example, the value in the first row and second column of matrix $\X$ will be denoted by $x_{12}$.
We will reserve the character $L$ to denote the Lipschitz constant of a given Lipschitz continuous operator $g:\R^p \mapsto \R^m$.
\renewcommand{\L}{\mathcal{L}}
Conversely, we will use the notation $\L_\mu$ to denote the augmented Lagrangian with respect to parameter $\mu$ of a given equally constrained optimization problem; we will also use $\L$ to denote the (unaugmented) Lagrangian function.
Finally, we denote the \emph{subdifferential}, i.e., the set of subgradients, of a convex function $f: \R^n \mapsto \R$ at vector $\x$ by
\[
  \partial f(\x) = \bra{ \bs{\phi} \in \R^n: f(\y) \ge f(\x) + \bs{\phi}^T(\y - \x) \text{ for all } \y \in \dom f }.
\]


\section{An Alternating Direction Method for Sparse Discriminant Analysis}
\label{sec:prox}

In this section, we describe a block coordinate descent approach for approximately solving the sparse optimal scoring problem for linear discriminant analysis.
Proposed 
in \cite{hastie1994flexible}, the optimal scoring problem recasts linear discriminant
analysis as a generalization of linear regression where both the response variable,
corresponding to an optimal labeling or scoring of the classes, and linear model parameters,
which yield the discriminant vector, are sought. Specifically, suppose that we have
the $n\times p$ data matrix $\X$, where the rows of $\X$ correspond to observations in $\R^p$
sampled from one of $K$ classes;
we assume that the data has been centered so that the sample mean is the zero vector $\bs 0 \in \R^p$.

Optimal scoring generates a sequence of discriminant
vectors and conjugate scoring vectors as follows.
Suppose
that we have identified the first $j-1$ discriminant vectors $\bb_1, \dots, \bb_{j-1} \in \R^p$ and scoring vectors $\bt_1,\dots, \bt_{j-1} \in \R^K$.
To calculate the $j$th discriminant vector $\bb_j$ and scoring vector $\bt_j$,
we solve the optimal scoring criterion problem
\begin{equation} \label{eq: os prob}
\begin{array}{rl}
\displaystyle{\argmin_{\bt \in \R^K,\,\bb \in \R^p}}
& \|\Y \bs\theta -\X \bs\beta\|^2  \\
\st & \frac{1}{n} \bs\theta^T \Y^T\Y \bs\theta = 1, \; \;\bs\theta^T \Y^T \Y  \bs\theta_\ell = 0 \; \; \forall \ell < j,
\end{array}
\end{equation}
where $\Y$ denotes the $n\times K$ indicator matrix for class membership,
defined by $y_{\ell m} = 1$ if the $\ell$th observation belongs to the $m$th class, and
$y_{\ell m} = 0$ otherwise,
and $\|\cdot\|: \R^n \ra \R$ denotes the vector $\ell_2$-norm on $\R^n$
defined by $\|\y \| = \sqrt{ y_1^2 + y_2^2 + \cdots + y_n^2}$
for all $\y \in \R^n$.
We direct the reader to \cite{hastie1994flexible} for further details regarding the derivation of \eqref{eq: os prob}.

A variant of the optimal scoring
problem called \emph{sparse discriminant analysis} or \emph{sparse optimal scoring}, which employs regularization via the elastic
net penalty function  is proposed in \cite{clemmensen2011sparse}.
As before, suppose that we have identified the first $j-1$ discriminant vectors $\bb_1, \dots,
\bb_{j-1}$ and scoring vectors
$\bt_1,\dots, \bt_{j-1}$.
We calculate the $j$th sparse discriminant vector $\bb_j$ and scoring vector $\bt_j$
as the optimal solutions of the optimal scoring criterion problem
\begin{equation} \label{eq: prob}
\begin{array}{rl}
\displaystyle{\argmin_{\bt \in \R^K,\,\bb \in \R^p}}
& \|\Y \bs\theta -\X \bs\beta\|^2 + \gamma \bs\beta^T \bs\Omega \bs\beta + \lambda \|\bs\beta\|_1 \\
\st & \frac{1}{n} \bs\theta^T \Y^T\Y \bs\theta = 1,\;\;  \bs\theta^T \Y^T \Y \bs\theta_\ell = 0 \;\; \forall \ell < j,
\end{array}
\end{equation}
where $\|\cdot\|_1: \R^p \ra \R$ denotes the vector
$\ell_1$-norm on  $\R^p$ defined by
$\|\x\|_1 = |x_1| + |x_2| + \cdots + | x_p |$
for all $\x \in \R^p$, $\Y \in \R^{n\times K}$ is again the indicator matrix for class membership,
$\lambda$ and $\gamma$ are nonnegative tuning parameters, and $\bs{\Omega}$ is a $p\times p$ positive definite matrix.
That is,~\eqref{eq: prob} is the result of adding regularization to the optimal scoring problem using a linear combination of the Tikhonov penalty term $\bs\beta^T \bs\Omega \bs\beta$ and the $\ell_1$-norm penalty $\|\bs\beta\|_1$;
we will provide further discussion regarding the choice of $\bs\Omega$ in Section~\ref{sec: comp}.
The optimization problem~\eqref{eq: prob} is nonconvex, due to the presence
of nonconvex spherical constraints. As such, we do not expect to find a globally optimal
solution of \eqref{eq: prob} using iterative methods.

\subsection{Block Coordinate Descent for Sparse Optimal Scoring}
Clemmensen et al.~propose
a
block coordinate descent method to solve~\eqref{eq: prob}
in \cite{clemmensen2011sparse}.
This approach has been widely adopted, with nearly 600 citations according to Google Scholar\footnote{Citation count  available from \href{https://scholar.google.com}{scholar.google.com}, accessed November 11, 2021.} and over 119000 downloads of the \texttt{R} implementation \texttt{sparseLDA} from the Comprehensive R Archive Network (CRAN)\footnote{Download count available from ~\href{https://cranlogs.r-pkg.org/badges/grand-total/sparseLDA}{cranlogs.r-pkg.org/badges/grand-total/sparseLDA}, accessed November 11, 2021.}.
This approach can be described as follows.
Suppose that we have an estimate $(\bt^i, \bb^i)$ of $(\bt_j, \bb_j)$ after $i$ iterations.
To update $\bt^i$, we fix $\bb = \bb^i$ and solve the optimization problem
\begin{equation} \label{eq: t prob}
\begin{array}{rl}
\bt^{i+1} = \ds {\argmin_{\bs\theta \in \R^K}} & \|\Y \bs\theta -\X {\bs\beta^i}\|^2 \\
\st & \frac{1}{n} \bs\theta^T \Y^T\Y \bs\theta = 1, \;\; \bs\theta^T \Y^T \Y \bs\theta_\ell = 0 \; \;\forall \ell < j.
\end{array}
\end{equation}
The subproblem \eqref{eq: t prob} is nonconvex in $\bt$, however,
it is known that \eqref{eq: t prob} admits an analytic solution and can be solved
exactly in polynomial time.
Indeed, we have the following lemma providing an analytic update formula for $\bt$. Note that this update requires $\O(K^3 + pn)$ floating point operations to perform the necessary matrix products.
For completeness, we provide a proof of Lemma~\ref{lem: theta update} in  Appendix~\ref{sec: theta}. See~\cite[Section 2.2]{clemmensen2011sparse} for more details. 

\begin{lemma} \label{lem: theta update}
	The problem \eqref{eq: t prob} has optimal solution
	\begin{equation}
	\bt^{i+1} = s  \rbra{\bs I - \frac{1}{n}\bs Q_j \bs Q_j^T \Y^T\Y} (\Y^T\Y)^{-1} \Y^T \X \bs\beta^i,
	\label{eq: theta formula}
	\end{equation}
	where
	$\bs Q_j$ is the $K\times j$ matrix with columns consisting of the $j-1$ scoring vectors
	$\bt_1, \dots, \bt_{j-1} $ and the all-ones vector $\e \in \R^K$, and
	$s$ is a proportionality constant ensuring that $(\bt^{i+1})^T \bs  \Y^T \Y \bt^{i+1} = n$.
  In particular, $\bt^{i+1}$ is given by
  \begin{equation}\label{eq:t-formula-2}
    \bs{w} = \rbra{\bs I - \frac{1}{n}\bs Q_j \bs Q_j^T \Y^T\Y} (\Y^T \Y)^{-1} \Y^T \X \bs\beta^i,
    \hspace{0.25in}
    \bt^{i+1} = \frac{\sqrt{n} \bs w }{\|\Y\bs w\|}.
  \end{equation}
\end{lemma}

After we have updated $\bt^{i+1}$, we obtain $\bb^{i+1}$ by solving the unconstrained optimization
problem
\begin{equation} \label{eq: b prob}
\bb^{i+1} = \ds{ \argmin_{\bs\beta \in \R^p}} \|\Y \bt^{i+1} -\X \bs\beta\|^2
+ \gamma \bs\beta^T \bs\Omega \bs\beta + \lambda \|\bs\beta\|_1.
\end{equation}
That is, we update $\bb^{i+1}$ by solving the generalized elastic net problem \eqref{eq: b prob}.
We stop this block update scheme if the relative change in consecutive iterates is smaller than a desired tolerance. Specifically, we stop the algorithm if
\[
  \max \bra{ \frac{\|\bt^{i+1} - \bt^i\|}{\|\bt^{i+1}\|}, \frac{\|\bb^{i+1} - \bb^i\|}{\|\bb^{i+1}\|} }
  \le \epsilon
\]
for stopping tolerance $\epsilon > 0$.
We delay discussion of strategies until after a discussion of the convergene properties of this alternating minimization scheme. Specifically, we discuss an algorithm based on least-angle regression (LARS) for solving~\eqref{eq: b prob} suggested in~\cite{clemmensen2011sparse} in Section~\ref{sec:LARS} and our proposed improvements in Section~\ref{sec:pms}.

\subsection{Convergence of the Block Coordinate Descent Algorithm}
\label{convergence}
\newcommand{\M}{\bs{M}}
\renewcommand{\v}{\bs{v}}

Before we proceed to the statement of our algorithms for updating $\bb$,
we investigate the convergence properties of the block coordinate descent method given by Algorithm~\ref{alg: BCD}.
Our two main results, Theorem~\ref{thm: f convergence} and Theorem~\ref{thm: convergence}, are specializations of standard results for alternating minimization algorithms; we provide proofs of these results as appendices. We should note that these two theorems establish convergence properties of Alg.~\ref{alg: BCD} that are independent of the algorithm used to solve Subproblem~\eqref{eq: b prob}. In particular, these convergence theorems hold if we use any of the iterative methods suggested in the following section for solving~\eqref{eq: b prob}, as well as the LARS Algorithm for solving~\eqref{eq: b prob} suggested in~\cite{clemmensen2011sparse}.

We first note that the Lagrangian $\L:\R^K \times \R^p \times \R \times \R^{j-1} \ra \R$
of \eqref{eq: prob} is given by
\begin{align*} \label{eq: Lag}
\L (\bt, \bb, \psi, \v)
        = & \|\Y  \bt - \X \bb\|^2 + \gamma \bb^T \bO \bb + \lambda \|\bb\|_1 \\ &+  \psi (\bt^T \Y^T \Y  \bt - n)
+ \v^T \bs U \bt,
\end{align*}
where $\bs U^T = ( \Y^T \Y \bt_1, \Y^T \Y \bt_2, \dots, \Y^T \Y \bt_{j-1})$.
Note that the Lagrangian is \emph{not} a convex function in general. However, $\L$ is the sum of the (possibly) nonconvex quadratic $\|\Y  \bt - \X \bb\|^2 + \psi (\bt^T \Y^T \Y  \bt - n) + \gamma \bb^T \bO\bb + \v^T \bs U \bt$ and the convex nonsmooth function $\lambda \|\bb\|_1$; therefore, $\L$ is subdifferentiable, with subdifferential at $(\bb, \bt)$ given by the sum of the gradient of the smooth term at $(\bb, \bt)$
and the subdifferential of the convex nonsmooth term at $(\bb, \bt)$.

We now provide our first convergence result, specifically, that Algorithm~\ref{alg: BCD}
generates a convergent sequence of function values.

\begin{theorem}
	\label{thm: f convergence}
	Suppose that the sequence of iterates $\{(\bt^i, \bb^i)\}_{i=0}^\infty$
	is generated by Algorithm~\ref{alg: BCD}.
	Then the sequence of objective function values
	$\{F(\bt^i, \bb^i)\}_{i=0}^\infty$ defined by
	$
	F(\bt, \bb) := \|\Y\bt - \X\bb\|^2 + \gamma \bb^T \bO \bb + \lambda \|\bb\|_1
	$
	is convergent.
\end{theorem}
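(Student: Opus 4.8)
The plan is to show that the scalar sequence $\{F(\bt^t,\bb^t)\}$ is monotonically nonincreasing and bounded below, so that it converges by the monotone convergence theorem. This is the standard sufficient-decrease argument for exact alternating minimization; the only substantive thing to verify is that each of the two block updates in Algorithm~\ref{alg: BCD} performs an \emph{exact} minimization of $F$ over the corresponding block, so that neither update can increase the objective.

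First I would establish the two descent inequalities that make up one outer iteration. Fixing $t$, the iterate $\bb^t$ solves \eqref{eq: b prob} with $\bt=\bt^t$; since the objective of \eqref{eq: b prob} is literally $F(\bt^t,\cdot)$ (the penalty terms $\gamma\bb^T\bO\bb+\lambda\|\bb\|_1$ are untouched and the residual matches), we have $\bb^t=\argmin_{\bb\in\R^p}F(\bt^t,\bb)$, and in particular $F(\bt^t,\bb^t)\le F(\bt^t,\bb^{t-1})$. Next, the update producing $\bt^{t+1}$ is the optimal solution of \eqref{eq: t prob} by Lemma~\ref{lem: theta update}; because the penalty terms do not depend on $\bt$, minimizing $\|\Y\bt-\X\bb^t\|^2$ over the feasible set is the same as minimizing $F(\cdot,\bb^t)$ over that set. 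As $\bt^t$ is itself feasible (its normalization gives $(\bt^t)^T\D\bt^t=1$ and the conjugacy constraints are enforced at every step), this yields $F(\bt^{t+1},\bb^t)\le F(\bt^t,\bb^t)$. Chaining these with the $\bb$-update at the following iteration gives
\[
F(\bt^{t+1},\bb^{t+1})\le F(\bt^{t+1},\bb^t)\le F(\bt^t,\bb^t),
\]
so the sequence of objective values is nonincreasing. Then I would note that $F$ is bounded below by $0$: each term $\|\Y\bt-\X\bb\|^2$, $\gamma\bb^T\bO\bb$ (with $\gamma\ge 0$ and $\bO$ positive definite), and $\lambda\|\bb\|_1$ (with $\lambda\ge 0$) is nonnegative, whence $F(\bt^t,\bb^t)\ge 0$ for all $t$. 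A nonincreasing real sequence bounded below converges to its infimum, which completes the argument.

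The hard part will be the descent step for the $\bt$-update: one must confirm both that the closed-form update of Lemma~\ref{lem: theta update} is genuinely the constrained \emph{minimizer} rather than merely a stationary point, and that feasibility of each $\bt^t$ is preserved across iterations, since otherwise the comparison $F(\bt^{t+1},\bb^t)\le F(\bt^t,\bb^t)$ could break down. Everything else is routine. I would also emphasize that this function-value result needs only that each block is minimized exactly, not that the minimizers are unique, so it holds even when $\bO$ is merely positive semidefinite; uniqueness issues are relevant only for the later iterate-convergence result, not here.
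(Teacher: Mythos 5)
Your proposal is correct and follows essentially the same route as the paper: both arguments establish the two block-descent inequalities (exactness of the $\bb$-update for \eqref{eq: b prob} and of the $\bt$-update for \eqref{eq: t prob}, with the previous iterate feasible in each subproblem), chain them to get monotone nonincrease, and conclude by boundedness below by $0$. Your added remarks on verifying feasibility of $\bt^t$ and on not needing uniqueness of minimizers are sound refinements of the same argument, not a different approach.
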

We include a proof of Theorem~\ref{thm: f convergence} in Appendix~\ref{sec: convergence1}.

We also have the following theorem, which establishes that every convergent subsequence	of $\{(\bt^i, \bb^i)\}_{i=1}^\infty$	converges to a stationary point of \eqref{eq: prob}.

\begin{theorem} \label{thm: convergence}
	Let $\{(\bt^i, \bb^i)\}_{i=1}^\infty$ be the sequence of points generated by Algorithm~\ref{alg: BCD}.
	If $\{(\bt^{i_\ell}, \bb^{i_\ell})\}_{\ell=1}^\infty$ is a convergent subsequence of~$\{(\bt^i, \bb^i)\}_{i=1}^\infty$ with limit $ ( \bt^*, \bb^*) $
	then  $ ( \bt^*, \bb^*) $
	is a \emph{stationary point} of \eqref{eq: prob}:
	$({\bt^*}, {\bb^*})$ is feasible for \eqref{eq: prob} and
	there exists $\psi^* \in \R $ and $\v^* \in \R^{j-1}$ such that
	$
	\bs 0 \in \partial \L(\bt^*,\bb^*, \psi^*, \v^*),
	$
	where $\partial \L(\bt, \bb, \psi, \v)$ denotes the subdifferential of the Lagrangian function $\L$
	with respect to the primal variables $(\bt, \bb)$.
\end{theorem}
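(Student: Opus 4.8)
The plan is to show that the limit $(\bt^*,\bb^*)$ inherits the first-order optimality conditions of the two block subproblems and that these combine into stationarity of the Lagrangian $L$ in \eqref{eq: Lag}. First I would record the conditions satisfied along the sequence. Since $\bb^{t+1}$ is the unconstrained convex minimizer of $F(\bt^t,\cdot)$, there is a subgradient $\bs g^{t+1}\in\partial\|\bb^{t+1}\|_1$ with
\[
-2\X^T(\Y\bt^t-\X\bb^{t+1})+2\gamma\bO\bb^{t+1}+\lambda\bs g^{t+1}=\bs 0,
\]
and since $\bt^{t+1}$ solves the equality-constrained subproblem \eqref{eq: t prob} given $\bb^{t+1}$, the KKT conditions supply multipliers $\psi^{t+1}\in\R$ and $\v^{t+1}\in\R^{k-1}$ with
\[
2\Y^T(\Y\bt^{t+1}-\X\bb^{t+1})+2\psi^{t+1}\Y^T\Y\bt^{t+1}+\bs U^T\v^{t+1}=\bs 0.
\]
Feasibility transfers to the limit for free: every $\bt^{t_j}$ satisfies the normalization and orthogonality constraints by construction of the update, and these constraints are continuous, so $(\bt^*,\bb^*)$ is feasible.

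The second condition couples $\bt^{t_j}$ and $\bb^{t_j}$, both of which lie in the convergent subsequence, so it passes to the limit once the multipliers are controlled. I would verify LICQ at feasible points: the active constraint gradients $\Y^T\Y\bt_\ell$ ($\ell<k$) and $\Y^T\Y\bt^{t_j}$ are the $\Y^T\Y$-images of vectors that are mutually $\Y^T\Y$-orthogonal and nonzero, hence linearly independent because $\Y^T\Y\succ0$. This makes $(\psi^{t_j},\v^{t_j})$ the unique solution of a nonsingular linear system whose right-hand side is bounded, so the multiplier sequence is bounded; extracting a convergent subsubsequence $\psi^{t_j}\to\psi^*$, $\v^{t_j}\to\v^*$ and passing to the limit yields $\nabla_\bt L(\bt^*,\bb^*,\psi^*,\v^*)=\bs 0$.

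The main obstacle is the first condition, which couples $\bb^{t_j}$ with $\bt^{t_j-1}$ rather than $\bt^{t_j}$, so I cannot pass it to the limit until I know $\bt^{t_j-1}\to\bt^*$. To close this off-by-one gap I would prove that the successive $\bb$-iterates vanish. Because $\bO$ is positive definite, $F(\bt^t,\cdot)$ is $\sigma$-strongly convex with $\sigma\ge 2\gamma\lambda_{\min}(\bO)>0$ uniformly in $t$, so minimality of $\bb^{t+1}$ gives
\[
\tfrac{\sigma}{2}\|\bb^t-\bb^{t+1}\|^2\le F(\bt^t,\bb^t)-F(\bt^t,\bb^{t+1})\le F(\bt^t,\bb^t)-F(\bt^{t+1},\bb^{t+1}).
\]
The right-hand side is the gap between consecutive terms of $\{F(\bt^t,\bb^t)\}$, which converges by Theorem~\ref{thm: f convergence}; hence $\|\bb^{t+1}-\bb^t\|\to0$ and therefore $\bb^{t_j-1}\to\bb^*$. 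Writing the update of Lemma~\ref{lem: theta update} as a map $\bt^s=\mathcal{B}(\bb^s)$ that is continuous wherever the normalizing factor $\sqrt{\bs w^T\bs D\bs w}$ is nonzero, the hypothesis $\bt^{t_j}=\mathcal{B}(\bb^{t_j})\to\bt^*$ forces $\bt^*=\mathcal{B}(\bb^*)$ with $\bs w(\bb^*)\ne\bs 0$, so continuity at $\bb^*$ gives $\bt^{t_j-1}=\mathcal{B}(\bb^{t_j-1})\to\bt^*$ as desired.

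Finally I would pass the first condition to the limit along a common subsubsequence. The subgradients satisfy $\|\bs g^{t_j}\|_\infty\le1$, so a convergent subsubsequence $\bs g^{t_j}\to\bs g^*$ exists, and outer semicontinuity of $\partial\|\cdot\|_1$ gives $\bs g^*\in\partial\|\bb^*\|_1$; together with $\bt^{t_j-1}\to\bt^*$ and $\bb^{t_j}\to\bb^*$ this yields
\[
-2\X^T(\Y\bt^*-\X\bb^*)+2\gamma\bO\bb^*+\lambda\bs g^*=\bs 0,
\]
that is, $\bs 0\in\partial_\bb L(\bt^*,\bb^*,\psi^*,\v^*)$. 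Since the multiplier terms do not involve $\bb$ and the gradient term does not involve $\psi,\v$, combining the two blocks gives $\bs 0\in\partial L(\bt^*,\bb^*,\psi^*,\v^*)$, which together with feasibility is exactly the asserted stationarity.
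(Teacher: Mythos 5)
Your route is genuinely different from the paper's. The paper first proves Lemma~\ref{lem: limit optimality} --- that the limit point minimizes $F$ in each block with the other held fixed --- and handles the ``off by one'' block with a sandwich argument: it chains
\(
F(\bt,\bb^{t_j}) \ge F(\bt^{t_j+1},\bb^{t_j}) \ge F(\bt^{t_j+1},\bb^{t_j+1}) \ge F(\bt^{t_{j+1}},\bb^{t_{j+1}})
\)
using only the monotonicity of the objective values (Theorem~\ref{thm: f convergence}) and continuity of $F$, and then applies the first-order conditions a single time, at the limit. This needs neither strong convexity, nor vanishing successive differences, nor any bounded-multiplier extraction. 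Your plan of passing the per-iteration subgradient/KKT systems to the limit is workable, and your strong-convexity estimate $\tfrac{\sigma}{2}\|\bb^{t+1}-\bb^t\|^2 \le F(\bt^t,\bb^t)-F(\bt^{t+1},\bb^{t+1})$ does correctly give $\|\bb^{t+1}-\bb^t\|\to 0$, at the price of assuming $\gamma>0$ and $\bO\succ 0$, which the paper's argument does not require.

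The genuine gap is the step $\bt^{t_j-1}\to\bt^*$. You justify it by continuity of the update map $\mathcal{B}(\bb)=\bs w(\bb)/\sqrt{\bs w(\bb)^T\D\,\bs w(\bb)}$ at $\bb^*$, claiming that convergence of $\bt^{t_j}=\mathcal{B}(\bb^{t_j})$ forces $\bs w(\bb^*)\ne\bs 0$. That inference is false: the normalized iterates always lie on the compact set $\{\bt:\bt^T\D\bt=1\}$ and can converge while $\bs w(\bb^{t_j})\to\bs 0$; this is exactly the degenerate case $\Y^T\X\bb^*\in\range(\Y^T\Y\bs Q_k)$ treated separately in Appendix~A, in which $\bs w(\bb^*)=\bs 0$ and $\mathcal{B}$ is discontinuous at $\bb^*$. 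So continuity at $\bb^*$ is not established, and your $\bb$-block condition cannot be passed to the limit as written. Note also that the difficulty is partly self-inflicted by your indexing: in Algorithm~\ref{alg: BCD} the pair $(\bt^t,\bb^t)$ satisfies $\bb^t=\argmin_{\bb}F(\bt^t,\bb)$, so the $\bb$-condition is already synchronized and passes to the limit directly; it is the $\bt$-condition that couples $\bt^{t_j}$ with $\bb^{t_j-1}$, and for that you only need $\bb^{t_j-1}\to\bb^*$ --- which your successive-difference argument supplies --- not $\bt^{t_j-1}\to\bt^*$. With the indexing aligned to the algorithm your proof closes; as written it does not.
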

A proof of Theorem~\ref{thm: convergence} can be found in Appendix~\ref{sec: convergence2}.

We conclude this section by noting that we expect the block coordinate descent method to converge after exactly one full iteration in the absence of rounding error in the two-class case, i.e., $K = 2$. In this case, the optimal solution of~\eqref{eq: t prob} is given by the projection of any vector $\bs{\theta}$ not in the span of the all-ones vector $\e$ onto the set
\[ \bra{\bs \theta \in \R^k: \bs\theta^T \Y^T\Y \bs\theta = n,~\bs{\theta}^T \Y^T \Y \e = 0 }; \]
this is equivalent to the optimal solution given by Lemma~\ref{lem: theta update}. This solution is uniquely defined (up to sign) and is obtained after the initial $\bs{\theta}$ update if the initial solution $\bs{\theta}$ is not a scalar multiple of $\e$.
This suggests that Algorithm~\ref{alg: BCD} will converge after exactly one iteration if we solve~\eqref{eq: b prob} exactly. In practice, we may require multiple iterations of Algorithm~\ref{alg: BCD} depending on the relative stopping tolerances of Algorithm~\ref{alg: BCD} and the method used to solve~\eqref{eq: b prob}.

\newcommand{\Q}{\bs{Q}}
\begin{algorithm}[t]
  \KwData{Given stopping tolerance $\epsilon$, and maximum number of iterations $N$.}
  \KwResult{Discriminant vectors $(\bt^*_1, \bb^*_1)$, $(\bt^*_2, \bb^*_2)$, \dots, $(\bt^*_{K-1}, \bb^*_{K-1})$ calculated as approximate solutions of \eqref{eq: prob}.}
  \For{$j = 1,2, \dots, K-1$}{
    {Initialize $\bt_j^0$ as the projection of $K$-dimensional vector $\z$ with entries sampled uniformly at random from the interval $[0,1]$ onto the feasible region using the identity
    \[
      \bt_j^0 = \rbra{\I - \frac{1}{n}\bs Q_j \bs Q_j^T \Y^T\Y}(\Y^T \Y)^{-1} \z, \hspace{0.25in}
      \bt_j^0 = \frac{\sqrt{n}\bt_j^0}{\|\Y \bt_j^0\|};
    \]}

    {Calculate the $j$th scoring and discriminant vector pair $(\bt^*_j, \bb^*_j)$ as the limit point of the sequence $\{(\bt^i_j, \bb^i_j)\}_{i=0}^\infty$ as follows:}

    \For{$i = 0, 1,2 \dots N$}{
  		{Update $\bb^i_j$ as the solution of \eqref{eq: b prob} with $\bt=\bt^i_j$ using the solution returned by one of \eqref{alg: pg}, \eqref{alg: apg}, \eqref{alg: ADMM}, and \eqref{alg: apg bt}}\;

      {Update $\bt^{i+1}_j$ by
      \[
          \bs{w} = \rbra{\I - \frac{1}{n}\bs Q_j \bs Q_j^T \Y^T\Y}(\Y^T \Y)^{-1} \Y^T \X {\bb^i_j}, 
          \hspace{0.25in}
          \bt^{i+1} = \frac{\sqrt{n}\bs{w}}{\|\Y \bs{w}\|};\;
      \]}

      {Declare convergence if the residual between consecutive iterates is smaller than desired tolerance:}

      {
      \If{$\max\bra{ \frac{\|\bt^{i+1} - \bt^i\|}{\|\bt^{i+1}\|}, \frac{\|\bb^{i+1} - \bb^i\|}{\|\bb^{i+1}\|}} < \epsilon$}{
        The algorithm has converged\;
        \textbf{break}\;
        }
      }
    }
  }
 \caption{Block Coordinate Descent for SDA \eqref{eq: prob}}
 \label{alg: BCD}
\end{algorithm}

\subsection{The Least Angle Regression Algorithm}
\label{sec:LARS}

It is suggested in  \cite{clemmensen2011sparse} that \eqref{eq: b prob} can be solved
using the \emph{least angle regression (LARS-EN)} 
algorithm  proposed 
in~\cite{zou2005regularization} for solving elastic net regularized linear inverse problems, which in turn generalizes the LARS algorithm for $\ell_1$ regularized linear inverse problems proposed in~\cite{efron2004least}.
This method  sequentially updates elastic net estimates of the solution of~\eqref{eq: b prob}. The main computational step of the $i$th iteration is the inversion of a coefficient matrix of the form $\X_{A_i}^T \X_{A_i} + \gamma \bs{\Omega}$, where $A_i$ is the active variable set, i.e, the indices of nonzero entries of the $i$th iterate $\bs\beta^{i}$. To minimize computational costs, a low-rank dow- ndating procedure is used to update the Cholesky factorization of $\X_{A_{i-1}}^T \X_{A_{i-1}} + \gamma \bs{\Omega}$ and only nonzero-cofficients and active variable sets are stored each iteration.
The specialization of the LARS algorithm for solution of~\eqref{eq: b prob} is included in the \texttt{Matlab} and \texttt{R} package \text{sparseLDA}~\cite{SLDA}.

It is known that the number of iterations of the LARS algorithm acts as a tuning parameter for sparsity of solution (induced by the $\ell_1$ penalty in~\eqref{eq: b prob}). That is, the sequence of iterates generated by the LARS algorithm is equivalent to set of solutions of~\eqref{eq: b prob} for a particular sequence of choices of regularization parameter $\lambda$.
This yields two potential improvements over other candidate methods for solving~\eqref{eq: b prob}. First, we do not need to repeatedly solve~\eqref{eq: b prob} to tune the parameter $\lambda$, e.g., as part of a cross validation or boosting scheme. Instead, we can obtain a full regularization path with a single call to the LARS algorithm.
Second, in many applications, it is more natural to seek a solution with a specific maximum cardinality, rather than some desired value of $\ell_1$ penalty; the LARS algorithm allows the use of this more natural interpretation of the regularization process as a stopping criterion. We direct the reader to \cite[Section~3.5]{zou2005regularization} for further details.

This approach carries a computational cost on the order of
$\O(mnp + m^3)$, where $m$ is the desired number of nonzero coefficients at termination,
which is prohibitively expensive if both $p$ and $m$ are large.
For example, if $m = cp$ for some constant $c \in (0,1)$, then the
per-iteration cost scales cubically with $p$.
Moreover, we should note that we solve a different instance of~\eqref{eq: b prob} during each iteration of Algorithm~\ref{alg: BCD}, since the value of $\bt$ changes each iteration.
Thus, we are required to compute the full regularization path during each iteration of Algorithm~\ref{alg: BCD}.
We should also note that the solutions given by LARS-based algorithms are not directly comparable to those given by proximal gradient methods since they do not correspond to the same optimization problem. Generally, the deflationary process for calculating discriminant-scoring vector pairs leads to different optimization problems for calculating $(\bt_k, \bb_k)$ for $k > 1$, due to differences in results of solving the prior subproblems between heuristics. This phenomena is especially pronounced when using LARS-based algorithms, compared to our proposed proximal methods, since the LARS-based algorithms implicitly use dynamically updated regularization parameters $\lambda_i$, which inherently depend on the choice of discriminant and scoring vectors chosen in earlier steps of~Alg.~\ref{alg: BCD}.

Finally, we note that coordinate descent methods have been widely adopted for calculation of elastic net regularized generalized linear models; see~\cite{friedman2010regularization} for further details.
However, we are unaware of any application of coordinate descent methods for solution of the elastic net regularized optimal scoring problem~\eqref{eq: prob}.

\section{Proximal Methods for Updating $\bs{\beta}$}
\label{sec:pms}

Our primary contribution is a collection of algorithms for solving the $\bb$-update subproblem
\eqref{eq: b prob}. Specifically, we specialize three classical algorithms, each based on the evaluation of
proximal operators, to obtain novel numerical methods for solution of the sparse optimal scoring problem.
We will see that these algorithms require significantly fewer computational resources than
least angle regression if we exploit structure in the regularization parameter $\bO$.

\renewcommand{\d}{\bs{d}}
\subsection{The Proximal Gradient Algorithm}
Given a convex function $f:\R^p \ra \R$, the \emph{proximal operator} $\prox_f: \R^p \ra \R^p$
of $f$ is defined by
\[
\prox_f(\y) = \argmin_{\x\in\R^p} \bra{ f(\x) + \frac{1}{2} \|\x - \y \|^2},
\]
which yields a point that balances the competing objectives of being near $\y$
while simultaneously minimizing $f$.
The use of proximal operators is a classical technique in optimization, particularly as surrogates
for gradient descent steps for minimization of nonsmooth functions.
For example, consider the optimization problem
\begin{equation} \label{eq: prox prob}
\min_{\x \in \R^p} f(\x) + g(\x),
\end{equation}
where $f: \R^p \ra \R$ is differentiable and $g: \R^p \ra \R$ is potentially nonsmooth.
That is, \eqref{eq: prox prob} minimizes an objective that can be decomposed
as the sum of a differentiable function $f$ and nonsmooth function $g$.
To solve \eqref{eq: prox prob}, the \emph{proximal gradient method} performs iterations
consisting of a step in the direction of the negative gradient $-\nabla f$ of the smooth part $f$
followed by evaluation of the proximal operator of $g$:
given iterate $\x^i$, we obtain the updated iterate $\x^{i+1}$ by
\begin{equation}\label{eq: pgm}
\bs x^{i+1} = \prox_{\alpha_i g}( \bs x^i - \alpha_i \nabla f(\bs x^i) ),
\end{equation}
where $\alpha_i$ is a step length parameter.
If both $f$ and $g$ are differentiable and the step size $\alpha_i$ is small, then this approach reduces to the
classical gradient descent iteration: $\x^{i+1} \approx \x^i - \alpha_i \nabla f(\x^i) - \alpha_i \nabla g(\x^i) $.
We direct the reader to \cite{beck2009fast,parikh2014proximal}
for more details regarding the proximal gradient method.

Expanding the residual norm term $\|\Y\bt - \X\bb\|^2$ in the objective of \eqref{eq: b prob}
and dropping the constant term shows that \eqref{eq: b prob} is
equivalent to minimizing
\begin{equation} \label{eq: prox fxn}
F(\bs\beta) = \frac{1}{2} \bb^T \A \bb + \d^T \bb + \lambda \|\bb\|_1,
\end{equation}
where $\A = 2 (\X^T \X + \gamma \bs\Omega)$ and $\d = -2 \X^T\Y \bt^{i+1}$.
We can decompose $F$
as $F (\bb) = f (\bb) + g(\bb)$,
where $f(\bs\beta) = 	\half \bs\beta^T \bs A \bs\beta + \d^T\bs\beta$ and $g(\bs\beta) = \lambda \|\bs\beta\|_1$.
Note that $F$ is strongly convex if the penalty matrix $\bs\Omega$ is positive definite;
in this case \eqref{eq: prox fxn} has a unique minimizer.
Note further that $f$ is differentiable with
$
\nabla f(\bb) = \A \bs\beta + \d.
$
Moreover, the proximal operator of the $\ell_1$-norm term $g(\bb) = \lambda \|\bb\|_1$ is given by
\[
\prox_{\lambda \|\cdot\|_1}(\y) = \sign(\y) \max\{ |\y | - \lambda \e, \bs 0\} =: S_\lambda(\y);
\]
see \cite[Section~6.5.2]{parikh2014proximal}.
The proximal operator $S_\lambda = \prox_{\lambda \|\cdot\|_1}$ is often called the \emph{soft thresholding operator}
(with respect to the threshold $\lambda$) and
$\sign:\R^p \ra \R^p$ and $\max: \R^p \times \R^p \ra \R^p$ are the element-wise sign and maximum mappings
defined by
\[
[\sign(\y)]_i = \sign(y_i) = \branchdef{ +1, &\mbox{if } y_i > 0 \\ 0, & \mbox{if } y_i=0 \\ -1, &\mbox{if } y_i < 0 }
\]
and
$
[\max(\x,\y)]_i = \max(x_i, y_i).
$
Using this decomposition, we can apply the proximal gradient method to generate
a sequence of iterates $\{\bb^i\}$ by
\begin{equation} \label{eq: SDAP update}
\bs\beta^{i+1} =  \sign(\p^i) \max \{ |\p^i| - \lambda \alpha_i \e, \bs 0 \},
\end{equation}
where
\begin{equation} \label{eq: SDAP p}
\p^i = \bs\beta^i - \alpha_i \nabla f(\bs\beta^i)
= \bs\beta^i - \alpha_i (\bs A\bs\beta^i + \d);
\end{equation}
here, $\e$ and $\bs 0$ denote the all-ones and all-zeros vectors in $\R^p$, respectively.
This proximal gradient algorithm with constant step lengths is summarized in Algorithm~\ref{alg: pg}; Algorithm~\ref{alg: pgbt} can be modified to obtain a variant of Algorithm~\ref{alg: pg} that employs a backtracking line search.
It is important to note that this update scheme is virtually identical to
the \emph{iterative soft thresholding algorithm} (ISTA) proposed in~\cite{beck2009fast} for solving $\ell_1$ regularized linear inverse problems.
Specifically, our problem and update formula differs only from that typically associated with ISTA
in the presence of the Tikhonov 
regression term $\bb^T \bs \Omega \bb$ in our model. 
As an immediate consequence, 
we can specialize the known convergence properties of ISTA (\cite[Theorem 3.1]{beck2009fast}) to show that
the sequence of function values $\{F(\bb^i)\}$  generated by Algorithm~\ref{alg: pg} 
converges sublinearly to the
optimal function value of  \eqref{eq: prox fxn}  at a rate no worse than $\O(1/i)$ for a certain  choice of step lengths $\{\alpha_i\}$.

\begin{algorithm}[t]
	\KwData{Given initial iterate $\bb^0$, sequence of step sizes $\{\alpha_i\}_{i=0}^\infty$, stopping tolerance $\epsilon$, and maximum number of iterations $N$.}
	\KwResult{Solution $\bb^*$ of \eqref{eq: b prob}.}

	\For{$i = 0, 1,2 \dots N$}{
	  Update gradient term by \eqref{eq: SDAP p}:
		\[
			\p^i = \bs\beta^i - \alpha_i (\bs A\bs\beta^i + \d);
		\]

	  Update iterate using proximal gradient step \eqref{eq: SDAP update}:
		\[
			\bs\beta^{i+1} =  \sign(\p^i) \max \{ |\p^i| - \lambda \alpha_i \e, \bs 0 \};
		\]

    Terminate if current iterate is approximately stationary:

    {\If{$\| \A \bb^{i+1} + \d + \lambda~\sign(\bb^{i+1}) \|_\infty < p \epsilon$:}
    {
      The algorithm has converged\;
      \textbf{break}\;
      }
    }
	}
  \caption{Proximal gradient method for solving~\eqref{eq: b prob}}
  \label{alg: pg}
\end{algorithm}

\begin{algorithm}[t]
	\KwData{Given initial iterate $\bb^0$, sequence of step sizes $\{\alpha_i\}_{i=0}^\infty$, stopping tolerance $\epsilon$, and maximum number of iterations $N$.}
	\KwResult{Solution $\bb^*$ of \eqref{eq: b prob}.}

	\For{$i = 0, 1,2 \dots N$}{

	  Update iterate using proximal gradient step \eqref{eq: SDAP update} and backtracking line search:
		
		\For{$k=0,1,2 \dots$ \emph{until step size accepted}}{
			Update step length:
			\[
			\bar L = \eta^{k} L_{i}, \hspace{0.25in}
			\alpha = \frac{1}{\bar L};
			\]

			Update iterate using proximal gradient step \eqref{eq: SDAP update}:
			\begin{align*}
			\p^i & = \bs\beta^i - \alpha_i (\bs A\bs\beta^i + \d); \\
			\bs\beta^{i+1} &=  \sign(\p^{i+1}) \max \{ |\p^{i+1}| - \lambda \alpha \e, \bs 0 \};
			\end{align*}

			Determine whether to accept update or increment step length:
			\smallskip
			
			\If{$(\bb^{i+1} - \y^{i+1})^T \rbra{ \frac{\bar L}{2} \I - \A} (\bb^{i+1} - \y^{i+1}) \ge 0$} {

        Accept update: set $L_{i+1} = \bar{L}$ and $\alpha_i = \bar\alpha$\;

				\textbf{break}\;
			}
		}

    Terminate if current iterate is approximately stationary:

    {\If{$\| \A \bb^{i+1} + \d + \lambda~\sign(\bb^{i+1}) \|_\infty < p \epsilon$:}
    {
      The algorithm has converged\;
      \textbf{break}\;
      }
    }
	}
  \caption{Proximal gradient method for solving~\eqref{eq: b prob} with back tracking line search}
  \label{alg: pgbt}
\end{algorithm}

\begin{theorem}[{Specialization of \cite[Theorem 3.1]{beck2009fast}}] \label{thm : prox convergence}
	Let $\{\bb^i\}$ be generated by Algorithm~\ref{alg: pg} with  initial iterate
	$\bb^0$ and constant step size
	$\alpha_i = \alpha \in (0, 1/\|\A\|)$ or step sizes chosen using the backtracking scheme given by Algorithm~\ref{alg: pgbt},
	where $\|\A\| = \lambda_{\max}(\A)$ denotes the largest eigenvalue
	of the positive semidefinite matrix $\A$.
	Suppose that $\bb^*$ is	a minimizer of $F$. Then there exists a constant $c$ such that
	\begin{equation} \label{eq: prox conv}
	F(\bb^i) - F(\bb^*) \le \frac{ c\|\A\|\|\bb^0 - \bb^*\|^2}{i}
	\end{equation}
	for any $i \ge 1$.
\end{theorem}

Note that Theorem~\ref{thm : prox convergence} implies that Algorithm~\ref{alg: pg} and Algorithm~\ref{alg: pgbt} yield $\epsilon$-suboptimal solutions for~\eqref{eq: b prob} within $\O(1/\epsilon)$ iterations.
Here, we say that a vector $\tilde\bb$ is $\epsilon$-suboptimal for~\eqref{eq: b prob} if $F(\tilde\bb) - F(\bb^*) \le \epsilon$, for each $\epsilon > 0$.
It is known that ISTA converges \emph{linearly} when the objective function $F$ is strongly convex (see~\cite[Chapter~10]{beck2017first}). We will see that the strong convexity of the objective of~\eqref{eq: prob} depends on the structure of the regularization term $\bs{\Omega}$. When $\bs{\Omega}$ is full rank, then $F$ is strongly convex.
Therefore,
the sequence of iterates generated by Algorithm~\ref{alg: pg}
converges to the unique minimizer of \eqref{eq: b prob} if the penalty parameter
$\bs\Omega$ is chosen to be positive definite.
If we choose $\bs\Omega$ to be
positive semidefinite but not full rank, then $F$ may not be strongly convex. In this case, Theorem~\ref{thm : prox convergence} establishes that the sequence of iterates generated by Algorithm~\ref{alg: pg} converges sublinearly to the minimum value of \eqref{eq: b prob} and any limit point of this sequence is a minimizer of \eqref{eq: b prob}.
We will see that using such a matrix may have attractive computational advantages despite
this loss of uniqueness.

It is reasonably easy to see that the quadratic term of $F$ in \eqref{eq: prox fxn}  is differentiable
and has Lipschitz continuous gradient with constant $L= \|\A\|$;
this is the significance of the $\|\A\|$ term in~\eqref{eq: prox conv}.
In order to ensure convergence in our proximal gradient method, we need
to estimate $\|\A\|$ to choose a sufficiently small step size $\alpha$.
Computing this Lipschitz constant may be prohibitively expensive for large $p$;
one can typically calculate $\|\A\|$ to arbitrary precision using variants of the Power Method (see \cite[Sections~7.3.1, 8.2]{gv}) at a cost of $\O(p^2 \log p)$ floating point operations.
Instead, we could use an upper bound $\tilde L \ge L$ to compute our constant step size
$\alpha = 1/\tilde L \le 1/L$.
For example, when $\bs \Omega$ is a diagonal matrix, we estimate $\|\bs A\|$ by
\begin{align*}
\|\bs A \|  =  2\| \gamma \bO + \X^T \X \|
\le 2 \gamma \|\diag (\bO) \|_\infty + 2\|\X\|^2_F 
\approx \frac{1}{\alpha},
\end{align*}
where $\diag(\bs M) \in \R^p$ is the vector of diagonal entries of the matrix $\bs M \in \R^{p\times p}$.
Here, we used the triangle inequality and the identity $\|\X^T\X\| \le \|\X\|^2_F $, where
$\|\X\|_F$ denotes the Frobenius norm of $\X$ defined by
$\|\X\|_F^2 = 
{\sum_{i=1}^n \sum_{j=1}^p x_{ij}^2}.$
The Frobenius norm and, hence,
this estimate of $\|\bs A\|$ can be computed using only $\O(np)$ floating point operations.

In practice, we stop~Algorithm~\ref{alg: pg} after a maximum number of iterations are performed or a sufficiently suboptimal solution is identified. Recall, that $\bb^*$ minimizes $F(\bb)$ if $\0 \in \partial F(\bb)$.
On the other hand, we know that $\A \bb^* + \d + \lambda~\sign(\bb^*) \in \partial F(\bb^*)$ by the structure of the subgradients of $\|\bb\|_1$. This implies that we can terminate our proximal gradient update scheme if we find $\bb^*$ such that $\A \bb^* + \d + \lambda~\sign(\bb^*)$ is close to $\0$. Specifically, we stop the iterative scheme after the $i$th iteration if
\[
  \|\A \bb^i + \d +  \lambda~\sign(\bb^i) \|_\infty = \max_j |(\A \bb^i + \d+ \lambda~\sign(\bb^i))_j| \le p \epsilon
\]
for given stopping tolerance $\epsilon > 0$.

\subsection{The Accelerated Proximal Gradient Method}
\label{sec: APG}
The similarity of our method to iterative soft thresholding and, more generally,
our use of proximal gradient steps to mimic the  gradient method for minimization
of our nonsmooth objective suggests that we may be able to use momentum terms to
accelerate convergence of our iterates. In particular, we modify the fast iterative
soft thresholding algorithm (FISTA) described in  \cite[Section~4]{beck2009fast}  to solve our
subproblem. This approach extends a variety of accelerated gradient descent methods,
most notably those of Nesterov~
\cite{nesterov1983method,nesterov2005smooth,nesterov2013gradient},
to minimization of composite convex functions;
for further details regarding the acceleration process and motivation for why such acceleration is possible, we direct the reader to the references
\cite{allen2014linear,bubeck2015geometric,flammarion2015averaging,lessard2016analysis,o2015adaptive,su2014differential,tseng2008accelerated}.

We accelerate convergence of our iterates by taking a proximal gradient step from an
extrapolation of the last two iterates. Applied to \eqref{eq: prox prob}, the
accelerated proximal gradient method features updates of the form
\begin{align}
\y^{i+1} &= \x^i + \omega_i (\x^i - \x^{i-1})  \label{eq: APG y}\\
\x^{i+1} &= \prox_{\alpha g} (\y^{i+1} - \alpha \nabla f(\y^{i+1} ) ) \label{eq: APG x},
\end{align}
where $\omega_i \in [0,1)$ is an extrapolation parameter, a standard choice of this parameter is $i/(i+3)$.
Applying this modification to our original proximal gradient algorithm
yields Algorithm~\ref{alg: apg}.
Modifying the backtracking line search of Algorithm~\ref{alg: pgbt} to use the accelerated proximal gradient update yields Algorithm~\ref{alg: apg bt}.
It can be shown that the
sequence of iterates generated by either of these algorithms converges in value to the optimal solution of \eqref{eq: b prob}
at rate
$\O(1/i^2)$.

\begin{algorithm}[t]

\KwData{Given initial iterates $\bb^0 = \bb^1$, step length $\alpha$, sequence of extrapolation parameters $\{\omega_i\}_{i=0}^\infty$, stopping tolerance $\epsilon$, and maximum number of iterations $N$.}

\KwResult{Solution $\bb^*$ of \eqref{eq: b prob}.}

\For{$i = 1,2 \dots, N$}{
	Update momentum term by \eqref{eq: APG y}:
		\[
		\y^{i+1} = \bb^i + \omega_i (\bb^i - \bb^{t-1});
		\]
	Update gradient term by \eqref{eq: SDAP p}:
	\[
	\p^{i+1} = \bs\y^{i+1} - \alpha (\bs A\bs\y^{i+1} + \d);
	\]
	Update iterate using proximal gradient step \eqref{eq: SDAP update}:
	\[
	\bs\beta^{i+1} =  \sign(\p^{i+1}) \max \{ |\p^{i+1}| - \lambda \alpha \e, \bs 0 \};
	\]

  Terminate if current iterate is approximately stationary:

  {\If{$\| \A \bb^{i+1} + \d + \lambda~\sign(\bb^{i+1}) \|_\infty < p \epsilon$:}
  {
    The algorithm has converged\;
    \textbf{break}\;
    }
  }
}
\caption{Accelerated proximal gradient method for solving \eqref{eq: b prob} with constant step size}
\label{alg: apg}
\end{algorithm}
\begin{algorithm}[t!]

	\KwData{Initial iterates $\bb^0 = \bb^{1}$, initial Lipschitz constant $L_0 >0$, scaling parameter $\eta > 1$, sequence of
		extrapolation parameters $\{\omega_i\}_{i=0}^\infty$,
    stopping tolerance $\epsilon$, and maximum number of iterations $N$.}

	\KwResult{Solution $\bb^*$ of \eqref{eq: b prob}.}

	\For{$i = 0, 1,2 \dots N$}{
  Update $\bb^{i+1}$ using accelerated proximal gradient step and backtracking line search:

		\For{$k=0,1,2 \dots$ \emph{until step size accepted}}{
			Update step length:
			\[
			\bar L = \eta^{k} L_{i}, \hspace{0.25in}
			\alpha = \frac{1}{\bar L};
			\]

			Update iterate using accelerated proximal gradient step \eqref{eq: APG y}, \eqref{eq: APG x}:
			\begin{align*}
			\y^{i+1} &= \bb^i + \omega_i (\bb^i - \bb^{t-1}) \\
			\p^{i+1} &= \bs\y^{i+1} - \alpha (\bs A\bs\y^{i+1} + \d) \\
			\bs\beta^{i+1} &=  \sign(\p^{i+1}) \max \{ |\p^{i+1}| - \lambda \alpha \e, \bs 0 \};
			\end{align*}

			Determine whether to accept update or increment step length:
			\smallskip

      \If{$(\bb^{i+1} - \y^{i+1})^T \rbra{ \frac{\bar L}{2} \I - \A} (\bb^{i+1} - \y^{i+1}) \ge 0$} {

        Accept update: set $L_{i+1} = \bar{L}$ and $\alpha_i = \bar\alpha$\;

				\textbf{break}\;
			}
    }

    Terminate if current iterate is approximately stationary:

    {\If{$\| \A \bb^{i+1} + \d + \lambda~\sign(\bb^{i+1}) \|_\infty < p \epsilon$:}
    {
      The algorithm has converged\;
      \textbf{break}\;
      }
    }
	}
\caption{Accelerated proximal gradient method for solving~\eqref{eq: b prob} with backtracking line search}
\label{alg: apg bt}
\end{algorithm}

\begin{theorem}[{Specialization of \cite[Theorem~4.4]{beck2009fast}}] \label{thm: APG conv}
	Let $\{\bb^i\}$ be generated by Algorithm~\ref{alg: apg} and constant step size
	$\alpha_i = \alpha \in (0, 1/\|\A\|) $ or generated using backtracking line search by Algorithm~\ref{alg: apg bt}
  with  initial iterate
	$\bb^0$.
	Then there exists constant $c>0$ such that
	\begin{equation} \label{eq: APG conv}
	F(\bb^i) - F(\bb^*) \le { \frac{c\|\A\|\|\bb^0 - \bb^*\|^2}{i^2} }
	\end{equation}
	for any $i \ge 1$ and minimizer $\bb^*$ of $F$.
\end{theorem}

\subsection{The Alternating Direction Method of Multipliers}
We conclude by proposing a third algorithm for
minimization of \eqref{eq: prox fxn} based on the \emph{alternating direction method of
	multipliers} (ADMM).
The ADMM is designed to minimize separable objective functions under linear coupling constraints, i.e., problems of the form
\begin{equation} \label{eq: ADMM prob}
\min_{\x \in \R^p, y \in \R^m} \Big \{ f(\x) + g(\y) : \bs A\x + \bs B \y = \bs c  \Big\},
\end{equation}
via an approximate dual gradient ascent,
where $f: \R^p \ra \R$, $g: \R^m \ra \R$, $\bs A\in\R^{r\times p}, \bs B \in \R^{r\times m},$
and $\bs c \in \R^r$;
we direct the reader to the survey \cite{boyd2011distributed}
for more details regarding the ADMM.

The minimization of the composite function $F$ defined in \eqref{eq: prox fxn} can
be written as the unconstrained optimization problem
\begin{equation} \label{eq:min b}
\min_{\bb\in \R^p} F(\bb) = \min_{\bb\in\R^p} \half \bb^T \A \bb + \d^T \bb + \lambda \|\bb\|_1.
\end{equation}
We can rewrite \eqref{eq:min b} in an equivalent form appropriate for the ADMM by splitting
the decision variable $\bb \in \R^p$ as two new variables $\x, \y \in \R^p$ with an accompanying linear
coupling constraint
$\x = \y$. Following this change of variables, we can express \eqref{eq:min b} as
\begin{equation} \label{eq: split prob}
	\begin{array}{rl}
\displaystyle{ \min_{\x,\y \in \R^p}} &  \half \x^T \bs  A \x + \d^T \x   + \lambda \|\y\|_1 \\
\st & \x - \y = \bs 0.
\end{array}
\end{equation}
The ADMM generates a sequence of iterates using approximate dual gradient ascent
as follows.
The augmented Lagrangian of \eqref{eq: split prob} is defined by
\begin{align*}
\L_\mu(\x,\y,\z) = & \; \half \x^T\bs  A \x + \d^T \x + \lambda \|\y\|_1  + \z^T (\x-\y) + \frac{\mu}{2}\|\x-\y\|^2
\end{align*}
for all $\x, \y , \z \in \R^p$; here, $\mu > 0$ is a penalty parameter
controlling the emphasis on enforcing feasibility of the primal
iterates $\x$ and $\y$.
To approximate the gradient of the dual functional of \eqref{eq: split prob}, we
alternately minimize the augmented Lagrangian with respect to $\x$ and $\y$.
We then update the dual variable $\z$ using a dual ascent step using this approximate gradient.

Suppose that we have the iterates ($\x^i, \y^i,\z^i)$ after $i$ iterations.
To update $\x$, we take
\begin{align*}
\x^{i+1} &= \argmin_{\x\in \R^p} \L_\mu(\x, \y^i, \z^i )
= \argmin_{{\x\in \R^p}} \frac{1}{2} \x^T (\mu \bs I + \bs A) \x - \x^T (-\d + \mu \y^{i} - \z^i).
\end{align*}
Applying the first order necessary and sufficient conditions for optimality, we see
that ${\x^{i+1}}$ must satisfy
\begin{equation} \label{eq: x update}
(\mu \bs I + \bs A) \x^{i+1} = -\d + \mu \y^i - \z^i.
\end{equation}
Thus, ${\x^{i+1}}$ is obtained as the solution of a linear system. Note that the coefficient
matrix $\mu \bs I + \bs A$ is independent of the iteration number $i$; we take the Cholesky decomposition of
$\mu \bs I + \bs A = \bs B\bs B^T$ during a preprocessing step and obtain ${\x^{i+1}}$ by solving
the two triangular systems given by
\renewcommand{\M}{\bs{M}}
$$
\bs B\bs B^T \x^{i+1} = - \d + \mu \y^i - \z^i.
$$
When the generalized elastic net matrix $\bO$ is diagonal, or $\M:= \mu \bs I + 2 \gamma \bO$ is otherwise
easy to invert, we can invoke the Sherman-Morrison-Woodbury formula (see \cite[Section~2.1.4]{gv}) to
solve this linear system more efficiently; more details will be provided
in 
Section~\ref{sec: comp}.
In particular, we see that
\begin{align*}
(\mu \bs I +2  \gamma \bO + 2 \X^T \X)^{-1}
         =\bs M^{-1} - 2\bs M^{-1} \X^T (\bs I + 2\X \bs M^{-1} \X^T)^{-1} \X \bs M^{-1};
\end{align*}
computing this inverse only requires computing the inverse of $\bs M$
and the inverse of the $n\times n$ matrix $\bs I + 2\X \bs M^{-1} \X^T$.

Next $\y$ is updated by
\begin{align*}
\y^{i+1} &=  \argmin_{y \in \R^p} \L_\mu(\x^{i+1}, \y, \z^i) 
         = \argmin_{\y} \lambda \|\y\|_1 + \frac{\mu}{2}\|\y - \x^{i+1} - \z^i/\mu\|^2.
\end{align*}
That is, $\y^{i+1}$ is updated as the value of the soft thresholding operator of the $\ell_1$-norm at ${\z^{i}/\mu + \x^{i+1}}$:
\begin{align}
\y^{i+1} &= S_{\lambda/\mu}(\x^{i+1} + \z^i/\mu).\label{eq: y update}
\end{align}
Finally, the dual variable $\z$ is updated using the approximate dual ascent step
\begin{equation} \label{eq: z update}
\z^{i+1} = \z^i + \mu (\x^{i+1} - \y^{i+1}).
\end{equation}
Following each iteration, we check that the Karush-Kuhn-Tucker conditions for~\eqref{eq: ADMM prob} have been approximately satisfied as a stopping criterion.
Specifically, we check if the updated iterates $(\x^{i+1}, \y^{i+1}, \z^{i+1})$  have satisfied primal and dual feasibility within relative tolerance of $\epsilon$ by checking if the inequalities
\begin{align*}
  \|\x^{i+1} - \y^{i+1}\| &\le \epsilon~\max\{ \|\x^{i+1}\|, \|\y^{i+1}\| \} \\
  \mu \|\y^{i+1} - \y^i \| & \le \epsilon~\|\y^{i+1}\|,
\end{align*}
respectively, are satisfied.
This approach is summarized in Algorithm~\ref{alg: ADMM}.

It is well-known that the ADMM generates a sequence of iterates which converge
linearly to an optimal solution of \eqref{eq: ADMM prob} under certain strong convexity
assumptions on $f$ and $g$ and rank assumptions on $\bs A$ and $\bs B$,
all of which are satisfied by our problem \eqref{eq: split prob} when
$\bO$ is positive definite
(see, for example, \cite{deng2012global, goldfarb2013fast, nishihara2015general, he20121}).
It follows that the sequence of iterates $\{\x^i, \y^i, \z^i\}$ generated by Algorithm~\ref{alg: ADMM}
converges to a minimizer of $F(\bb)$;
that is, $\x^i - \y^i \ra \bs0$ and $F(\x^i), F(\y^i)$ converge linearly to the minimum value of $F$.
The following theorem gives a worst-case convergence rate for~Algorithm~\ref{alg: ADMM}.

\begin{theorem}[{Specialization of \cite[Theorem~4.1]{he20121}}]\label{thm:ADMMconvergence}
  Suppose $(\x^i,\y^i,\z^i)$ is generated by~Algorithm~\ref{alg: ADMM}.
  Suppose further that the objective function of~\eqref{eq: ADMM prob}, $F(\x,\y) = \frac{1}{2} \x^T \A \x + \d^T \x + \lambda \|\y\|_1$, is strongly convex.
  Then the sequence of iterates satisfies $\x^i - \y^i \ra \0$ and
  \begin{equation} \label{eq:ADMMrate}
    F(\x^i,\y^i) - F(\x^*, \y^*) \le \frac{C \|\A\| \|\x^0 - \x^*\|^2}{i}
  \end{equation}
  for some constant $C > 0$, where $(\x^*,\y^*)$ is a minimizer of~\eqref{eq: ADMM prob}.
\end{theorem}

If $F$ is not strongly convex, then we should expect Algorithm~\eqref{alg: ADMM} to generate a sequence of iterates that converges sublinearly in value to the optimal value of~\eqref{eq: b prob}.

\begin{algorithm}[t]
	\KwData{Given initial iterates $\x^0 = \y^0$, step length $\mu$, stopping tolerance $\epsilon$, and maximum number of iterations $N$.}

	\KwResult{Solution $\bb^* = \x^*=\y^*$ of \eqref{eq: b prob}.}

	\For{$i = 0, 1,2 \dots, N$}{
		Update $\x$ by \eqref{eq: x update}:
		\[
		(\mu \bs I + \bs A) \x^{i+1} = -\d + \mu \y^i - \z^i;
		\]

    Update $\y$ using soft thresholding \eqref{eq: y update}:
		\[
			\y^{i+1} = S_{\lambda/\mu}(\x^{i+1} + \z^i/\mu);
		\]

    Update $\z$ using approximate dual ascent \eqref{eq: z update}:
		\begin{align*}
		\z^{i+1} = \z^i + \mu (\x^{i+1} - \y^{i+1});
		\end{align*}

    Test stopping criterion:

    {\If{$\|\x^{i+1} - \y^{i+1}\| \le \epsilon \max\{\|\x^{i+1}\|, \|\y^{i+1}\|\}$ and $\mu \|\y^{i+1} - \y^i \| \le \epsilon \|\y^{i+1}\|$}
    {
      The algorithm has converged\;
      \textbf{break}\;
      }
    }

  }

  \caption{Alternating direction method of multipliers for solving~\eqref{eq: split prob}}
	\label{alg: ADMM}
\end{algorithm}

\subsection{Computational Requirements}
\label{sec: comp}

To motivate the use of our proposed proximal methods for the minimization of \eqref{eq: b prob},
we briefly sketch the computational costs of each of our methods.
We will see that for  certain choices of regularization parameters, the number of floating
point operations needed scales linearly with the size of the data.

We begin by with the computational costs of the proximal gradient method (Alg.~\ref{alg: pg} and~\ref{alg: pgbt}). The computational requirements for the accelerated proximal gradient (Alg.~\ref{alg: apg} and~\ref{alg: apg bt}) and alternating direction method of multipliers (Alg.~\ref{alg: ADMM}) can be calculated in a similar fashion; a full calculation of the complexity of each iteration of each method can be found in~Appendix~\ref{sec:it-comp}.
The most expensive operation of each iteration of~Alg.~\ref{alg: pg} is the calculation of the gradient: $\nabla f(\bb) = \A \bb$. This requires $\O(np)$ floating point operations (flops) if the regularization matrix $\bO$ is a diagonal matrix (and $\O(p^2)$ flops for unstructured $\bO$). On the other hand, Theorem~\ref{thm : prox convergence} implies that Alg.~\ref{alg: pg} and~\ref{alg: pgbt}
generate an $\epsilon$-suboptimal solution of~\eqref{eq: b prob} within $\O(1/\epsilon)$ iterations. Putting everything together, we see that Alg.~\ref{alg: pg} and~\ref{alg: pgbt} yield $\epsilon$-suboptimal solutions using at most $\O(np/\epsilon)$ flops if $\bO$ is a diagonal matrix.
Performing similar calculations for Alg.~\ref{alg: apg},~\ref{alg: apg bt}, and~\ref{alg: ADMM} yields the total complexity estimates for each method, found in Table~\ref{t:total-flops}.
We remind the reader that the flop counts given in Table~\ref{t:total-flops} are restricted to the case when $\bs{\Omega}$ is diagonal or easy to invert;
all methods scale cubically in $p$ if $\bs\Omega$ is unstructured and/or difficult to invert.

\begin{table}[t]
  \centering
  \begin{tabular}{lllll} \toprule
    Method  & PG & APG & ADMM & LARS \\ \midrule
    Diagonal $\bs\Omega$ & $\O(np/\epsilon)$ & $\O(np/\sqrt{\epsilon})$ & $\O(n^2 p / \epsilon)$ & $\O(mnp + m^3)$ \\
    Dense $\bs\Omega$ & $\O(p^2/\epsilon)$ & $\O(p^2/\sqrt{\epsilon})$ & $\O(p^3 + p^2/\epsilon)$ & $\O(mnp + m^3)$ \\\bottomrule
  \end{tabular}
  \caption{Upper bounds on total number of floating point operations required to calculate an $\epsilon$-suboptimal solution (PG, APG, ADMM) or solution containing $m$ nonzero entries (LARS) using of~\eqref{eq: b prob} diagonal regularization matrix $\bO$ and dense, unstructured $\bO$.}
  \label{t:total-flops}
\end{table}

The complexity estimates found in Table~\ref{t:total-flops} establish that the accelerated proximal gradient method is more efficient for solution of~\eqref{eq: b prob} than least angle regression if
\begin{equation}\label{eq:APG-best}
  \frac{np}{\sqrt{\epsilon}}<< m n p + m^3,
\end{equation}
where $m$ is the number of nonzero entries of the optimal solution $\bb^*$ when $\bO$ is a diagonal matrix. In particular, if $m$ is moderately large, e.g., $m = c(np)^{1/3}$ for sufficiently large $c$, APG is significantly more efficient for solution of~\eqref{eq: b prob} than LARS. In practice, this improvement in computational complexity is large when $p$ is large (e.g., $p > 1000$). 

In the case that $n$ is large (or both $n$ and $p$ are large), Table~\ref{t:total-flops} suggests that ADMM should be prohibitively expensive, relative to the other methods considered.
We should note that our implementation of the ADMM for solving~\eqref{eq: b prob} is optimized for the case when $n$ is much smaller than $p$. In particular, our use of the Sherman-Morrison-Woodbury formula to update $\x$ in~Algorithm~\ref{alg: ADMM} explicitly relies on the assumption that $n < p$.

Section~\ref{sims} provides a detailed empirical analysis of the computational complexity of these algorithms for solving~\eqref{eq: b prob}.



\section{Numerical Analysis}
\label{sims}
We next compare the performance of our proposed approaches with standard methods for
penalized discriminant analysis
in several numerical experiments.
In particular, we compare the
implementations of the block coordinate descent method
Algorithm~\ref{alg: BCD}, where each discriminant direction $\bb$
is updated using either the proximal gradient method with constant step size, Algorithm~\ref{alg: pg} (PG), the proximal gradient method with backtracking line search, Algorithm~\ref{alg: pgbt} (PGB),
the accelerated proximal method with constant step size, Algorithm~\ref{alg: apg} (APG),
the accelerated proximal method with backtracking, Algorithm~\ref{alg: apg bt} (APGB),
and the alternating direction method of multipliers, Algorithm~\ref{alg: ADMM},
(ADMM),
with the least angle regression based algorithm (LARS) for solving the sparse optimal scoring problem
proposed 
in 	\cite{clemmensen2011sparse}.

\subsection{Gaussian data}
\label{sec:Gauss}

We first perform simulations investigating efficacy of our heuristics
for
classification of Gaussian data.
In each experiment, we generate data consisting of $p$-dimensional vectors from one of $K$ multivariate normal distributions.
Specifically, we obtain training observations corresponding to the $i$th class, $i=1,2,\dots, K$, by sampling 25 observations from the multivariate normal distribution with
mean $\bs{\mu_i} \in \R^p$ satisfying
\begin{equation}\label{eq:mu-def1}
    [\bs{\mu_i}]_j =
    \begin{cases}
      0.7, & \text{if } 100(i-1) < j \le 100 i \\
      0, & \text{otherwise},
    \end{cases}
\end{equation}
for all $j=1,2,\dots, p$,
and covariance matrix $\bs\Sigma \in \R^{p\times p}$
chosen so that all features are correlated with
$\Sigma_{ij} = r$ for all $i \neq j$ and $\Sigma_{ii} = 1$
	for all $i$. We conduct the experiment for all
	$ K \in \{2,4\}, r\in \{0,0.1,0.5,0.9\}$.
For each experiment, we sample $250$ testing observations from each class in the same manner as the training data.
We set $p = 1500$ in each simulation.
For each $(K, r)$ pair, we generate $20$ data sets and use nearest centroid
classification following projection onto the span of the discriminant directions
calculated using Algorithm~\ref{alg: BCD} and PG, PGB, APG, APGB, ADMM, or LARS to solve~\eqref{eq: b prob}, or SZVD.

The sparse discriminant analysis heuristics require training of the regularization
parameters $\gamma$, $\bs{\Omega}$,  and $\lambda$. In all experiments,
we set $\gamma = 10^{-3}$ and $\bs\Omega$ to be the
$p\times p$ identity matrix $\bs\Omega = \bs I$.
We train the remaining parameter
$\lambda$ using $5$ fold cross validation. Specifically, we choose $\lambda$ from a set
of potential $\lambda$ of the form $ \bar\lambda/2^{c}$ for $c= 3,2,\dots, 0, -1$,
and $\bar\lambda$ chosen so that the problem has nontrivial solution for all considered
$\lambda$.
Note that \eqref{eq: prox fxn} has optimal solution given by
$\bb^* = \bs A^{-1}\d$ if we set $\lambda = 0$; this implies that choosing
\begin{equation}\label{eq:barlam}
	\bar\lambda = \frac{(\bb^*)^T \d - \frac{1}{2} (\bb^*)^T \bs A \bb^*}{\|\bb^*\|_1}
\end{equation}
ensures that there exists at least one solution $\bb ^*$ with value strictly less than zero.
We choose the value of  $\lambda$ with fewest average number
of misclassification errors over training-validation splits
amongst all $\lambda$ which yield discriminant
vectors containing at most $25\%$ nonzero entries; in the event of a tie, we select the value of $\lambda$ which yields discriminant vectors with smallest average cardinality among all with minimum validation score.
Note that we could have applied other resampling methods, e.g., boot strapping, to train the regularization parameter using the same criteria with minimal changes to the experiment.
To reduce computation time, we use less conservative stopping criteria during the cross validation procedure.
We terminate each proximal algorithm in the inner loop after $1000$ iterations
or a $10^{-4}$ suboptimal solution is obtained.
We perform exactly one iteration of the outer block coordinate
descent loop if $K = 2$ and the outer loop is stopped after a maximum number of $250$ iterations
or a $10^{-3}$ suboptimal solution has been found if $K =4$.
After the optimal value of $\lambda$ is determined using cross validation, we train using the full training set; 
we terminate the algorithm after $1000$ iterations or a $10^{-5}$ suboptimal solution is obtained.
The augmented Lagrangian parameter $\mu = 2$ was used for  the ADMM method in all experiments.
We use the value of $\bar L = 0.25$ for the initial estimate of the Lipschitz constant and $\eta = 1.25$ for the scalar multiplier in the backtracking line search.

The LARS algorithm for updating $\bb$ was terminated after the convergence criterion is met with tolerance $10^{-3}$ or an iterate with more than $0.25 p$ nonzero entries was found. As with the other methods, we perform either one outer iteration ($K=2$) or at most $250$ outer iterations, stopping when a $10^{-3}$ suboptimal solution is found $(K=4)$.
Since LARS generates the full regularization path for each value of $\bt$, we do not need to perform cross validation to tune the parameter $\lambda$.
Instead, we choose the value of $\bb$ with minimum classification error on the \emph{training} set as our optimal discriminant value at termination.

These stopping criteria and parameter choices were chosen empirically in order to yield accurate classifiers using a minimal number of iterations; in particular, using modest stopping tolerances limits the number of iterations performed, which tends to limit overfitting in practice.

We also include the  Sparse Zero-Variance Discriminant Analysis (SZVD)  method  
proposed in \cite{ames2014alternating} in our comparisons.
We train the regularization parameter $\gamma$ in SZVD in a fashion similar to that above.
We set the maximum value of the regularization parameter $\gamma$	to
\begin{equation}\label{eq:bargam}
  \bar{\gamma} = \frac{\hat{\bb}^T \bs{B} \hat\bb}{ \|\hat\bb\|_1},
\end{equation}
where
$\hat \bb$ is the optimal solution of the unpenalized SZVD problem
and $\bs{B}$ is the  sample between-class
covariance matrix of the training data.
We choose $\gamma$ from the exponentially spaced grid  $\bar{\gamma}/2^c$ for $c = 3,2,1, 0, -1$
using $5$ fold cross-validation; this approach is consistent with that in \cite{ames2014alternating}.
We select the value of $\gamma$ which minimizes misclassification error amongst
all sets of discriminant vectors with at most $35\%$ nonzero entries;
this acceptable sparsity threshold is chosen to be higher than that in the SOS experiments,
due to the tendency of SZVD to misconverge to the trivial all-zero solution for large values of
$\gamma$.
We stop SZVD after a maximum of 1000 iterations or a solution satisfying the stopping
tolerance of $10^{-5}$ is obtained.
We use the augmented Lagrangian penalty parameter $\beta = 1.25$ in SZVD in all experiments.
All simulations and experiments in the following sections were performed using \texttt{Matlab 2019b} on a standard node of the high performance computing system at the Alabama Supercomputer Center. The \texttt{Matlab} implementation of Algorithm~\ref{alg: BCD}  can be obtained from \href{https://github.com/gumeo/accSDA_matlab}{https://github.com/gumeo/accSDA\_matlab}.
We use the Matlab package~\texttt{sparseLDA}~\cite{SLDA} with modification to return the full optimization path to solve~\eqref{eq: b prob} using the LARS algorithm.

\begin{figure}
\centering

  \begin{subfigure}[b]{0.45\textwidth}
      \includegraphics[width=\textwidth]{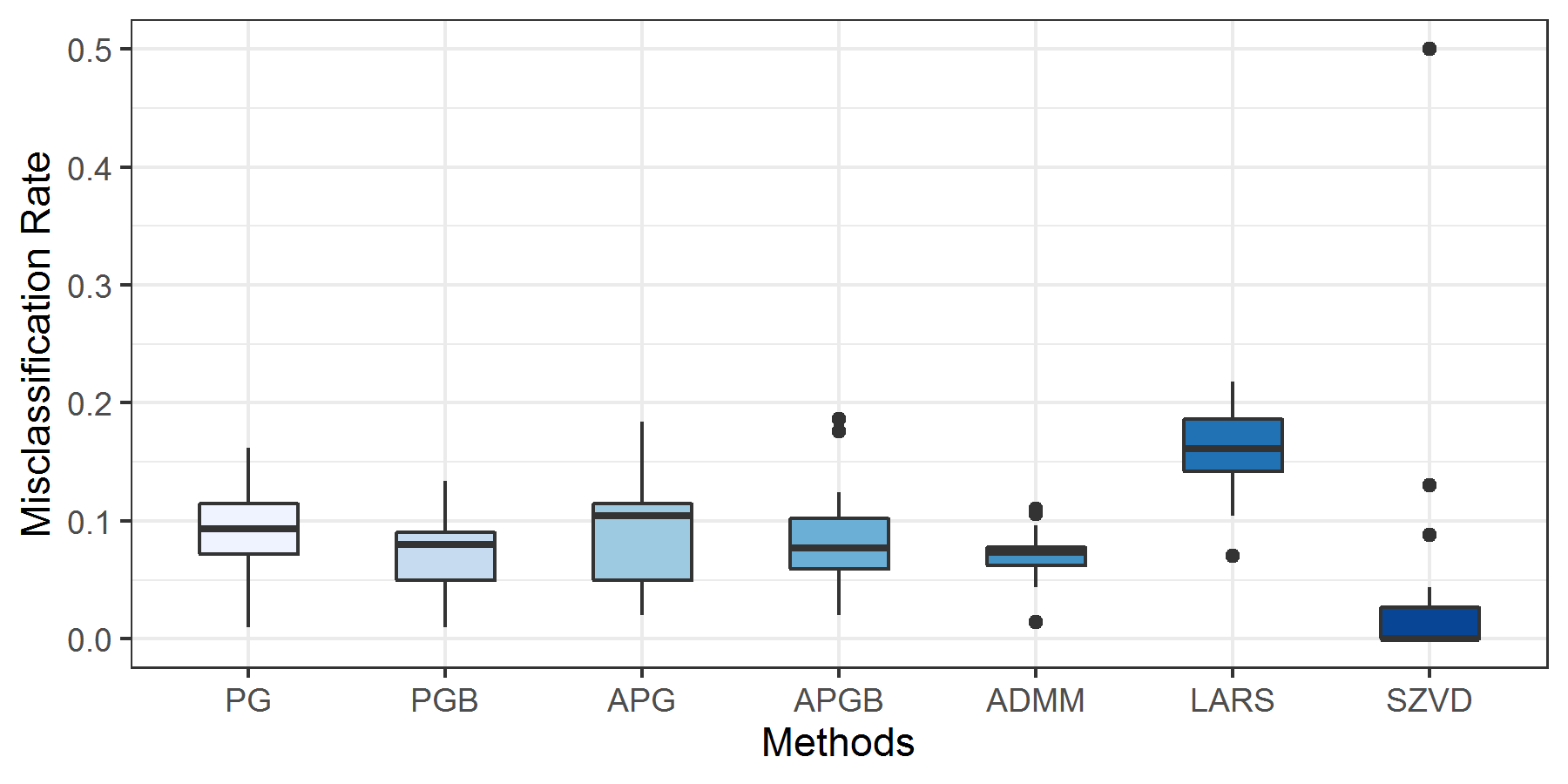}
      \caption{$K=2, r=0$}
      \label{fig:(2,0))}
  \end{subfigure}
  ~
  \begin{subfigure}[b]{0.45\textwidth}
      \includegraphics[width=\textwidth]{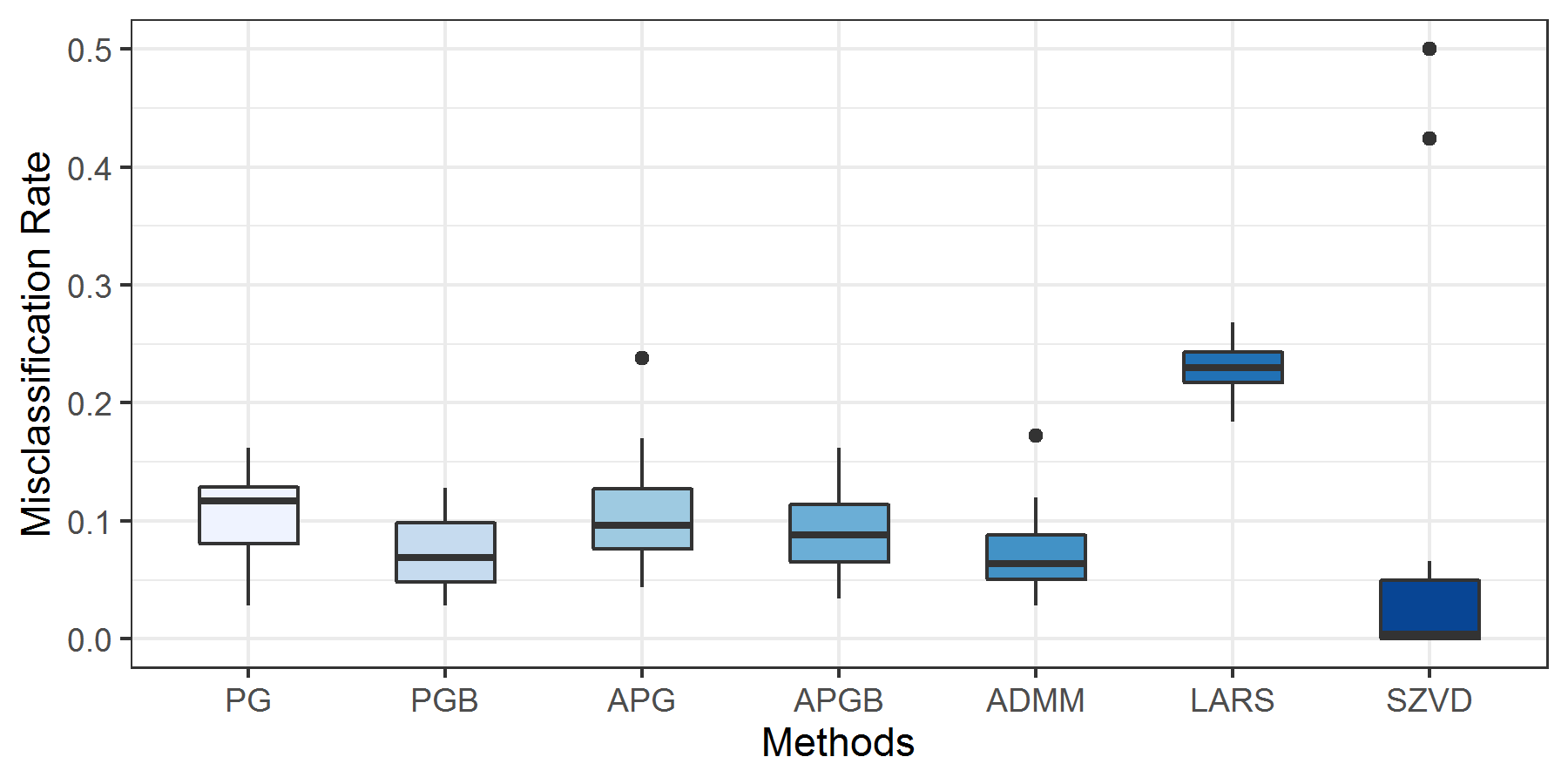}
      \caption{$K=2, r=0.1$}
      \label{fig:(2,1))}
  \end{subfigure}

  \begin{subfigure}[b]{0.45\textwidth}
      \includegraphics[width=\textwidth]{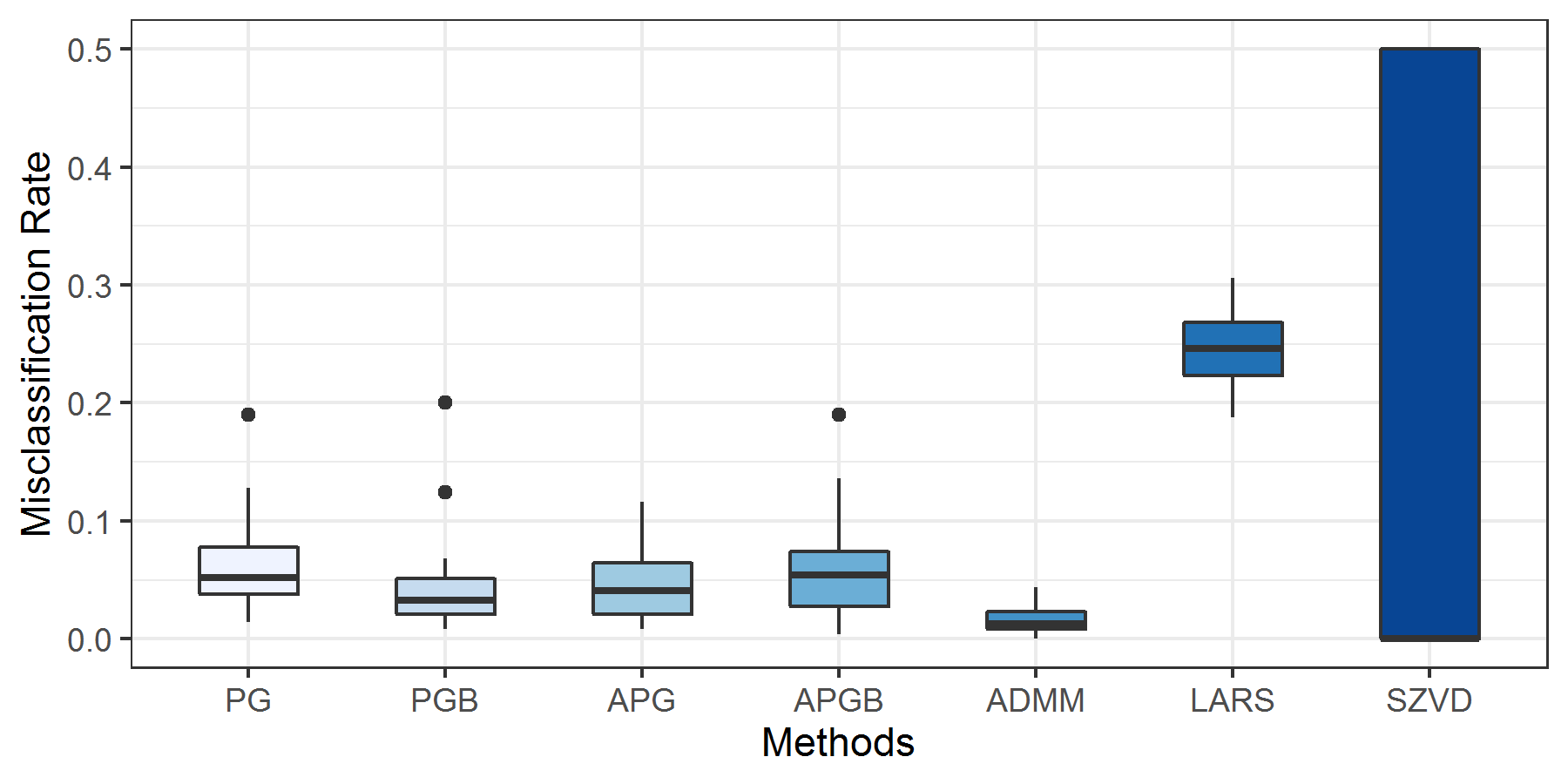}
      \caption{$K=2, r=0.5$}
      \label{fig:(2,5))}
  \end{subfigure}
  ~
  \begin{subfigure}[b]{0.45\textwidth}
      \includegraphics[width=\textwidth]{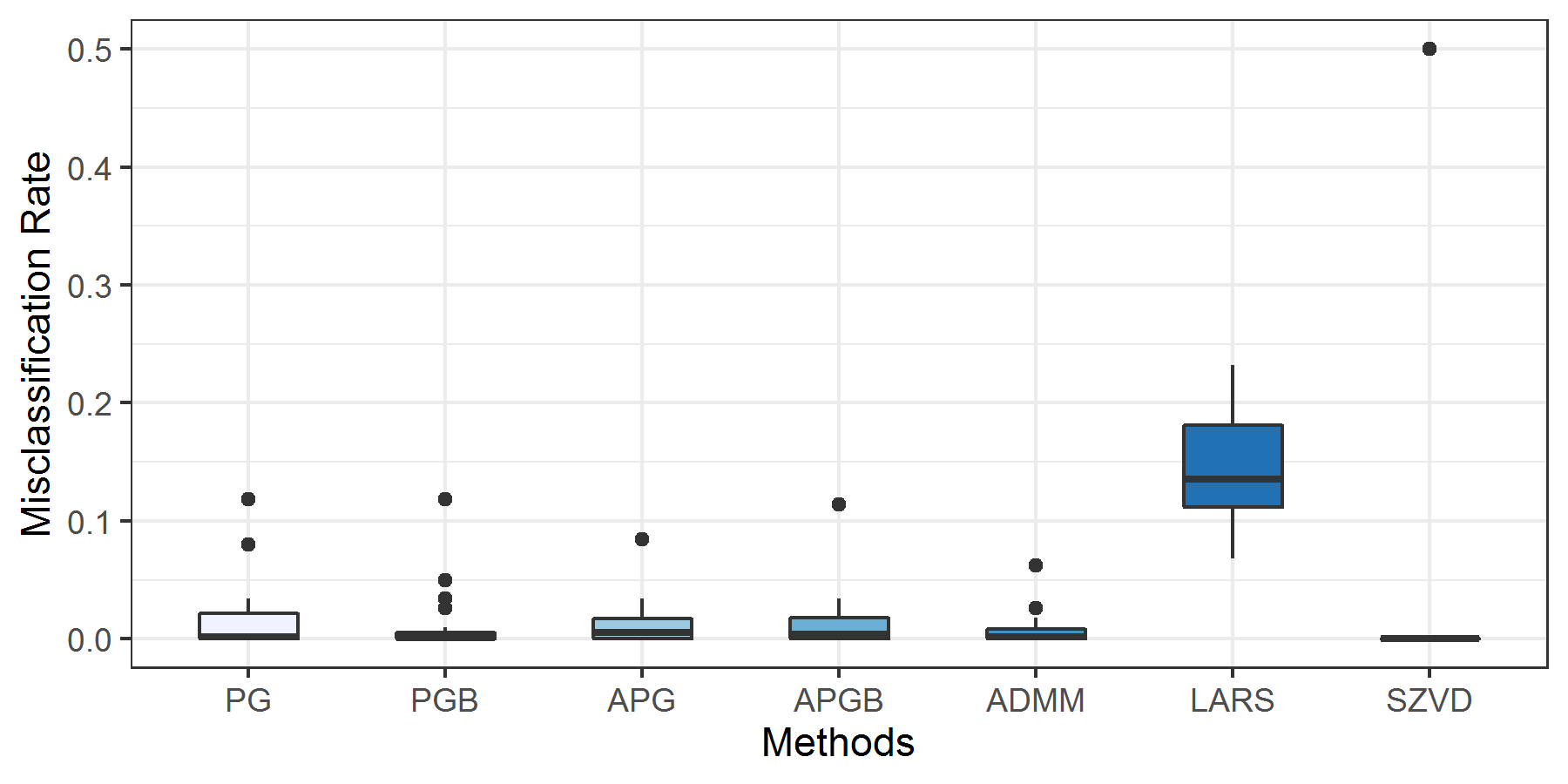}
      \caption{$K=2, r=0.9$}
      \label{fig:(2,9))}
  \end{subfigure}
  \caption{Box plot of out-of-sample misclassification rate for
  $2$-class Gaussian data with class means defined by~\eqref{eq:mu-def1} and covariance vector $\bs{\Sigma}$ for given values of $r$. For each data set, we use nearest centroid classification following projection onto discriminant vectors given by the sparse zero variance method (SZVD) or optimal scoring vectors calculated using the  proximal gradient method with constant stepsize (PG), with backtracking line search (PGB), accelerated proximal method with constant stepsize (APG) and backtracking line search (APGB), alternating direction method of multipliers (ADMM), and least angle regression (LARS). In all experiments, $n_{train} = 50$ and $n_{test} = 500$. }
  \label{fig:gauss=err}

\end{figure}

\begin{figure}
  \centering
  \begin{subfigure}[b]{0.45\textwidth}
      \includegraphics[width=\textwidth]{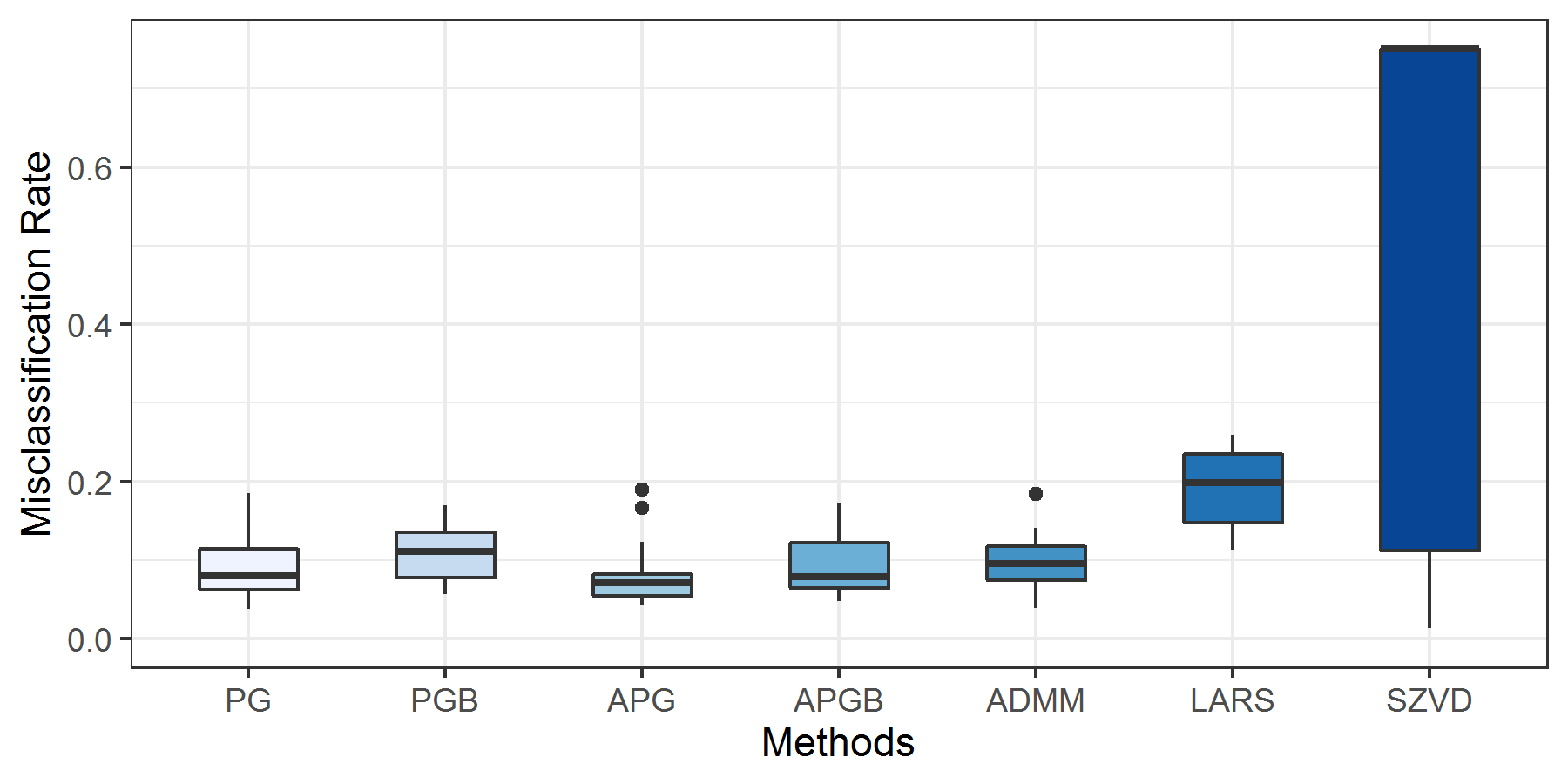}
      \caption{$K=4, r=0$}
      \label{fig:(4,0))}
  \end{subfigure}
  ~
  \begin{subfigure}[b]{0.45\textwidth}
      \includegraphics[width=\textwidth]{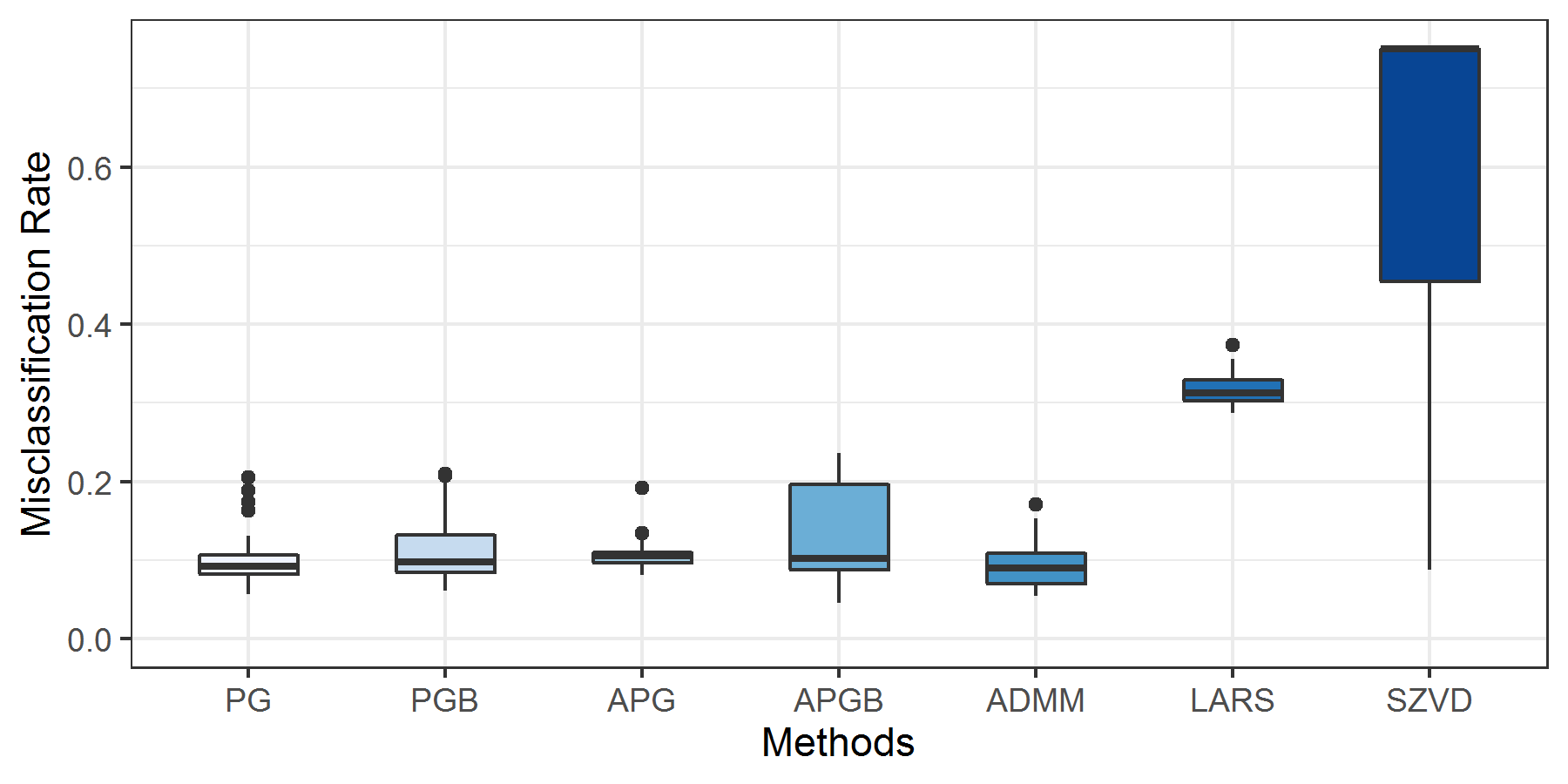}
      \caption{$K=4, r=0.1$}
      \label{fig:(4,1))}
  \end{subfigure}

  \begin{subfigure}[b]{0.45\textwidth}
      \includegraphics[width=\textwidth]{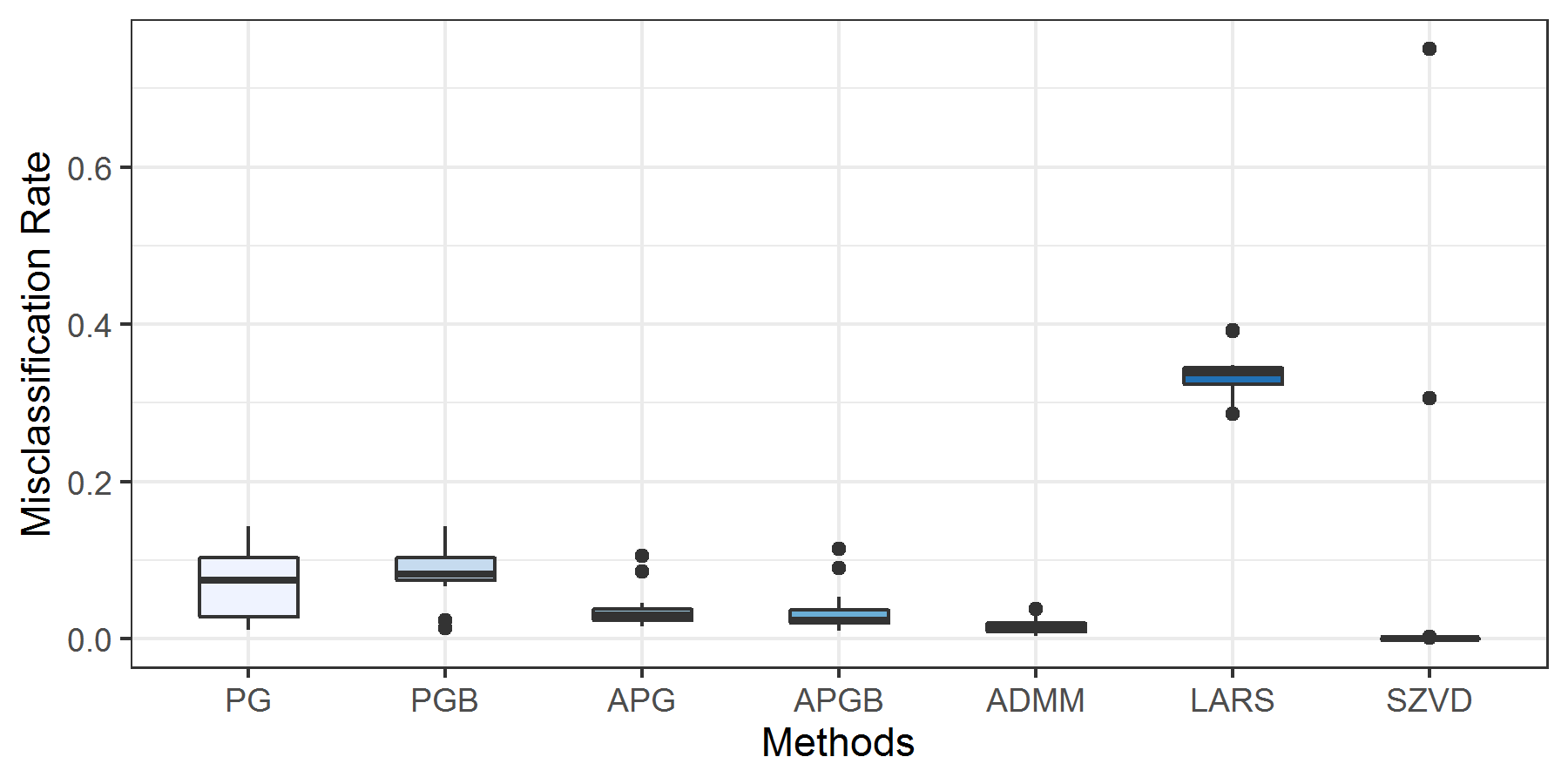}
      \caption{$K=4, r=0.5$}
      \label{fig:(4,5))}
  \end{subfigure}
  ~
  \begin{subfigure}[b]{0.45\textwidth}
      \includegraphics[width=\textwidth]{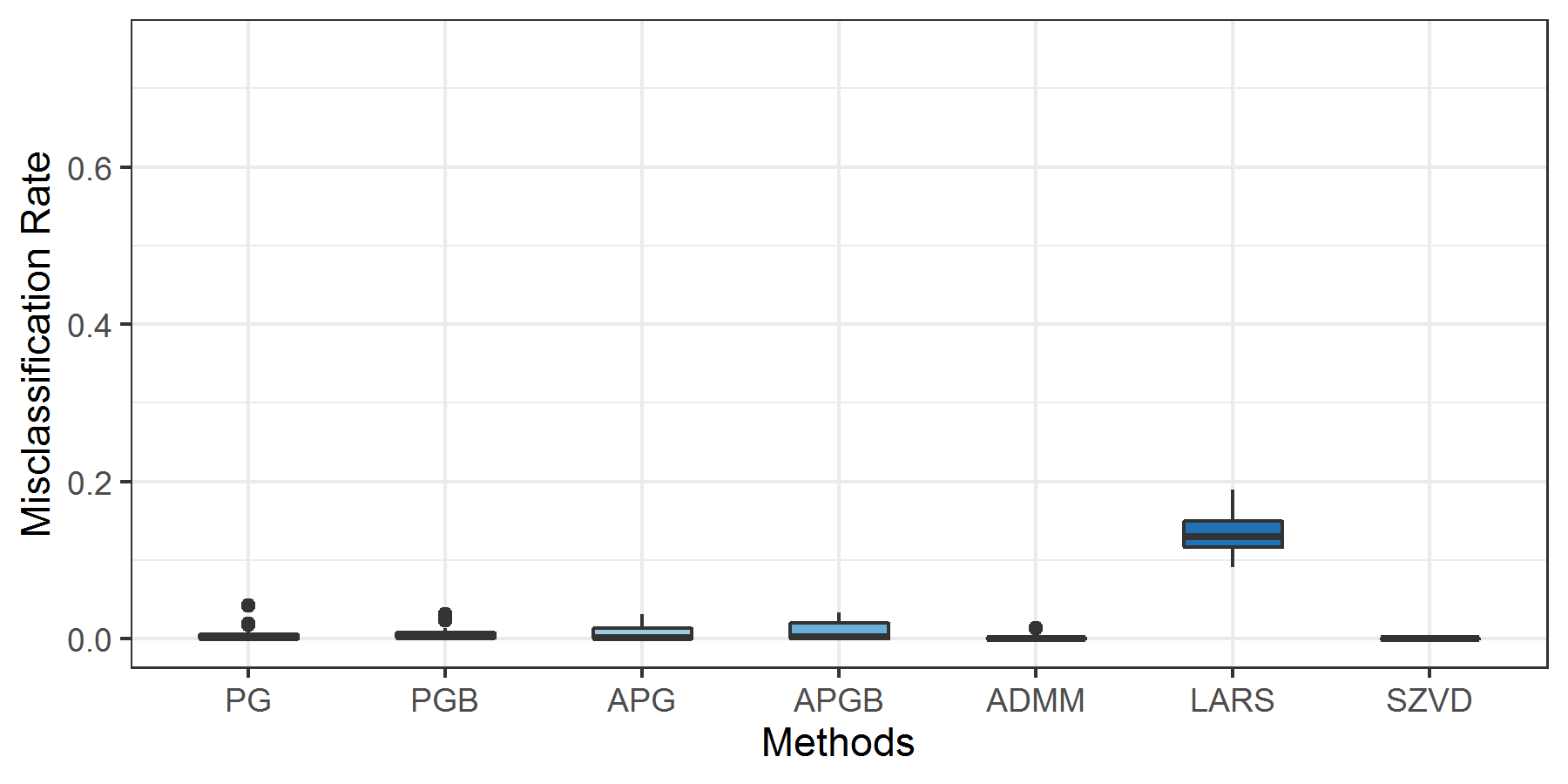}
      \caption{$K=4, r=0.9$}
      \label{fig:(4,9))}
  \end{subfigure}
  \caption{Box plot of out-of-sample misclassification rate  using discriminant vectors calculated using  APG, APGB, PG, PGB, ADMM, SZVD, and LARS for
  $4$-class Gaussian data with class means defined by~\eqref{eq:mu-def1} and covariance vector $\bs{\Sigma}$ for given values of $r$.
  In all experiments, $n_{train} = 100$ and $n_{test} = 1000$.
  }
  \label{fig:gauss=err4}
\end{figure}

\begin{figure}[!htbp]
\centering

  \begin{subfigure}[b]{0.45\textwidth}
      \includegraphics[width=\textwidth]{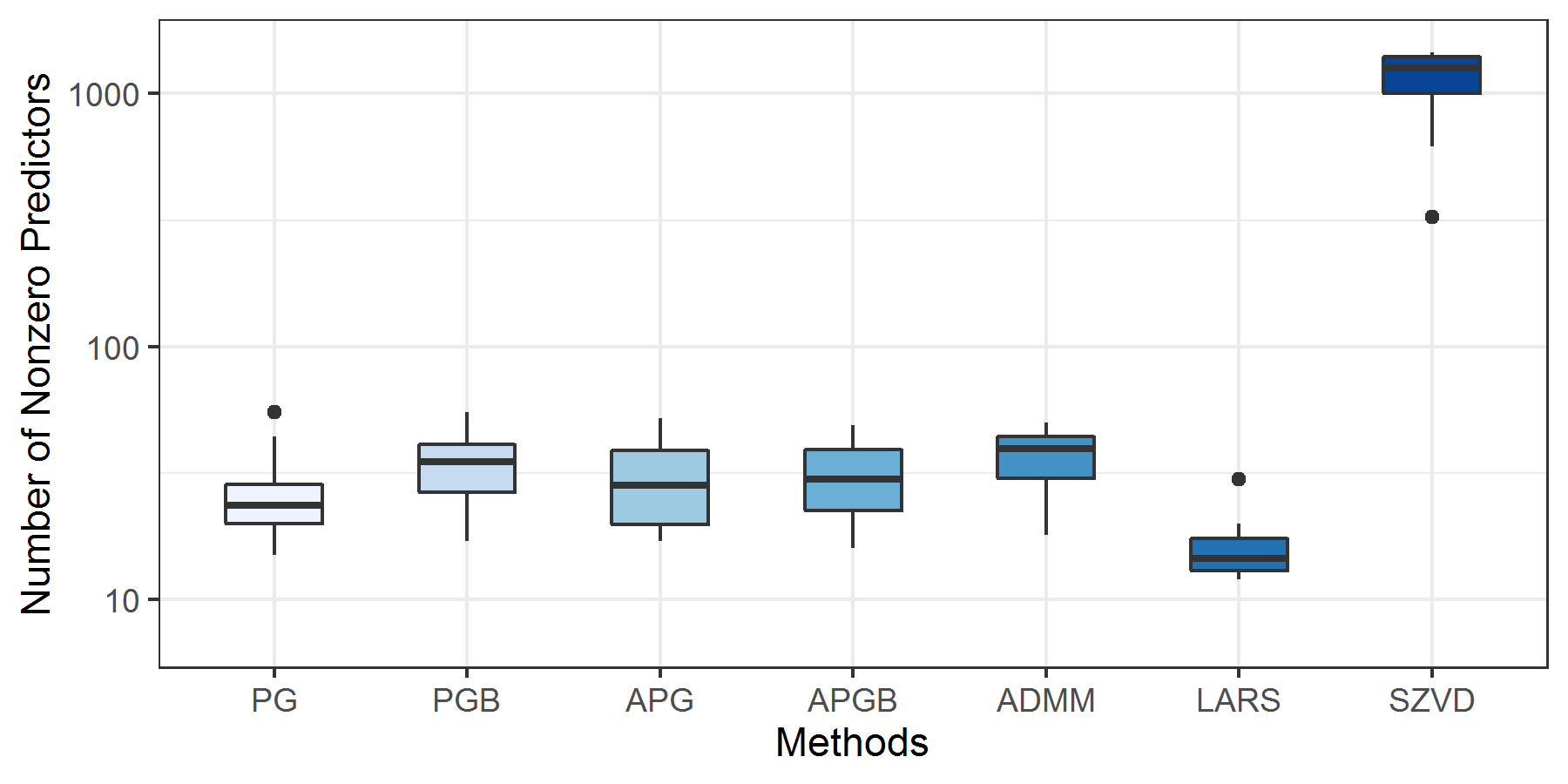}
      \caption{$K=2, r=0$}
      \label{fig:(2,0)card}
  \end{subfigure}
  ~
  \begin{subfigure}[b]{0.45\textwidth}
      \includegraphics[width=\textwidth]{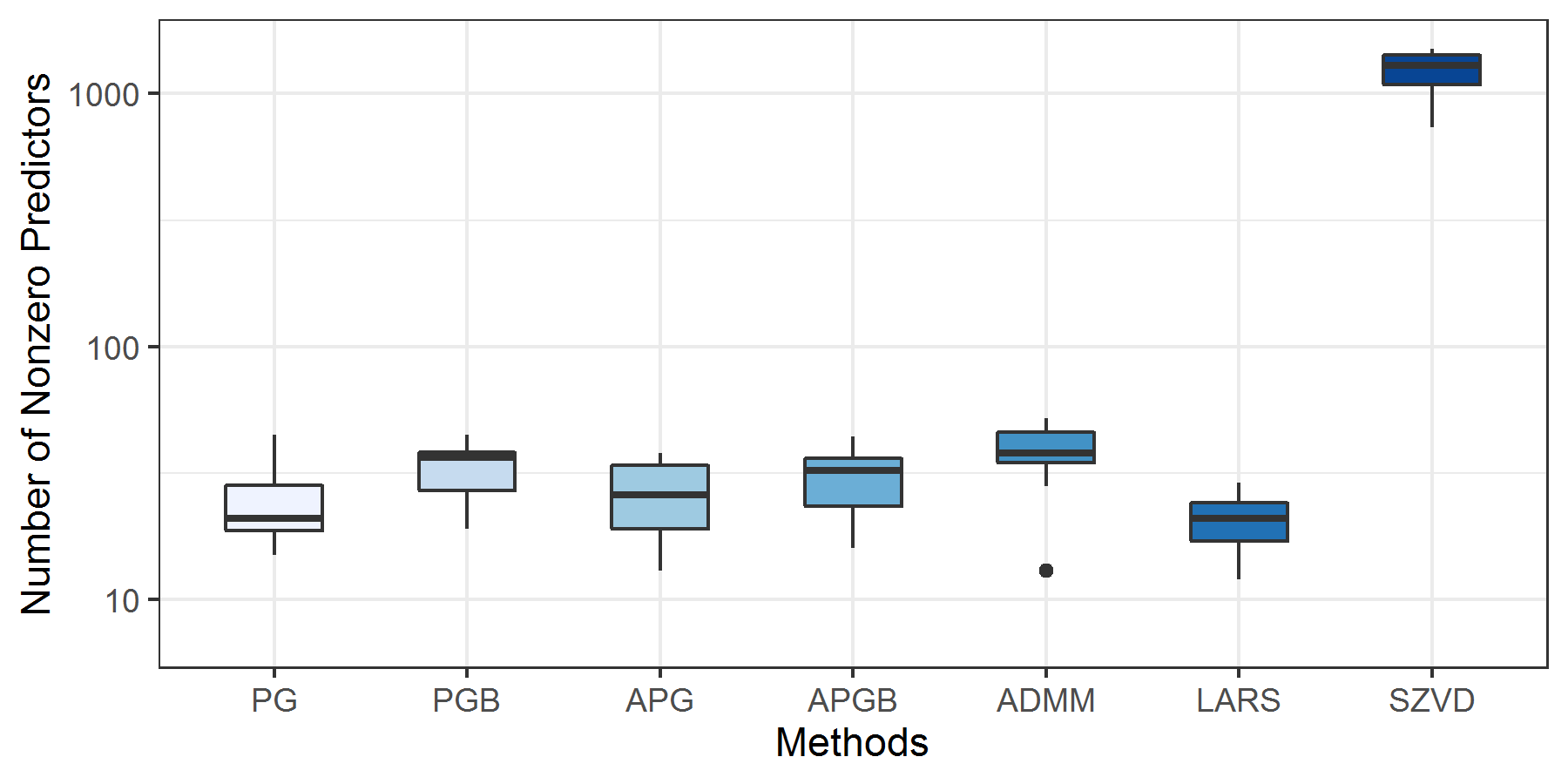}
      \caption{$K=2, r=0.1$}
      \label{fig:(2,1)card}
  \end{subfigure}

  \begin{subfigure}[b]{0.45\textwidth}
      \includegraphics[width=\textwidth]{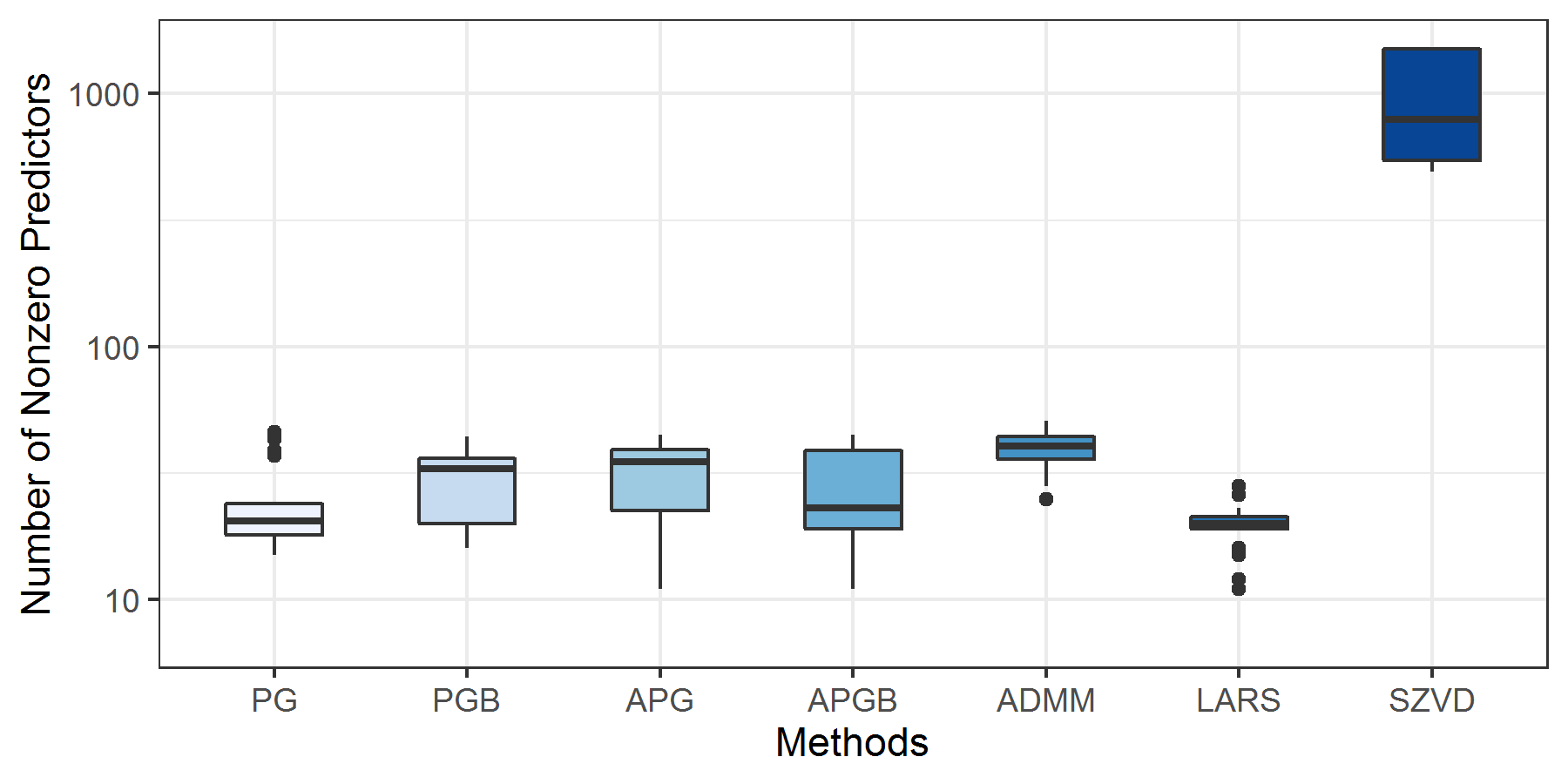}
      \caption{$K=2, r=0.5$}
      \label{fig:(2,5)card}
  \end{subfigure}
  ~
  \begin{subfigure}[b]{0.45\textwidth}
      \includegraphics[width=\textwidth]{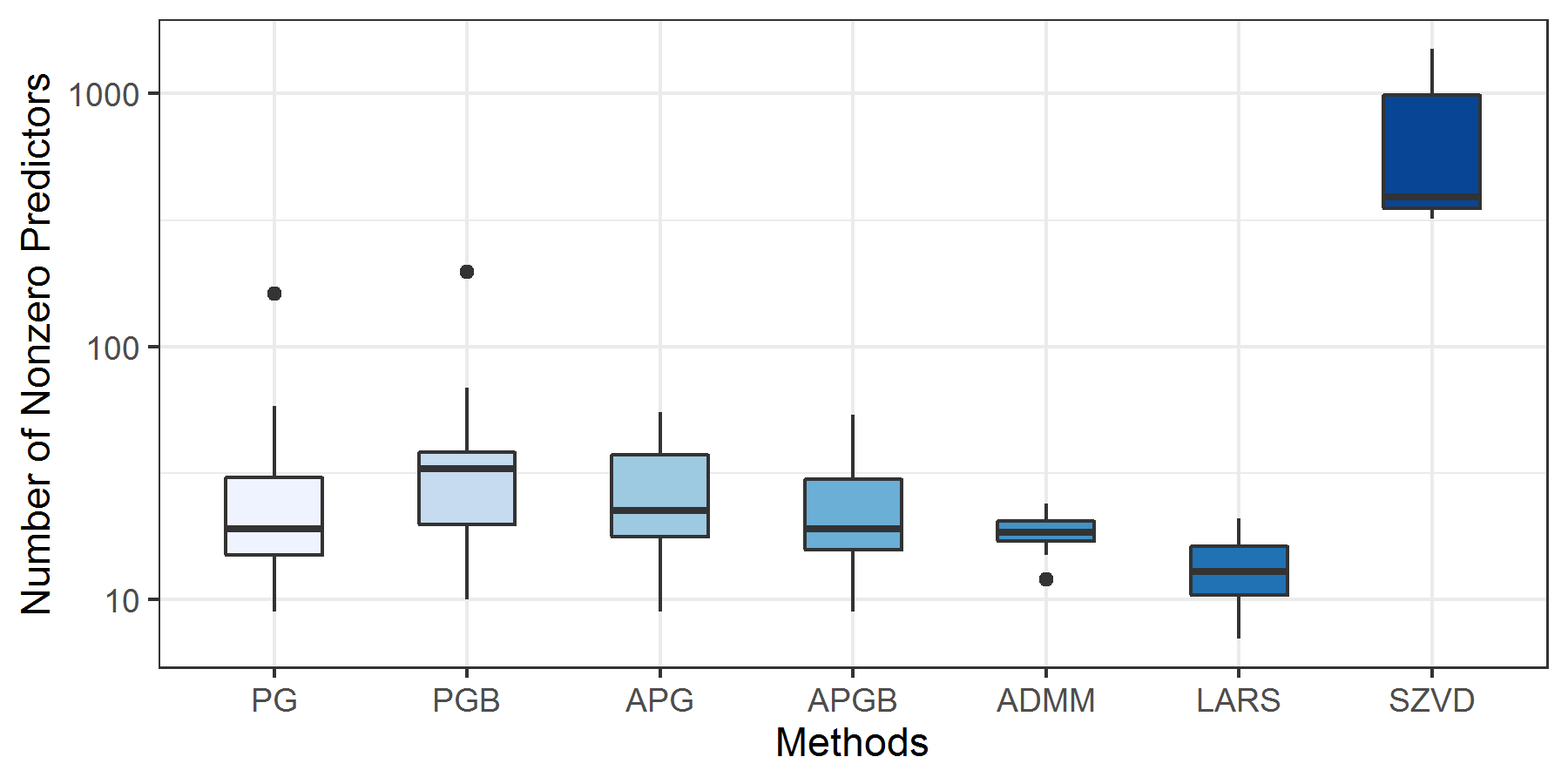}
      \caption{$K=2, r=0.9$}
      \label{fig:(2,9)card}
  \end{subfigure}
  \caption{Box plot of number of nonzero entries of discriminant vectors and optimal scoring vectors 
  calculated using APG, APGB, PG, PGB, ADMM, SZVD, and LARS for classifying $2$-class Gaussian data with mean-vectors defined by~\eqref{eq:mu-def1} and covariance vector $\bs{\Sigma}$ for given values of $r$.}
  \label{fig:gauss=card}

\end{figure}

\begin{figure}
  \centering
  \begin{subfigure}[b]{0.45\textwidth}
      \includegraphics[width=\textwidth]{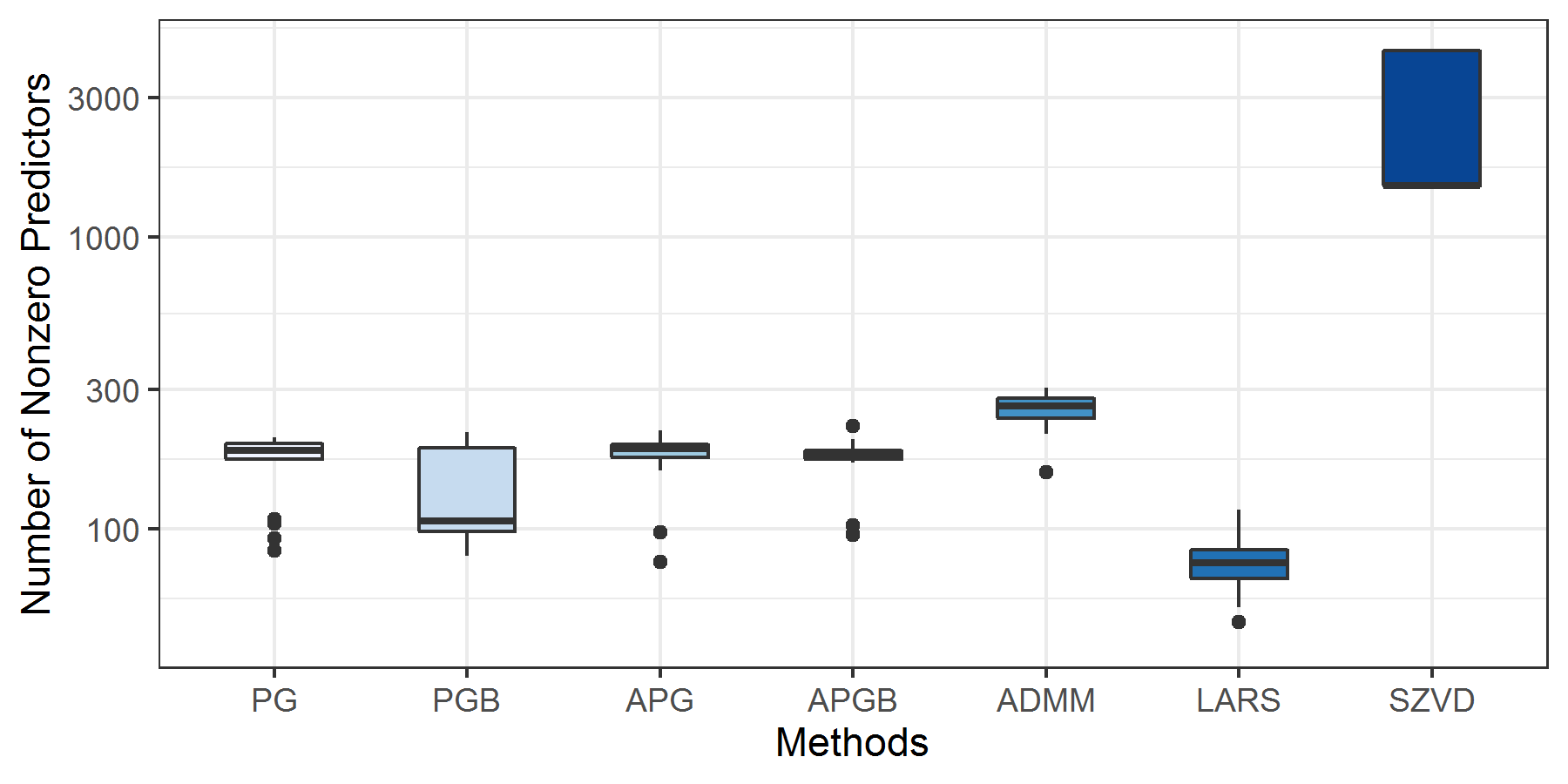}
      \caption{$K=4, r=0$}
      \label{fig:(4,0)card}
  \end{subfigure}
  ~
  \begin{subfigure}[b]{0.45\textwidth}
      \includegraphics[width=\textwidth]{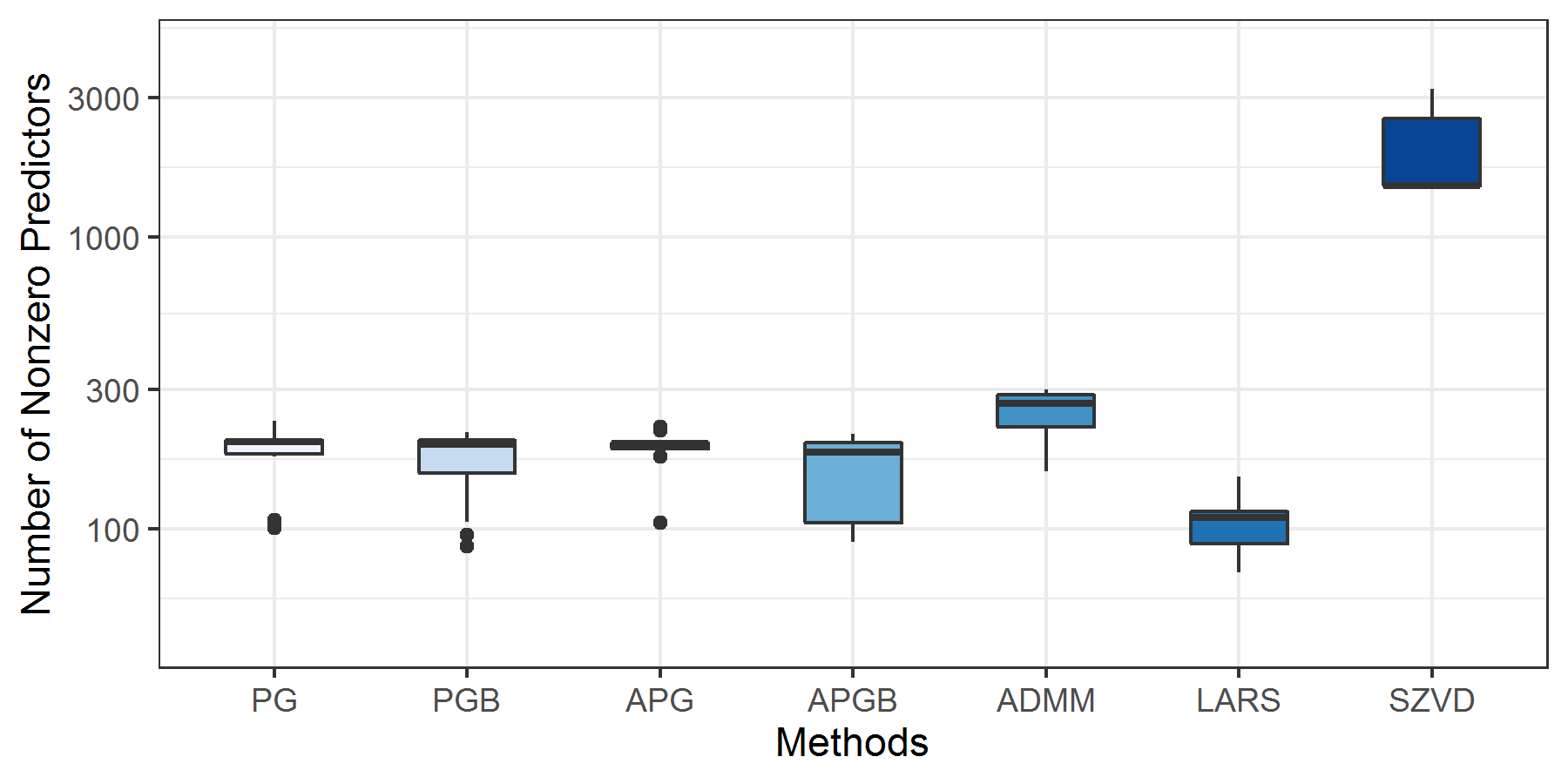}
      \caption{$K=4, r=0.1$}
      \label{fig:(4,1)card}
  \end{subfigure}

  \begin{subfigure}[b]{0.45\textwidth}
      \includegraphics[width=\textwidth]{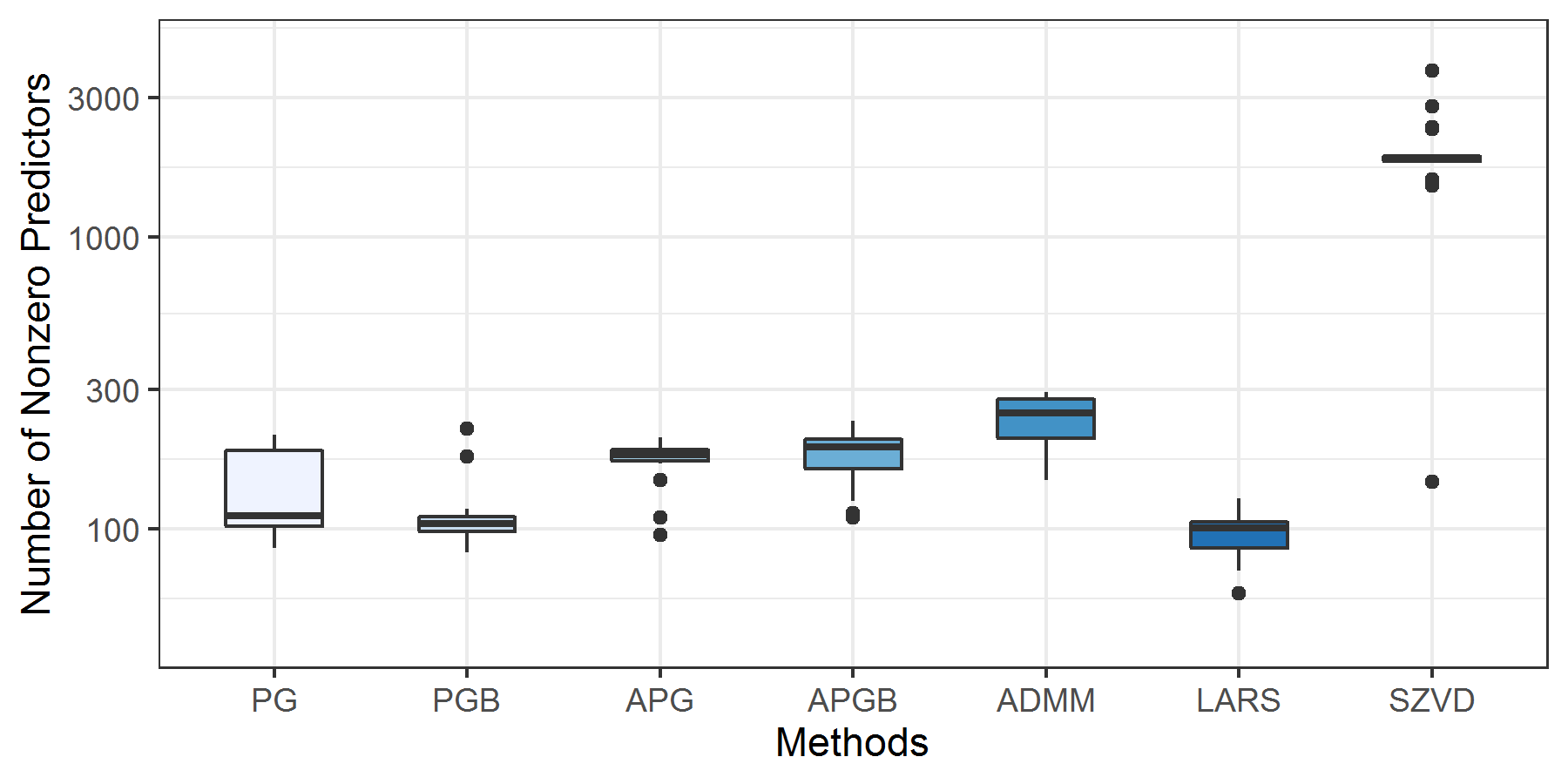}
      \caption{$K=4, r=0.5$}
      \label{fig:(4,5)card}
  \end{subfigure}
  ~
  \begin{subfigure}[b]{0.45\textwidth}
      \includegraphics[width=\textwidth]{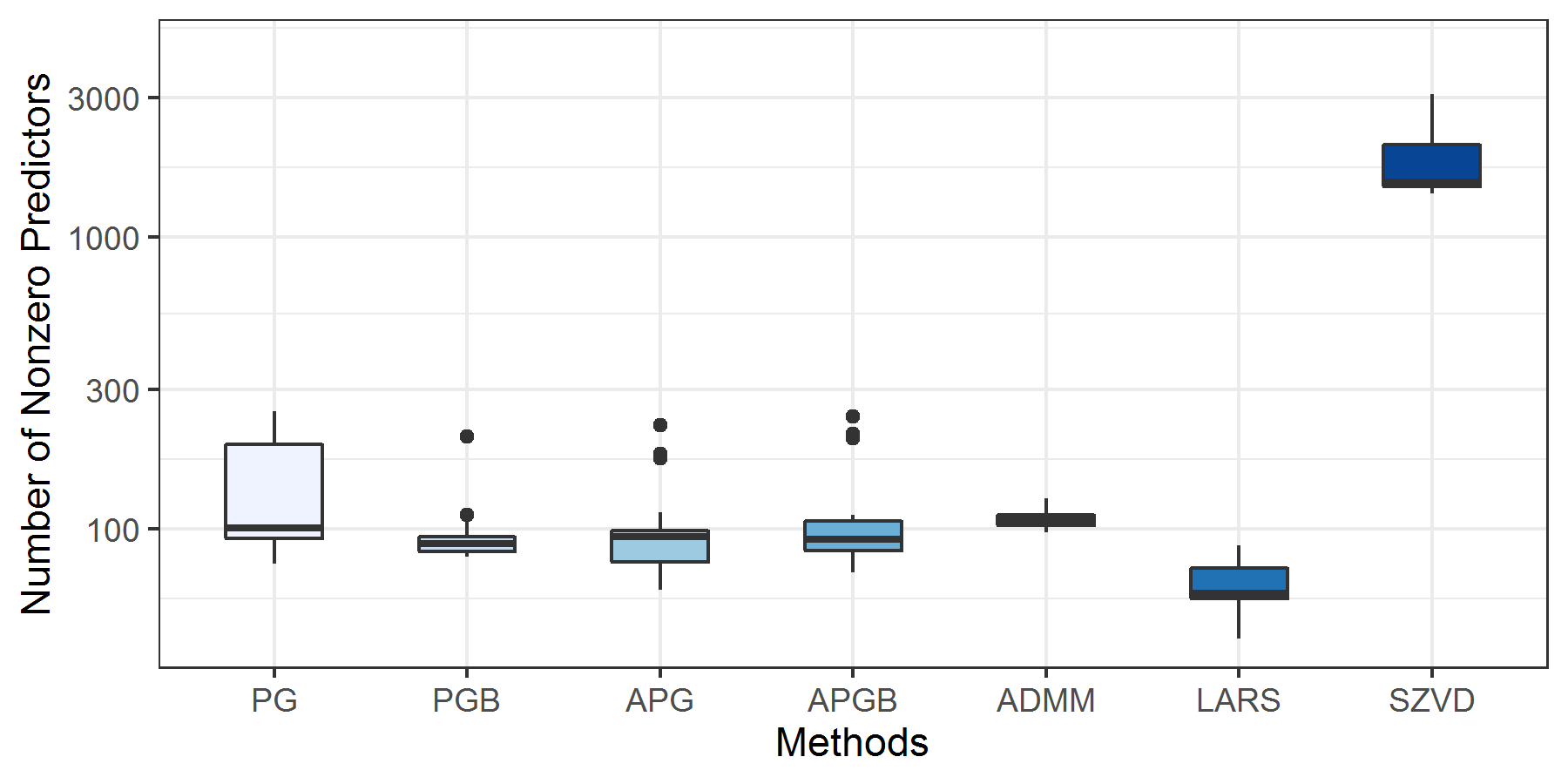}
      \caption{$K=4, r=0.9$}
      \label{fig:(4,9)card}
  \end{subfigure}
  \caption{Box plot of number of nonzero entries of discriminant vectors and optimal scoring vectors 
  calculated using APG, APGB, PG, PGB, ADMM, SZVD, and LARS for classifying $4$-class Gaussian data with mean-vectors defined by~\eqref{eq:mu-def1} and covariance vector $\bs{\Sigma}$ for given values of $r$.}
  \label{fig:gauss=card4}
\end{figure}

\begin{figure}
\centering

  \begin{subfigure}[b]{0.45\textwidth}
      \includegraphics[width=\textwidth]{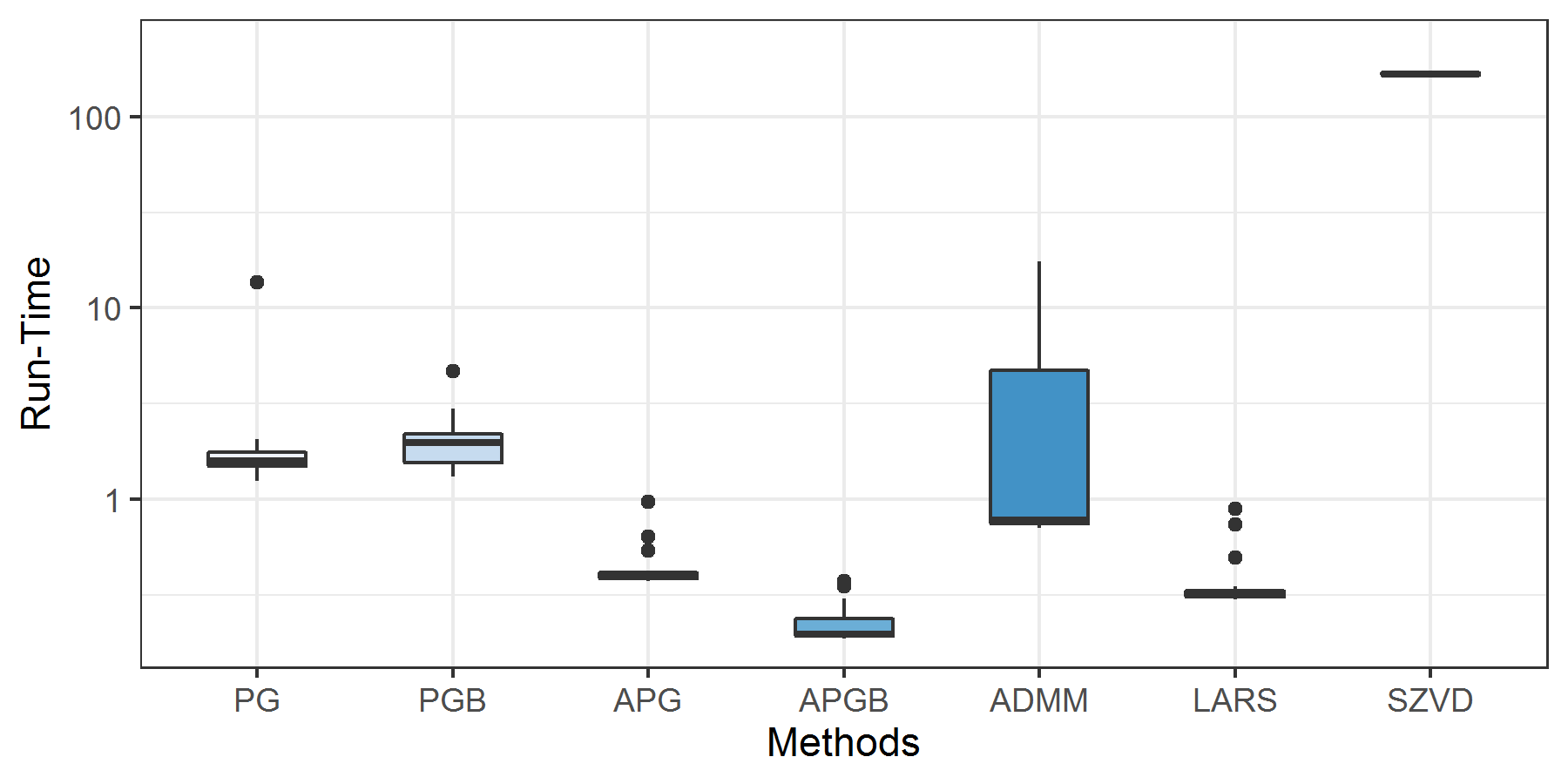}
      \caption{$K=2, r=0$}
      \label{fig:(2,0)Time}
  \end{subfigure}
  ~
  \begin{subfigure}[b]{0.45\textwidth}
      \includegraphics[width=\textwidth]{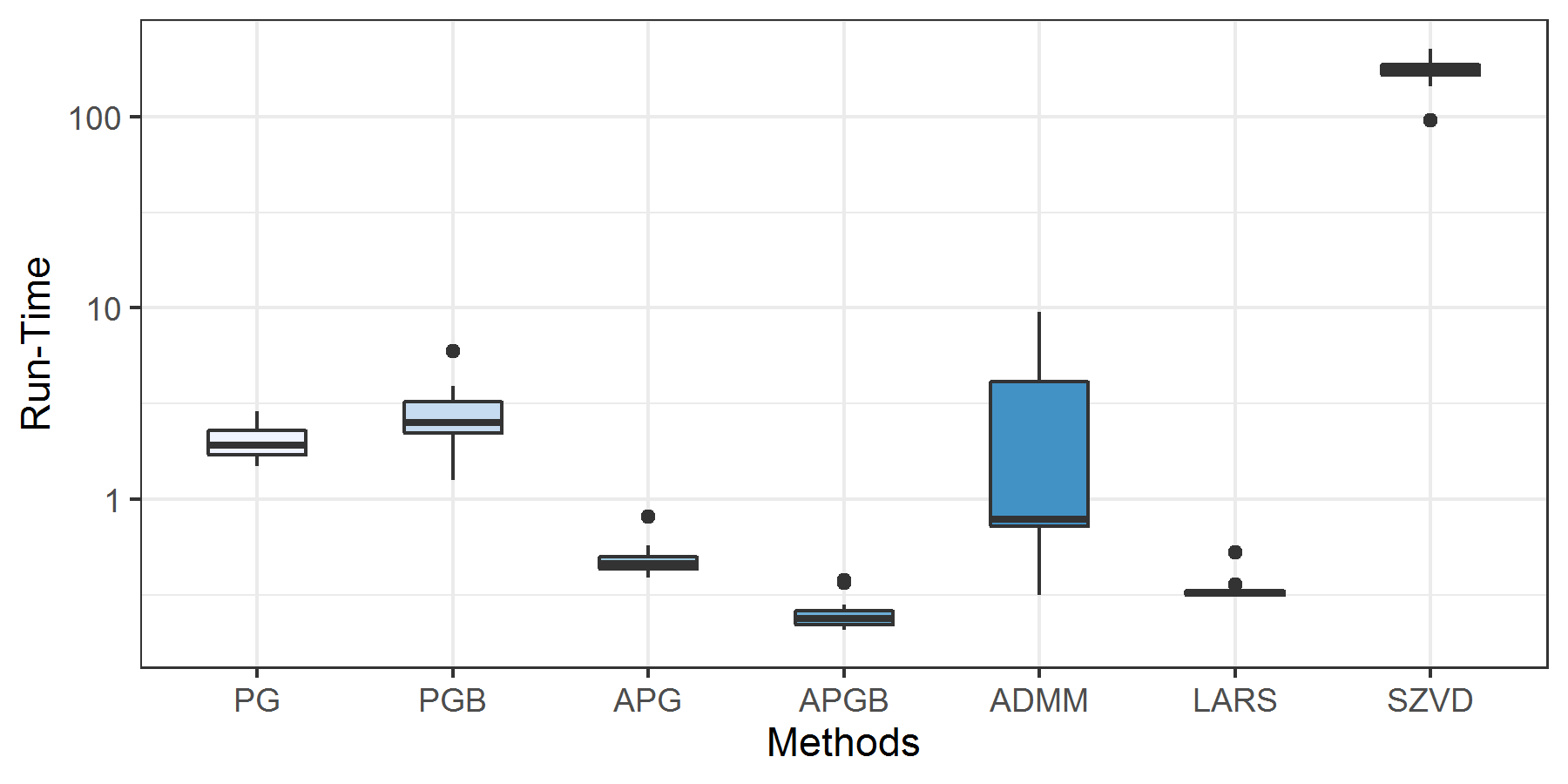}
      \caption{$K=2, r=0.1$}
      \label{fig:(2,1)Time}
  \end{subfigure}

  \begin{subfigure}[b]{0.45\textwidth}
      \includegraphics[width=\textwidth]{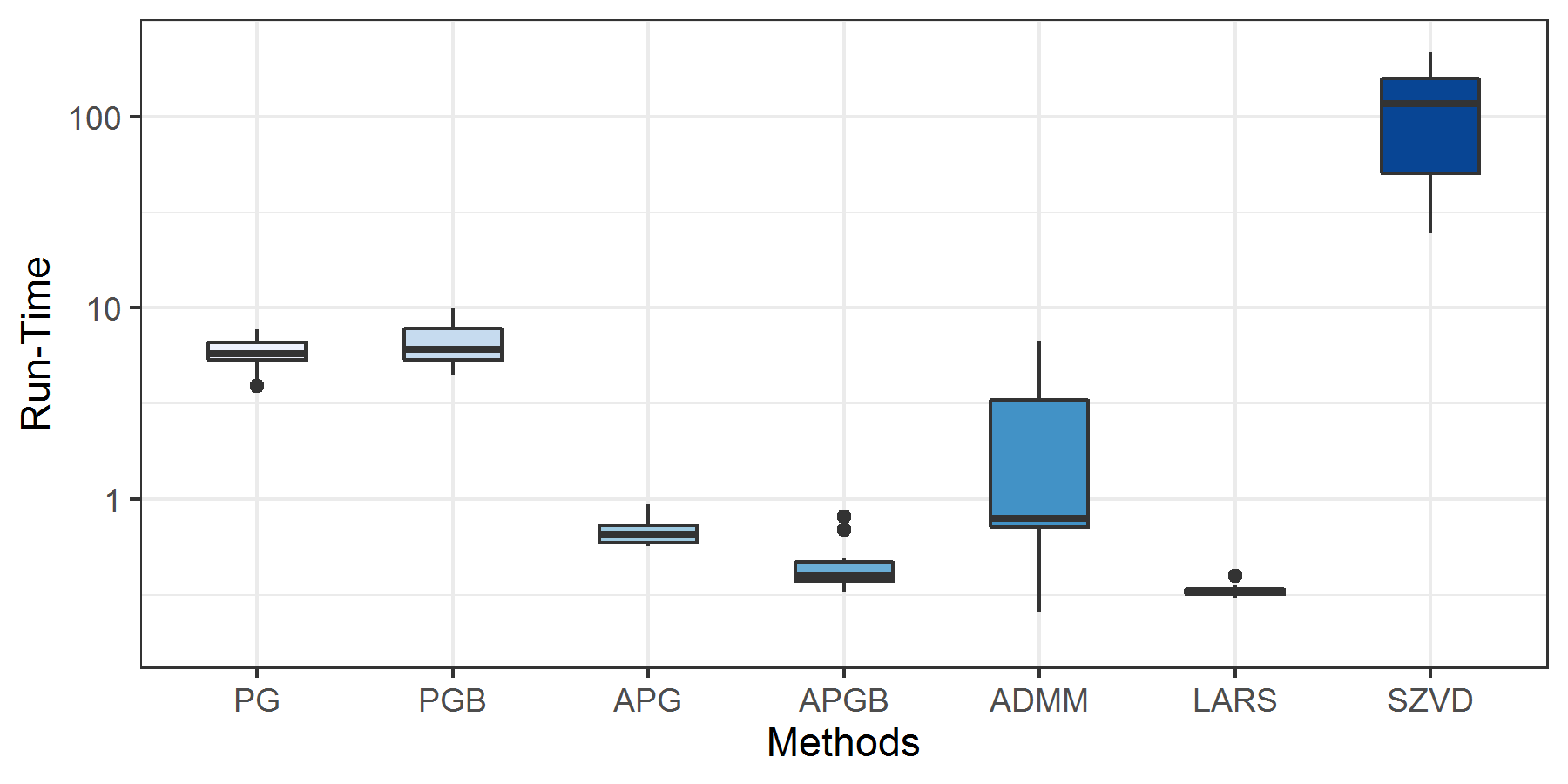}
      \caption{$K=2, r=0.5$}
      \label{fig:(2,5)Time}
  \end{subfigure}
  ~
  \begin{subfigure}[b]{0.45\textwidth}
      \includegraphics[width=\textwidth]{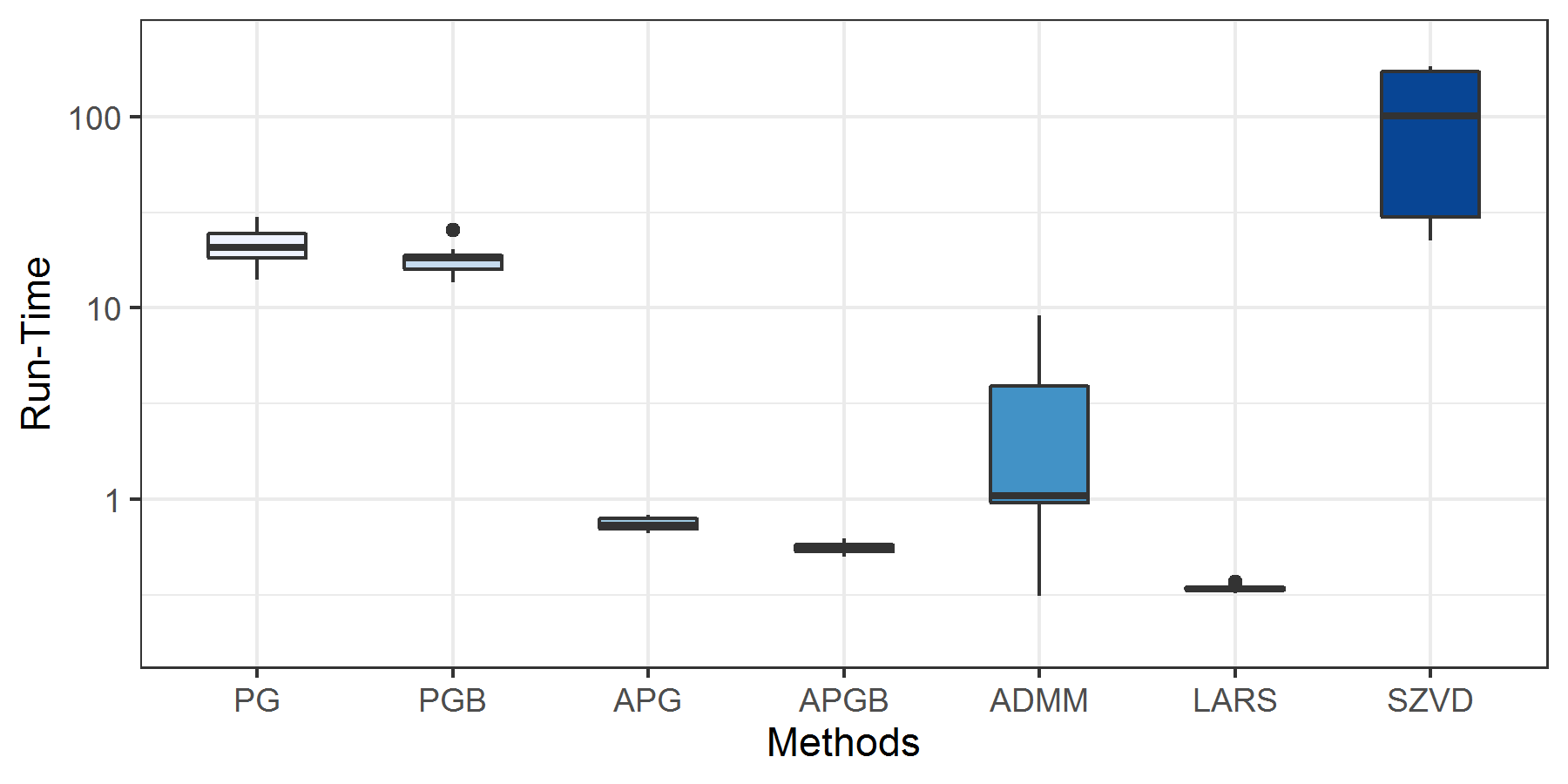}
      \caption{$K=2, r=0.9$}
      \label{fig:(2,9)Time}
  \end{subfigure}

  \caption{Box plots of run-time in seconds (s) for calculation of discriminant vectors and optimal scoring vectors (with error bars of length one standard deviation)
   using APG, APGB, PG, PGB, ADMM, SZVD, and LARS for classifying $2$-class Gaussian data with mean-vectors defined by~\eqref{eq:mu-def1} and covariance vector $\bs{\Sigma}$ for given values of $r$.}
  \label{fig:gauss=Time}
\end{figure}

\begin{figure}
  \centering
  \begin{subfigure}[b]{0.45\textwidth}
      \includegraphics[width=\textwidth]{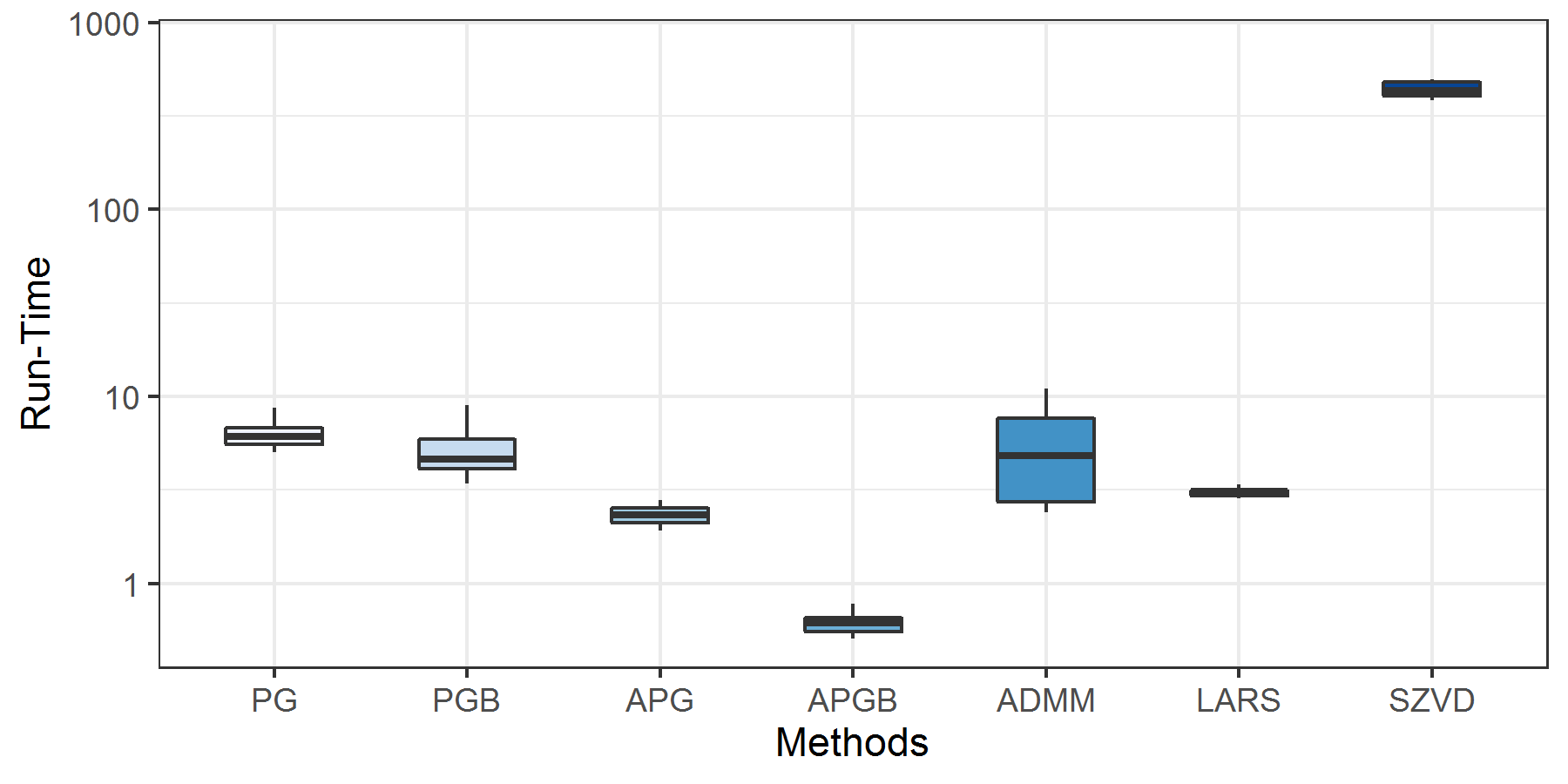}
      \caption{$K=4, r=0$}
      \label{fig:(4,0)Time}
  \end{subfigure}
  ~
  \begin{subfigure}[b]{0.45\textwidth}
      \includegraphics[width=\textwidth]{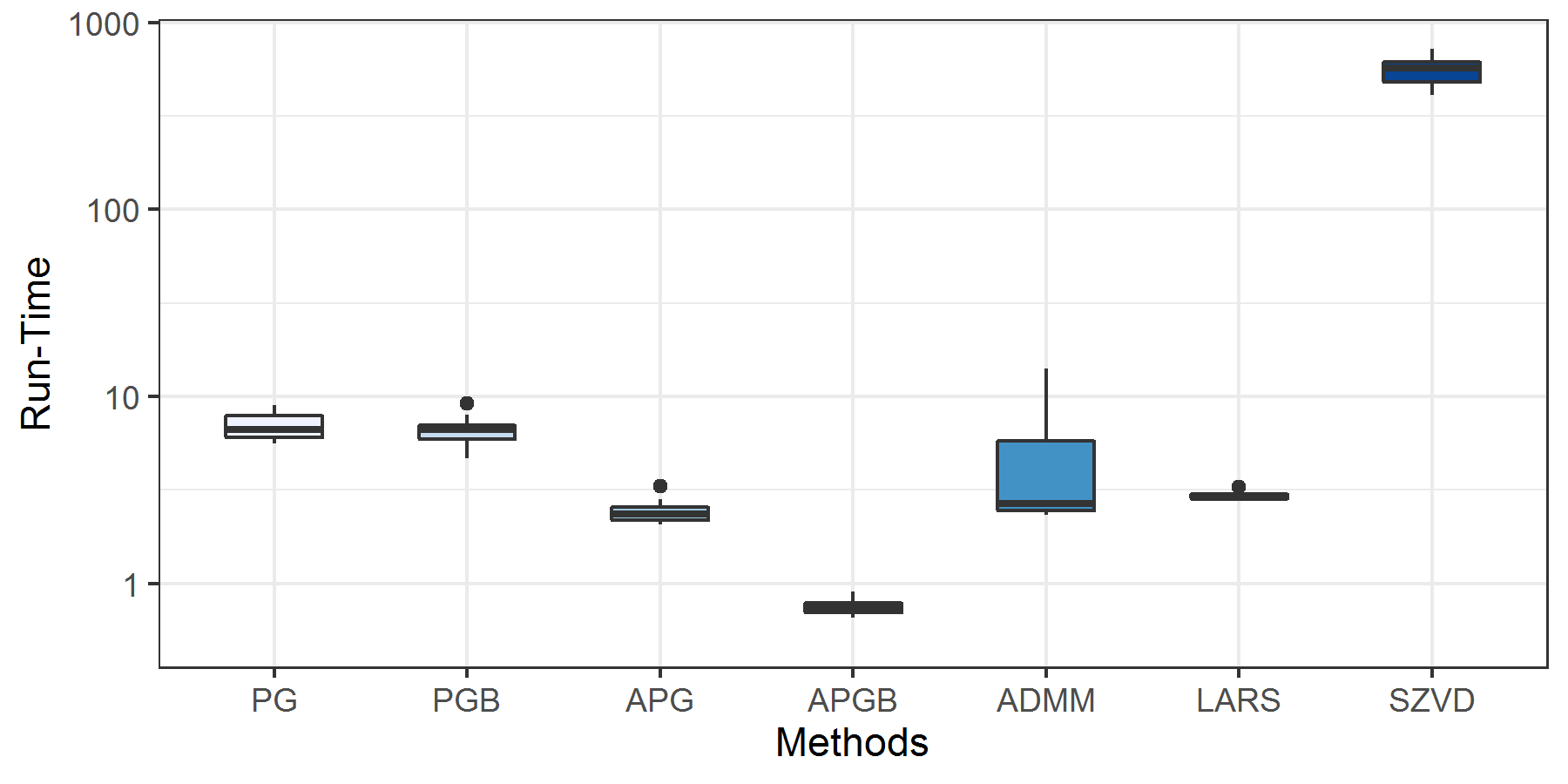}
      \caption{$K=4, r=0.1$}
      \label{fig:(4,1)Time}
  \end{subfigure}

  \begin{subfigure}[b]{0.45\textwidth}
      \includegraphics[width=\textwidth]{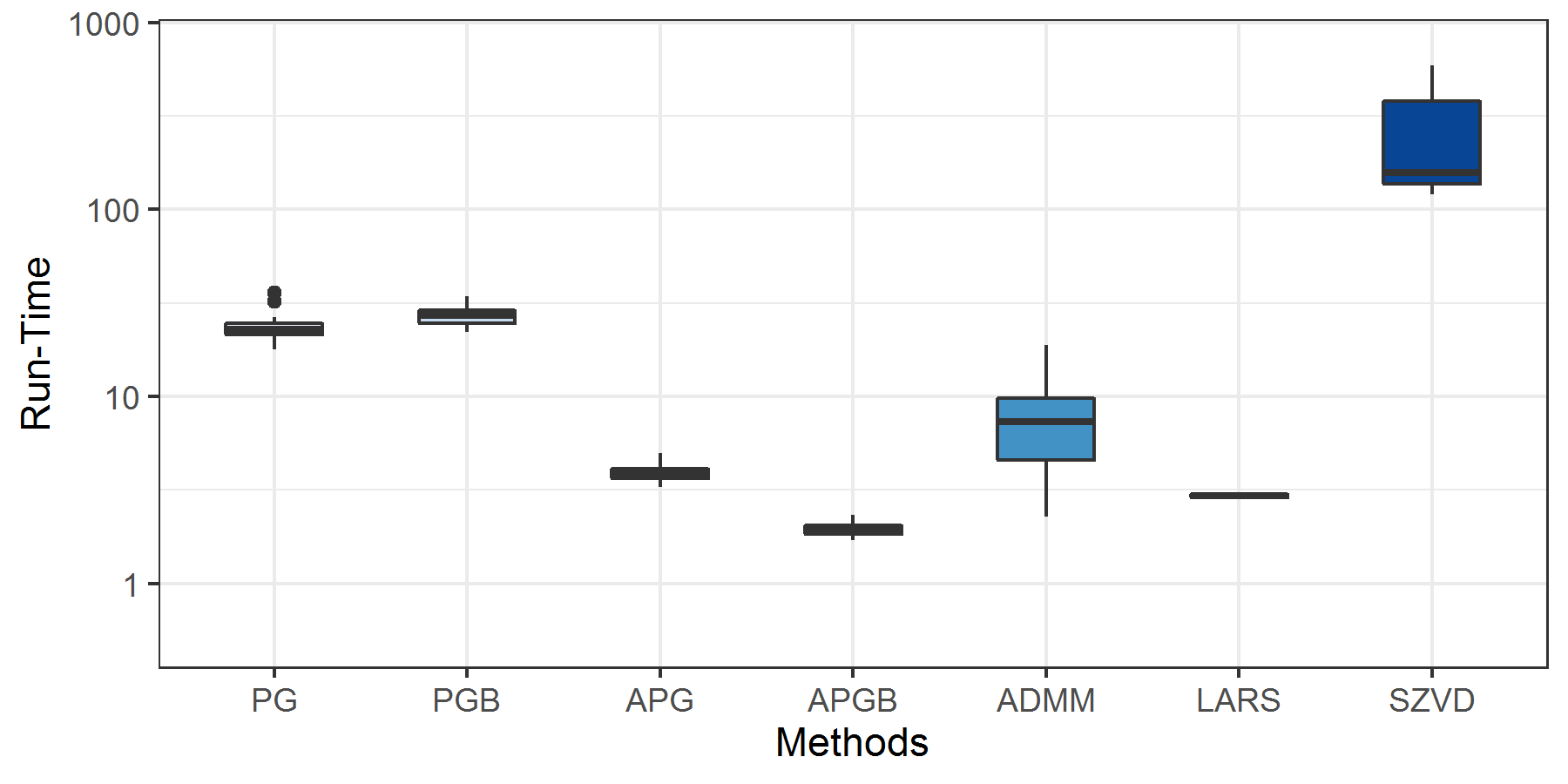}
      \caption{$K=4, r=0.5$}
      \label{fig:(4,5)Time}
  \end{subfigure}
  ~
  \begin{subfigure}[b]{0.45\textwidth}
      \includegraphics[width=\textwidth]{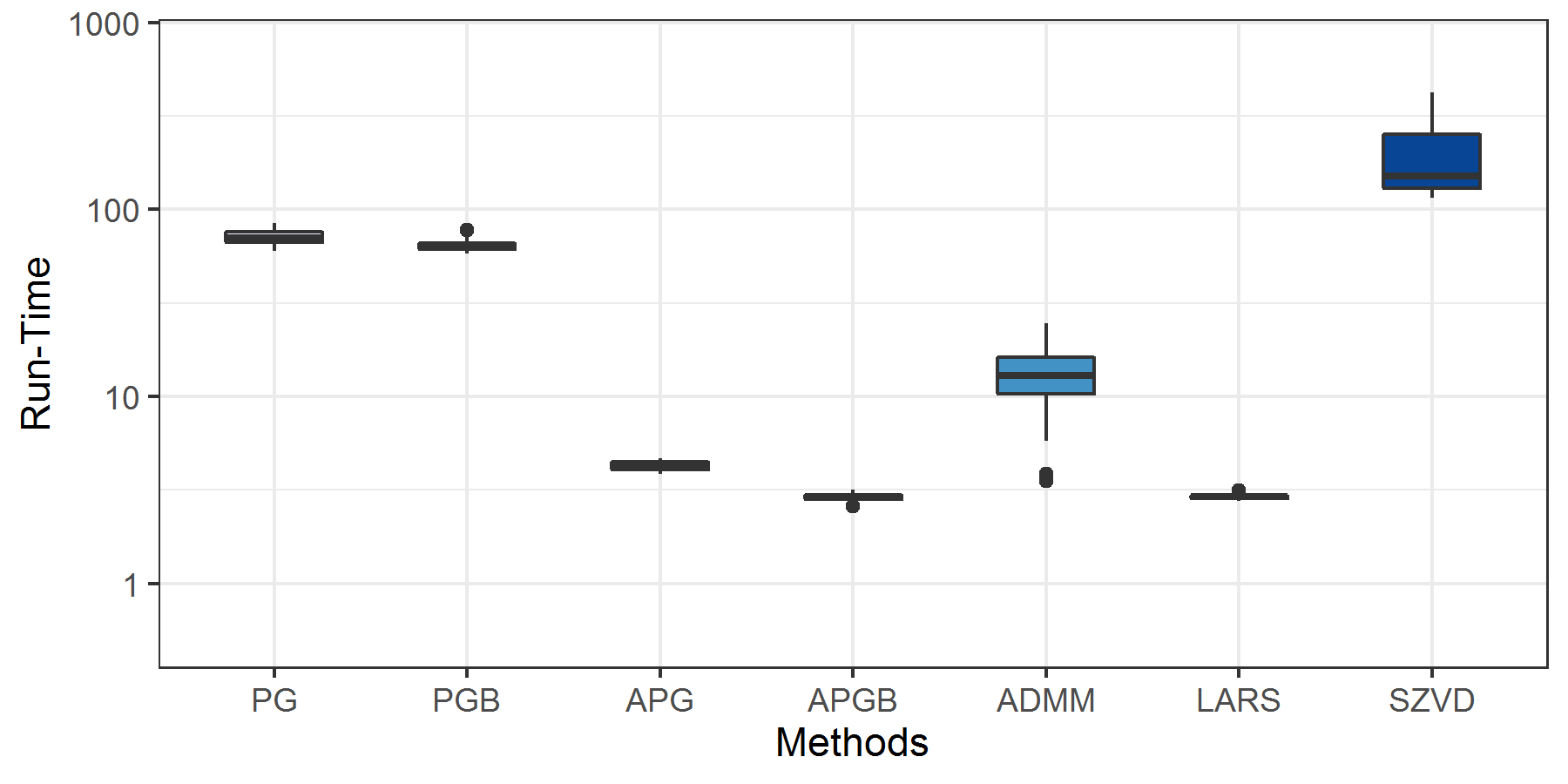}
      \caption{$K=4, r=0.9$}
      \label{fig:(4,9)Time}
  \end{subfigure}
  \caption{Box plots of run-time in seconds (s) for calculation of discriminant vectors and optimal scoring vectors (with error bars of length one standard deviation)
   using APG, APGB, PG, PGB, ADMM, SZVD, and LARS for classifying $4$-class Gaussian data with mean-vectors defined by~\eqref{eq:mu-def1} and covariance vector $\bs{\Sigma}$ for given values of $r$. 
   }
  \label{fig:gauss=Time4}
\end{figure}

%
%

Figures~\ref{fig:gauss=err},~\ref{fig:gauss=err4},~\ref{fig:gauss=card},~\ref{fig:gauss=card4},~\ref{fig:gauss=Time}, and~\ref{fig:gauss=Time4} summarize the results of these experiments;
complete tables of numerical results can be found in the electronic supplemental materials Online Resource 1 and Online Resource 2.
The run-times in Figures~\ref{fig:gauss=Time} and~\ref{fig:gauss=Time4} are reported in seconds $(s)$ and reflect the total computation time required for each method to perform cross validation to tune regularization parameters and train optimal scoring and discriminant vector pairs using the optimized regularization parameters.
The number of nonzero features reported for all methods except SZVD was the count of discriminant vector elements with value not identical to $0$. SZVD does not use the same soft thresholding step as the other methods and returns approximately sparse solutions with entries close to zero, but not exactly zero; we round entries with magnitude at most $10^{-5}$ when reporting cardinality of discriminant vectors obtained using SZVD.


From these simulations, we see that the the classical solution of the SOS problem using LARS tends to yield fewer nonzero predictor variables than the APG and ADMM proximal methods (see Fig.~\ref{fig:gauss=card} and~\ref{fig:gauss=card4}), while requiring comparable computation (see Fig.~\ref{fig:gauss=Time} and~\ref{fig:gauss=Time4}); we note that the average cardinality and run-times observed are typically within one standard deviation of each other (as indicated by the error bars in the plots).
On the other hand, the classifiers provided the LARS heuristic tend to have significantly higher misclassification rate than the proposed proximal methods (see Fig.~\ref{fig:gauss=err} and~\ref{fig:gauss=err4}). We suspect that this is due to our method for choosing the regularization parameter $\lambda$. Here, we choose the value of $\lambda$ that minimizes in-sample classification accuracy, and break any ties by choosing the value of $\lambda$ which yields the sparsest classifier. This appears to cause some overfitting of the classifier to the training data, where the in-sample high accuracy does not generalize to high out-of-sample accuracy.
In an earlier draft of this manuscript, we considered training $\lambda$ via cross-validation in a manner similar to that used for PG, PGB, APG, APGB, and ADMM; this produced a far lower classification rate, comparable to the other methods, although at a significantly higher computational cost, as were required to calculate the full regularization path for each fold and each choice of $\lambda$. In another previous draft, we chose $\lambda$ in our LARS-based analysis to minimize out-of-sample error. Again, this leads to a significant decrease in misclassification rate, although it is unfair to compare this approach to other methods which do not use this information to choose $\lambda$, and may not be used in practical applications where an out-of-sample ground truth is unavailable.

In all experiments, the sparse zero-variance discriminant analysis (SZVD) heuristic  performed poorly in terms of computation, sparsity, and accuracy.
The observed increased run-times agree with that predicted in~Table~\ref{t:total-flops}. On the other hand, the relatively high density and misclassification rate can be explained by the tendency of SZVD to misconverge to an all-zeros solution when $\lambda$ is large. We observe a moderate number of trials where SZVD yields an all-zero solution (with high error rate), and remaining trials generating discriminant vectors with many nonzero entries (corresponding to relatively small $\lambda$).

We further investigate this phenomena in the following sections.

\subsection{Convergence Experiments}
\label{sec:convtrials}

The empirical results of~Section~\ref{sec:Gauss} suggest that the use of our proposed proximal methods (PG, PGB, APG, APGB, ADMM)
for solution of subproblem~\eqref{eq: b prob} can lead to  improvement in terms of classification accuracy and overall run-time over the least angle regression algorithm in certain settings. To further illustrate this improvement, we performed a series of experiments investigating the behaviour of the objective function of~\eqref{eq: prob} during each iteration of these methods.

\begin{figure}[!htbp]
\centering

\begin{subfigure}[b]{0.95\textwidth}
    \includegraphics[width=\textwidth]{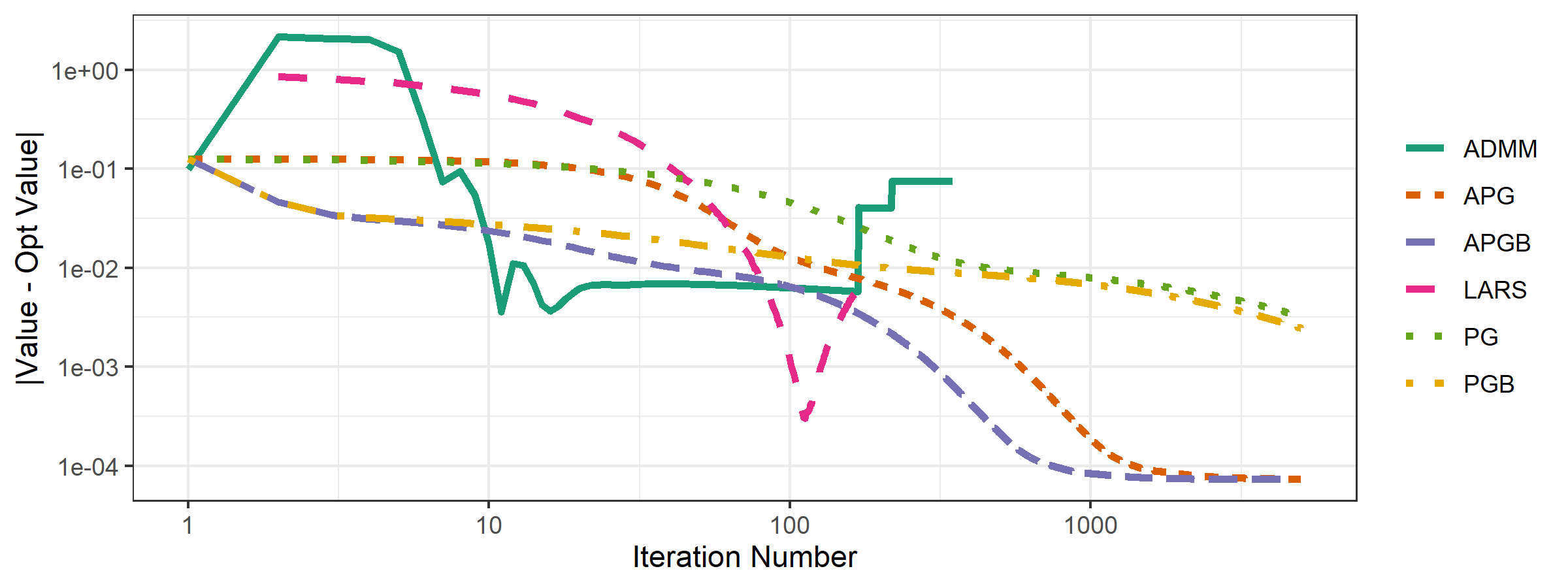}
    \caption{Difference of objective value and optimal value}
    \label{fig:gaussvalit}
\end{subfigure}

\begin{subfigure}[b]{0.95\textwidth}
    \includegraphics[width=\textwidth]{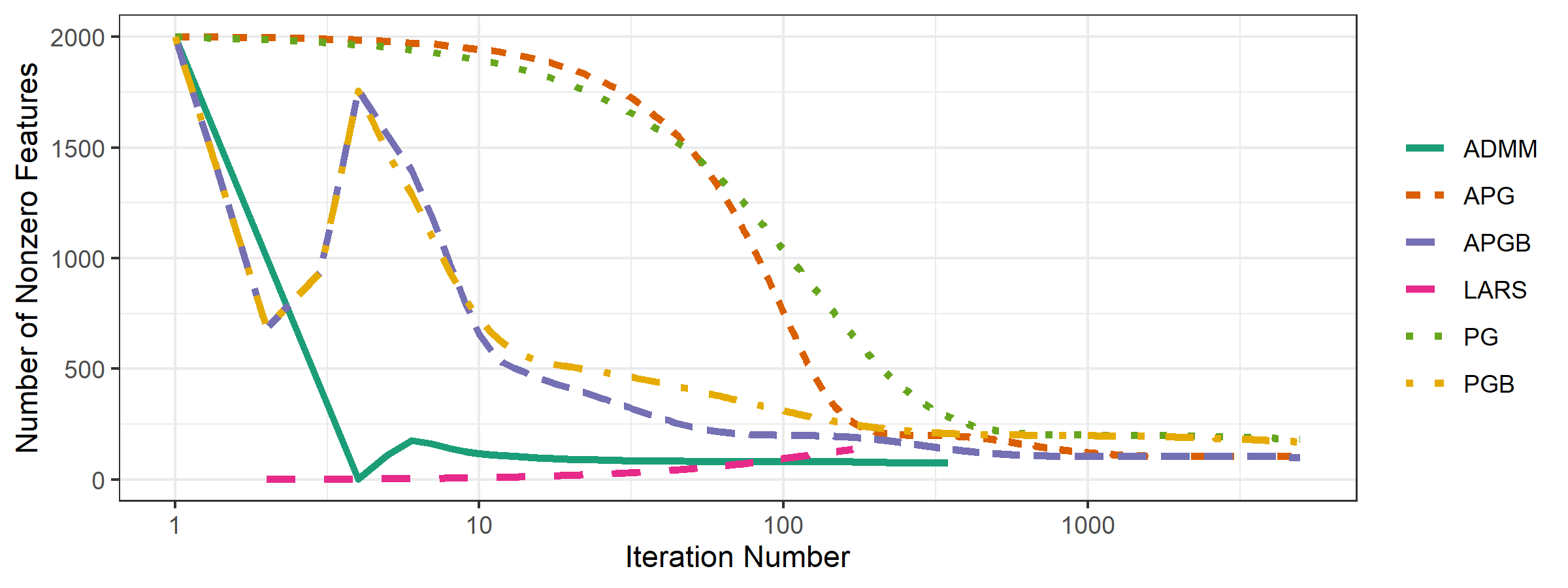}
    \caption{Number of Nonzero Features}
    \label{fig:gaussfeat}
\end{subfigure}

\begin{subfigure}[b]{0.95\textwidth}
    \includegraphics[width=\textwidth]{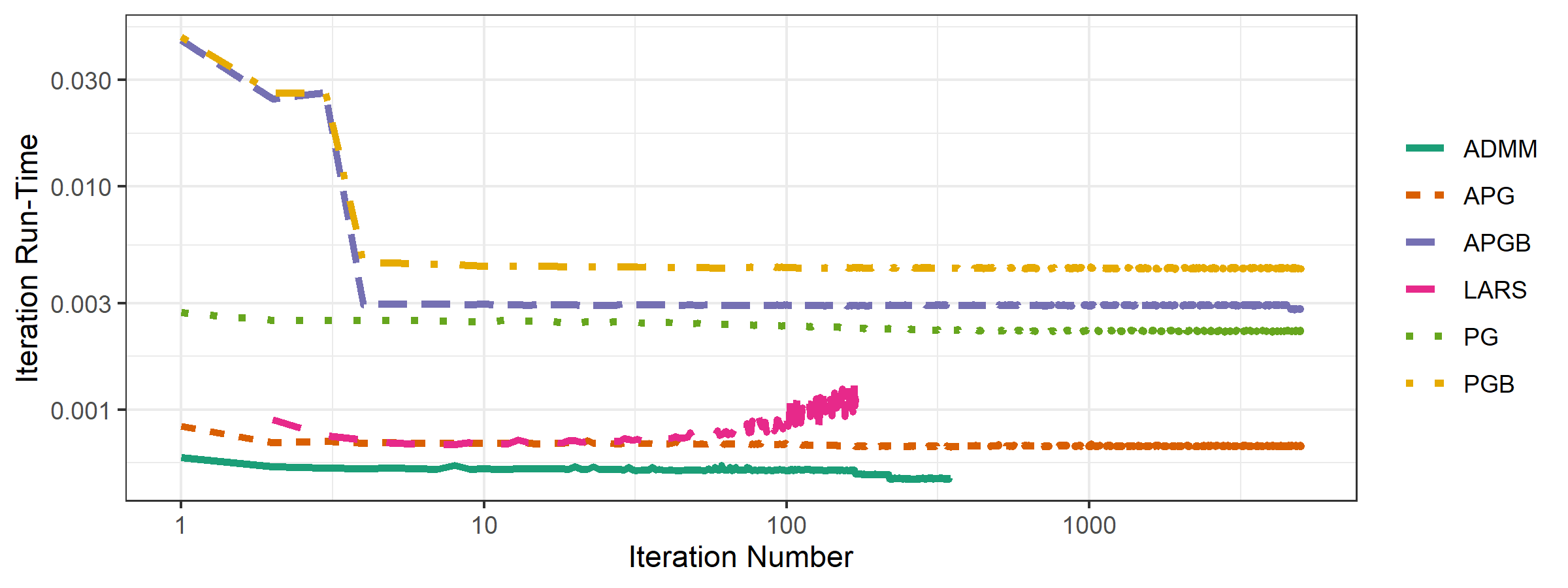}
    \caption{Iteration Run-time}
    \label{fig:gausstime}
\end{subfigure}
\caption{Absolute value of optimality gap, cardinality of iterate, and iteration run-time averaged over 50 calls to solve~\eqref{eq: b prob} for $25$ different Gaussian data sets. We note that LARS terminates in the fewest number of iterations, followed by ADMM.}
\label{fig:gauss=plot}
\end{figure}

We generated random data sets with observations sampled from one of two normal distributions, $N(\bs{\mu_1}, \bs{\Sigma})$ or $N(\bs{\mu_2}, \bs{\Sigma})$. Specifically, we sampled $n=200$ training observations from each of the $p$-dimensional multivariate Gaussian distributions for $p=2000$ with mean vectors $\bs{\mu_1}$ and $\bs{\mu_2} \in \R^p$, respectively, satisfying
\begin{equation}\label{eq:mu-def2}
    [\bs{\mu_i}]_j =
    \begin{cases}
      0.7, & \text{if } \ceil{p/3}(i-1)  < j \le \ceil{p/3} i \\
      0, & \text{otherwise},
    \end{cases}
\end{equation}
for all $j=1,2,\dots, p$, and covariance matrix $\bs{\Sigma} \in \R^{p\times p}$ constructed as Section~\ref{sec:Gauss} with $r = 0.75$.
For each data set, we train nearest centroid classifiers using discriminant vectors obtained by approximately solving~\eqref{eq: prob} with each method used above (PG, PGB, APG, APGB, ADMM, LARS) to solve ~\eqref{eq: b prob}. We stop each method after either $10000$ iterations have been performed or the stopping condition is met with tolerance $10^{-8}$.
We use regularization parameters $\gamma = 10^{-3}$, $\bs{\Omega} = \bs{I}$ and we set $\lambda = 0.25 \bar{\lambda}$, where $\bar\lambda$ is given by~\eqref{eq:barlam}; we stop LARS when a solution with cardinality $0.25 p = 500$ is found. We use augmented Lagrangian parameter $\mu = 2$ in ADMM.
We use the parameters $\bar L = 0.25$ and $\eta = 1.25$ in the back tracking line search.
Note that these stopping conditions are more strict than those used in the previous section (Sect.~\ref{sec:Gauss}); we use these stopping criteria to ensure that a large number of iterations are performed in order to obtain a clearer picture of the convergence properties of the various algorithms.
We generate $25$ problem instances and solve each problem $50$ times using each method to control for natural variation in computation time and problem instances; each algorithm will generate the same sequence of iterates and solution each time called for each problem (up to sign changes due to random initialization of $\bs{\theta}$).
We validate performance of our classifiers using balanced sets of $200$ testing observations sampled from $N(\bs{\mu_1}, \bs{\Sigma})$ or $N(\bs{\mu_2}, \bs{\Sigma})$.


We chose these data sets to isolate the relationship between the performance of our proposed algorithms for solving~\eqref{eq: b prob} and the overall performance of the proposed block coordinate descent method and nearest centroid classification.
Recall that, in the $K=2$ class case, we calculate exactly one discriminant and scoring vector pair $(\bs{\beta}, \bs{\theta})$ before Algorithm~\ref{alg: BCD} converges. By restricting our focus to a case where we solve exactly one instance of subproblem~\eqref{eq: b prob} using each method during each trial, we can directly compare the behavior of algorithms for solving~\eqref{eq: b prob}.

We recorded the objective value of~\eqref{eq: b prob} and the cardinality of the current iterate $\bs{\beta}^i$ following the $i$th iteration of each algorithm for each method and data set, as well as the run-time of the $i$th iteration. We recorded the value of the augmented Lagrangian function for the ADMM, instead of the objective function value, to indicate the trade-off between optimizing the objective and forcing feasibility ($\x = \y = \bs\beta$) of the split decision variables $\x$ and $\y$. The results of these experiments are summarized in Figure~\ref{fig:gauss=plot} and Table~\ref{tab:conv-summary}.

It is apparent from the results of these simulations that iterations of LARS are more expensive than those of the proximal methods APG and ADMM, especially when the cardinality of the iterate is relatively large.
The per-iteration cost of LARS tends to increase since LARS is an active set method and gradually adds elements to the active set each iteration; the computational complexity is an increasing function of iteration number due to this corresponding increase in cardinality each iteration.
We should also note that LARS tended to terminate more quickly (in terms of number of iterations) than the other methods, but that this is largely a consequence of the more conservative stopping criteria for the proximal methods. In particular, APG, APGB, ADMM all generate iterates with smaller optimality gap than  the suboptimality at termination of LARS iterates, within fewer iterations.
On the other hand, the per-iteration cost of each proximal  method is largely consistent across iterations. The ADMM tended to terminate in fewer iterations and yield sparser solutions than the other proximal methods; the per-iteration cost of the ADMM is also less than (or comparable to) the other proximal methods in all trials. The cardinality of iterates generated by ADMM also decreased much more quickly than those generated by the other proximal methods; this may be due to the fact that the soft thresholding operator is applied directly to the iterate $\bs{y}^i$, rather than following a gradient step applied to the previous iterate or a weighted average of the previous two iterates as in the proximal gradient and accelerated gradient methods. 
The trajectory of suboptimality of ADMM iterates is not a smooth descent like the other proximal methods since we plot the absolute value of the gap between the augmented Lagrangian value and the average optimal value. Here, the augmented Lagrangian value of the ADMM iterates tend to initially increase followed by a several iterates of sharp decrease (with value typically less than the optimal value), followed by gradual increase to the optimal value; plotting the absolute value of the optimality gap allows us plot using logarithmic scale.
Similarly, the value of LARS iterates tends to decrease, and then increase prior to termination. This is a consequence of the stopping criteria, i.e., terminating when a sufficiently dense solution is found; here, we use the iterate at termination rather than the iterate with minimum value among LARS iterates as our discriminant vector when calculating misclassification rates and cardinality in Table~\ref{tab:conv-summary}.

These trials also suggest some modest value in the use of back tracking line searches. In each set of trials, the proximal gradient methods with back tracking line search terminated in fewer iterations than with a constant step size.
However, the additional cost of performing the line search frequently caused the overall computational time of the back tracking methods to exceed that of the constant step size methods.
This additional cost observed here is more dramatic than that observed in Section~\ref{sec:Gauss}. We remind the reader that the reported computation time for the experiments of Section~\ref{sec:Gauss} includes all computation to perform cross validation to train the regularization parameter $\lambda$;
the discrepancy between the timing results in Section~\ref{sec:Gauss} and here highlights a potential sensitivity of Algorithm~\ref{alg: BCD} to choice of $\lambda$, and variation in training data (in this case with respect to training and validation splits in the cross validation scheme).

\begin{table}[t]
    \centering
    \begin{tabular}{ccccc} \toprule
    Method &  Run-times (s) & Cardinality & Iterations \\ \midrule 
         PG&43.38 &151.92&10000 \\
PGB&63.98&141.52&1000 \\
APG&28.79&100.88&10000 \\
APGB&33.87&100.92&6633.92 \\
ADMM&1.43&79.6&484.4 \\
LARS&0.21&145.88&195.72 \\ \bottomrule
    \end{tabular}
    \caption{Average run-time, cardinality of solution, and number of iterations before termination averaged over 25 Gaussian data sets, solved 50 times using each method. No methods returned classifiers yielding out-of-sample classification error in any trial. PG, PGB, APG failed to return a $10^{-8}$ suboptimal solution within $10000$ iterations, while ADMM and LARS both converge within 500 iterations on average.}
    \label{tab:conv-summary}
\end{table}

\subsection{Differences between discriminant vectors due to algorithm choice}
At this point, we should note that Subproblem~\eqref{eq: b prob} is strongly convex by the choice of regularization parameter $\bs\Omega = \I$ in all experiments considered so far.
As a consequence,~\eqref{eq: b prob} has a \emph{unique} solution. One would naively expect Algorithm~\ref{alg: BCD} to generate the same discriminant vector regardless of choice of algorithm for solving~\eqref{eq: b prob} if all other input parameters are chosen consistently. However, this is not observed in practice.
We can see from Figure~\ref{fig:gauss=plot} that the compared algorithms generate different sequences of iterates whose limit is the unique optimal solution of~\eqref{eq: b prob}. We terminate each algorithm prematurely at a suboptimal solution, which varies based on our choice of algorithm.

To illustrate this phenomena, we recorded
the iterates generated by Algorithm~\ref{alg: BCD} using APG and ADMM with $\mu=2$; we restricted our analyses to these methods to simplify our figures and similar behaviour would be observed using other algorithms for solving~\eqref{eq: b prob}. We sampled balanced training and testing sets of $n=200$ observations of dimension $p=250$ from $N(\bs{\mu_1}, \bs{\Sigma})$ and $N(\bs{\mu_2}, \bs{\Sigma})$, where $\bs{\mu_i}$ is defined according to \eqref{eq:mu-def2} and $\bs{\Sigma}$ is constructed as in the previous sections.
We recorded the iterates $\bb^i_{APG}, \bb^i_{ADMM}$ generated by each of APG and ADMM (with $\mu = 2$) following $i = 10$, $50$, $100$, $500$, $1000$, $2500$, $5000$, $7500$, $10000$ iterations for solving~\eqref{eq: b prob} with regularization parameters $\gamma = 10^{-3}$, $\bs{\Omega} = \I$, and $\lambda = 0.05\bar\lambda$.
We report the norm difference $\|\bb_{APG}^i - \bb_{ADMM}^i\|$, cardinality of $\bb_{APG}^i$ and $\bb_{ADMM}^i$, and out-of-sample classification error for nearest centroid classification following projection onto $\bb_{APG}^i$ and $\bb_{ADMM}^i$ for each value of $i$ in Table~\ref{tab:comparisons}.
We round any entry of a discriminant vector with magnitude less than $10^{-8}$ to $0$ for the purposes of calculating cardinality.
The calculated discriminant vectors are qualitatively similar but still differ significantly, particularly in cardinality.
Specifically, the discriminant vectors generated by ADMM are much sparser than those calculated by APG, particularly early in the iterative process.
This suggests that ADMM is converging to the unique optimal solution of~\eqref{eq: b prob} more quickly than APG for this particular data set and choice of augmented Lagrangian penalty parameter $\mu$; we investigate the sensitivity of ADMM to the choice of $\mu$ further in Section~\ref{sec:ADMMvsAPG}.
However, we should also note that both methods converge to essentially identical solutions within $10000$ iterations. We should also note that all of the discriminant vectors yield classifiers with zero out-of-sample classification errors.




\begin{table}[!htbp]
\begin{center}
\begin{tabular}{c c cc cc} \toprule 
\multirow{2}{*}{Iteration} & Norm & \multicolumn{2}{c}{Cardinality} \\
    &  Difference & APG & ADMM\\\midrule 
    10 & 0.1462 & 248 & 119  \\ 
  50 & 0.2468 & 240 & 78  \\ 
  100 & 0.2854 & 217 & 65  \\ 
  500 & 0.2414 & 103 & 51  \\ 
  1000 & 0.1464 & 70 & 49  \\ 
  2500 & 0.0187 & 52 & 49  \\ 
  5000 & 0.0103 & 49 & 47  \\ 
  7500 & 0.0062 & 49 & 47  \\ 
  10000 & 0.0027 & 48 & 47 \\ \bottomrule
\end{tabular}
\end{center}
\caption{Norm difference $\|\bb_{APG}^i - \bb_{ADMM}^i\|$, and cardinality of $\bb_{APG}^i$ and $\bb_{ADMM}^i$ for
$i = 10$, $50$, $100$, $500$, $1000$, $2500$, $5000$, $7500$, $10000$. No iterates produced out-of-sample classification error for nearest centroid classification following projection onto $\bb_{APG}^i$ and $\bb_{ADMM}^i$. }
\label{tab:comparisons}
\end{table}

\subsection{Scaling Experiments}
\label{sec:scaling}

We next performed a series of simulations to investigate the relationship between the performance of our algorithms and the number of features in the underlying data set.
For each
\[   p \in \{250, 300, \dots,  500,
        600, \dots, 1000,
        1250,  \dots, 2500,
        3000, 3500\},\]
we sample $50$ data sets, containing two classes drawn from the Gaussian distributions as described in Section~\ref{sec:Gauss}.
For each value of $p$, we sample $\ceil{p/10}$ training and testing observations from each class.
Items in each class are sampled from a multivariate Gaussian distribution with means $\bs{\mu_1}, \bs{\mu_2} \in \R^p$ defined by~\eqref{eq:mu-def2}.
Both class distributions have covariance matrix $\bs{\Sigma}$ having diagonal entries equal to $1$ and off-diagonal entries equal to $0.75$.

We apply nearest centroid classification following projection onto the approximate solution of~\eqref{eq: prob} given by the proximal gradient method, accelerated proximal gradient method (with and without backtracking line search) (PG, PGB, APG, APGB), alternating direction method of multipliers (ADMM), and least angle regression method (LARS).
We perform exactly one full iteration of the block coordinate descent method~(Algorithm~\ref{alg: BCD}) for each $\bs{\beta}$-subproblem solver; as before, we should expect Algorithm~\ref{alg: BCD} to terminate after one full iteration, since the optimal choice of $\bs{\theta}$ is obtained in the first iteration (in the absence of numerical error).
We choose the regularization parameters  in~\eqref{eq: prob} to be $\gamma = 10^{-3}$, $\bs{\Omega} = \bs{\I}$, and choose $\lambda$ from $\bar{\lambda}/2^c$, $c \in \{3,2,1,0,-1\}$, where $\bar{\lambda}$ is defined as in~\eqref{eq:barlam} using $5$ fold cross validation. 
We choose the value of $\lambda$ with minimum number of classification errors among those which generated discriminant vectors with at most $0.025 p$ nonzero features. We used the augmented Lagrangian penalty parameter $\mu = 2$ in ADMM.
We use the stopping tolerance $10^{-4}$ and a maximum of $1000$ iterations during the cross validation phase. 
We used backtracking parameters $\bar L = 0.25$ and $\eta = 1.25$ in each run. After $\lambda$ is chosen via cross validation, we solved~\eqref{eq: prob} using the full training set. We terminated the proximal methods when their stopping condition is met with tolerance $10^{-4}/\sqrt{p}$ or $5000$ subproblem iterations have been performed. We use a less strict stopping tolerance than earlier analyses to minimize the number of iterations performed when $p$ is large, and, thus, decreasing overall run-time of the experiment.
 
The LARS heuristic was terminated after a solution was obtained containing $0.25p$ nonzero features or the convergence criterion is satisfied with tolerance $10^{-4}/\sqrt{p}$.
The decision to choose $\lambda$ via cross validation rather than a specific value (as in the previous section) was made to account for the fact that each call of the LARS heuristic calculates a full regularization path for~\eqref{eq: prob}. We perform cross validation to choose $\lambda$ to more accurately compare the computational resources needed by PG, PGB, APG, APGB, ADMM, and LARS to calculate a full regularization path.

\begin{figure}[!htbp]
    \centering
    \begin{subfigure}[b]{0.9\textwidth}
        \includegraphics[width=\textwidth]{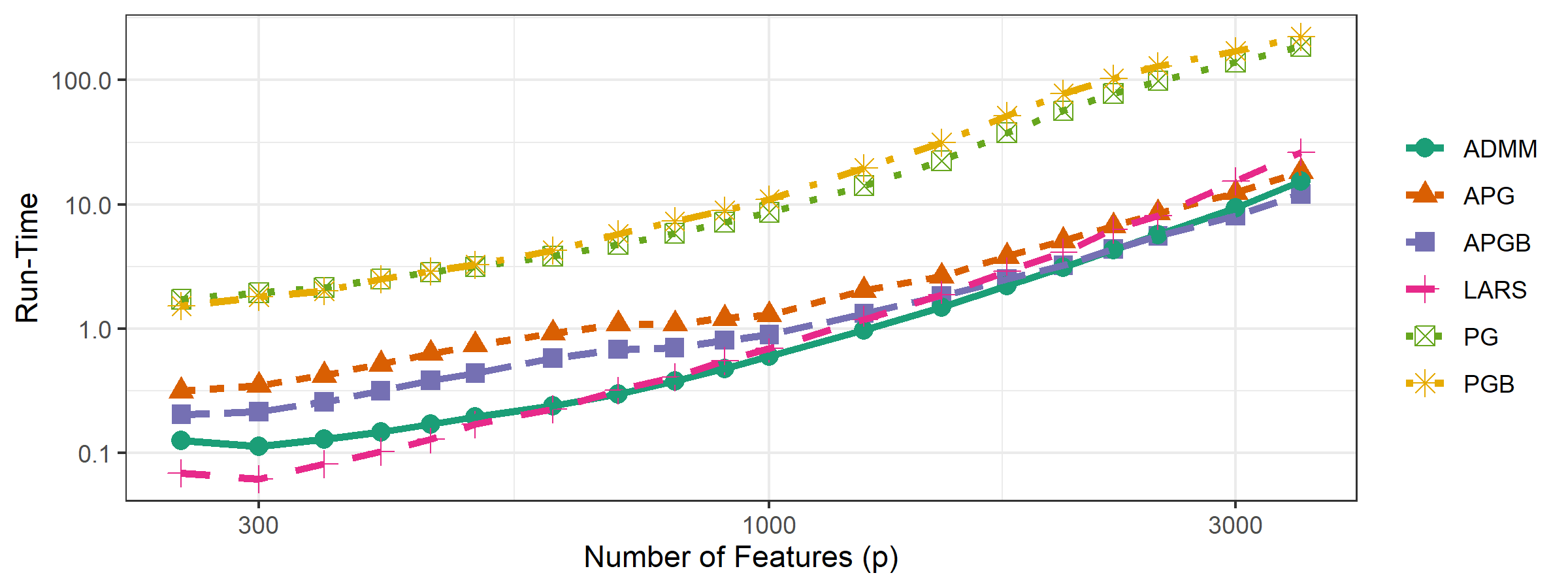}
        \caption{Time}
        \label{fig:scalTime}
    \end{subfigure}

    \begin{subfigure}[b]{0.9\textwidth}
          \includegraphics[width=\textwidth]{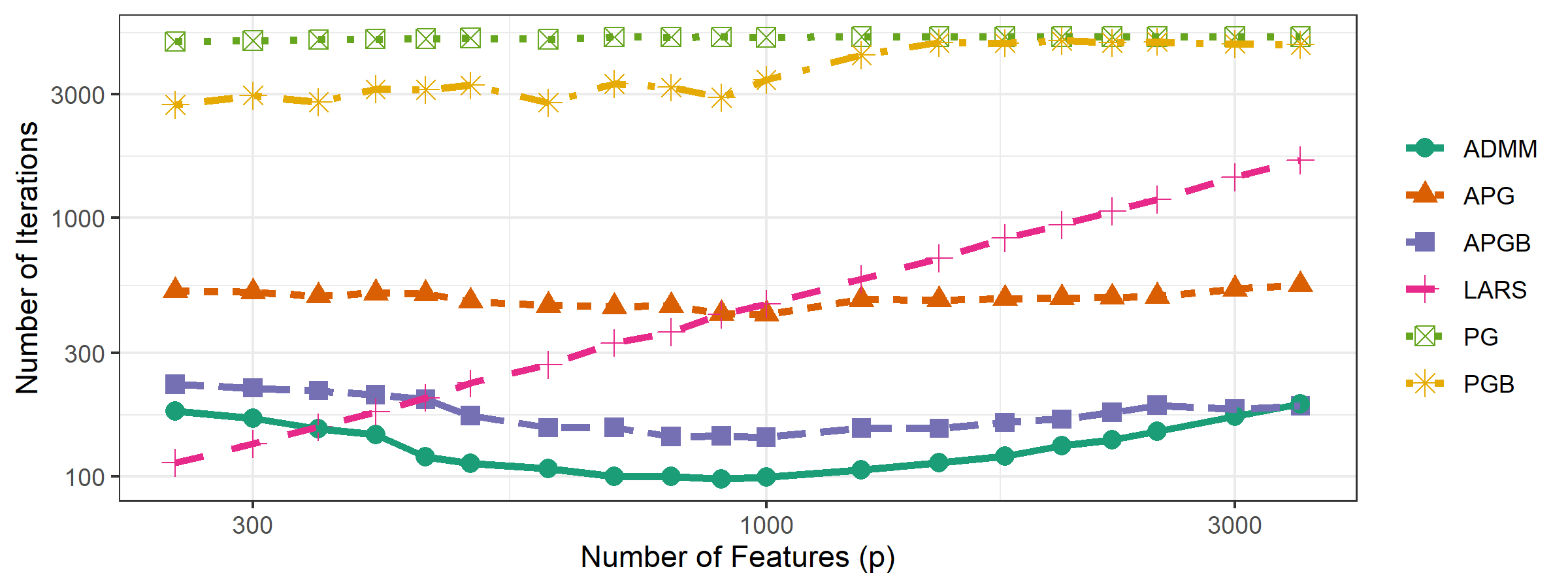}
        \caption{Number of iterations}
        \label{fig:scalIter}
    \end{subfigure}

    \begin{subfigure}[b]{0.9\textwidth}
        \includegraphics[width=\textwidth]{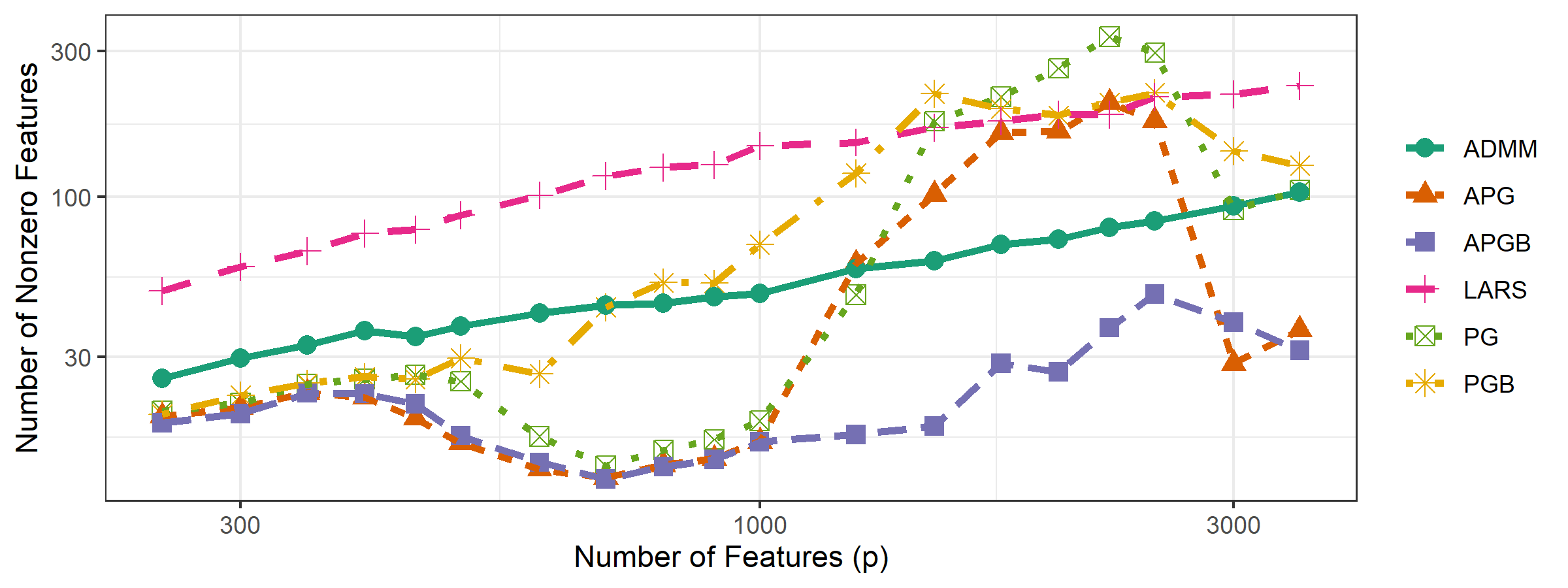}
        \caption{Cardinality}
        \label{fig:scalFeat}
    \end{subfigure}

    \caption{Average run-time (in seconds), number of iterations performed, and cardinality of returned solution plotted as a function of number of features $p$; values of each statistic were averaged across $50$ trials. All axes use logarithmic scale.}\label{fig:scaling}
\end{figure}

Figure~\ref{fig:scaling} summarizes the results of these simulations. We note that the accelerated proximal gradient and alternating direction method of multipliers consistently outperform the traditional LARS method in terms of run-time and number of iterations performed with $p$ is large. 
In particular, these methods require significantly fewer iterations and terminate in less time than LARS for $p > 1000$. The approximate slopes of the plots of average run-times indicate that this phenomena will only be amplified as we increase $p$ further, since the slope of the curve for LARS exceeds that of APG, APGB, and ADMM. This largely agrees with the phenomena predicted by the operation counts discussed in~Section~\ref{sec: comp}.
On the other hand, the proximal gradient methods (PG, PGB) typically do not converge within the maximum number of iterations, which undermines any improvements to computational complexity due to their relatively inexpensive iterations.

We should note that we expect the cardinality of our obtained discriminant vectors to increase as a function of $p$, since the size of the blocks of entries with elevated values in the class-means $\bs{\mu_1}$ and $\bs{\mu_2}$ grows linearly with $p$.
This agrees with the plotted curves in Figure~\ref{fig:scalFeat}.
This also explains the linear increase in number of iterations before termination of the LARS method, since the number of iterations depends on the desired number of nonzero entries; in turn, this, along with increase in per-iteration cost as $p$ increases, explains the increase in total run-time of LARS as $p$ increases. Finally, the cardinality of returned discriminant vectors scales similarly for the four proximal gradient methods (PG, PGB, APG, APGB) and LARS. The discriminant vectors returned by ADMM consistently contain fewer nonzero entries than the four other methods, which agrees with the behaviour observed in the Section~\ref{sec:convtrials}.

This comparison is somewhat unfair to the LARS heuristic since the number of nonzero discriminant vector features, which should correlate with the number of large magnitude entries of the difference of means $\bs{\mu}_1 - \bs{\mu}_2$, is increasing linearly with $p$. Thus, we seek relatively dense discriminant vectors, scaling linearly with $p$. This is exactly the situation where we should expect ADMM and APG to be more efficient than LARS according to~\eqref{eq:APG-best}.

However, if the number of nonzero predictor variables is constant or grows relatively slowly as $p$ increases, we should expect LARS to be more efficient than APG and ADMM (again, according~\eqref{eq:APG-best}).
To confirm this empirically, we repeated this experiment but chose the mean vectors $\bs{\mu}_1$ and $\bs{\mu}_2$ to differ from zero in the first $100$ and second $100$ features, respectively, for all values of $p$. We stopped the LARS algorithm when an iterate with at least $200$ nonzero features was found or stopping tolerance $10^{-4}/\sqrt{p}$ is met; we then chose the iterate with minimum out-of-sample classification rate for our discriminant vector. All other experimental settings were kept identical to that in the previous discussion. We omit the unaccelerated proximal gradient methods PG/PGB since the previous analysis established that they are significantly less than the other methods under comparison.

\begin{figure}
    \centering
    \begin{subfigure}[b]{0.9\textwidth}
        \includegraphics[width=\textwidth]{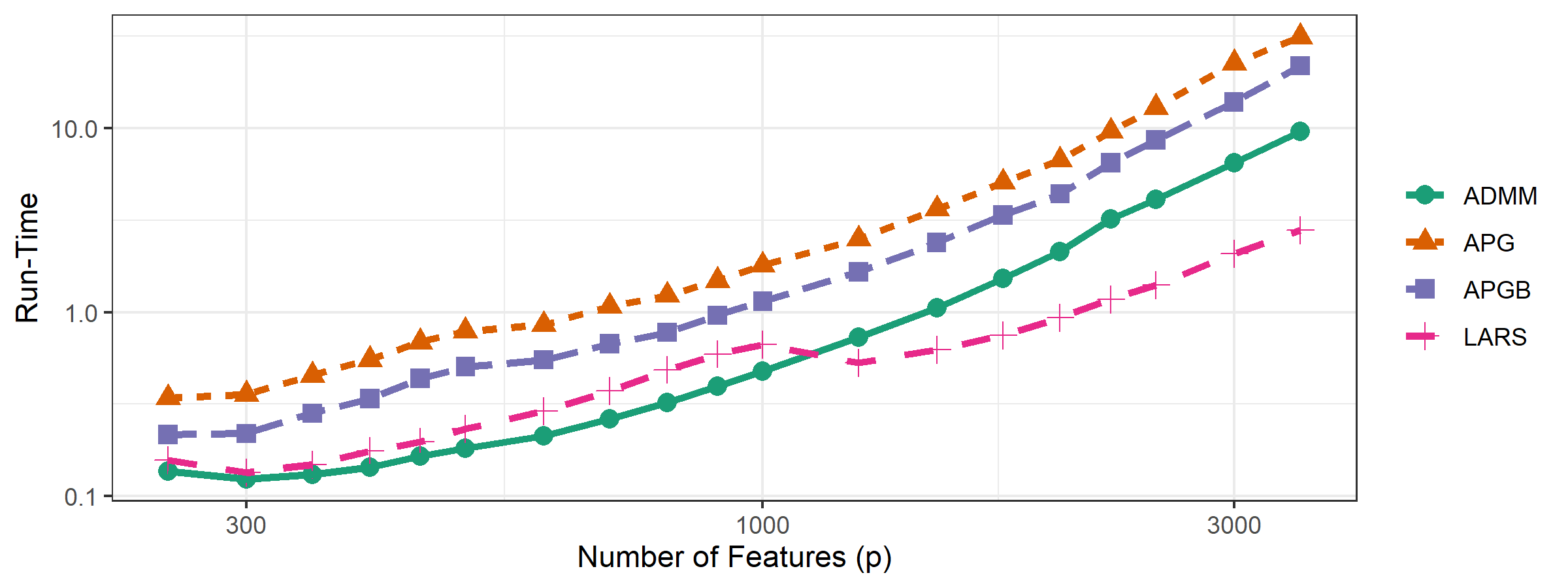}
        \caption{Time}
        \label{fig:spscalTime}
    \end{subfigure}

    \begin{subfigure}[b]{0.9\textwidth}
          \includegraphics[width=\textwidth]{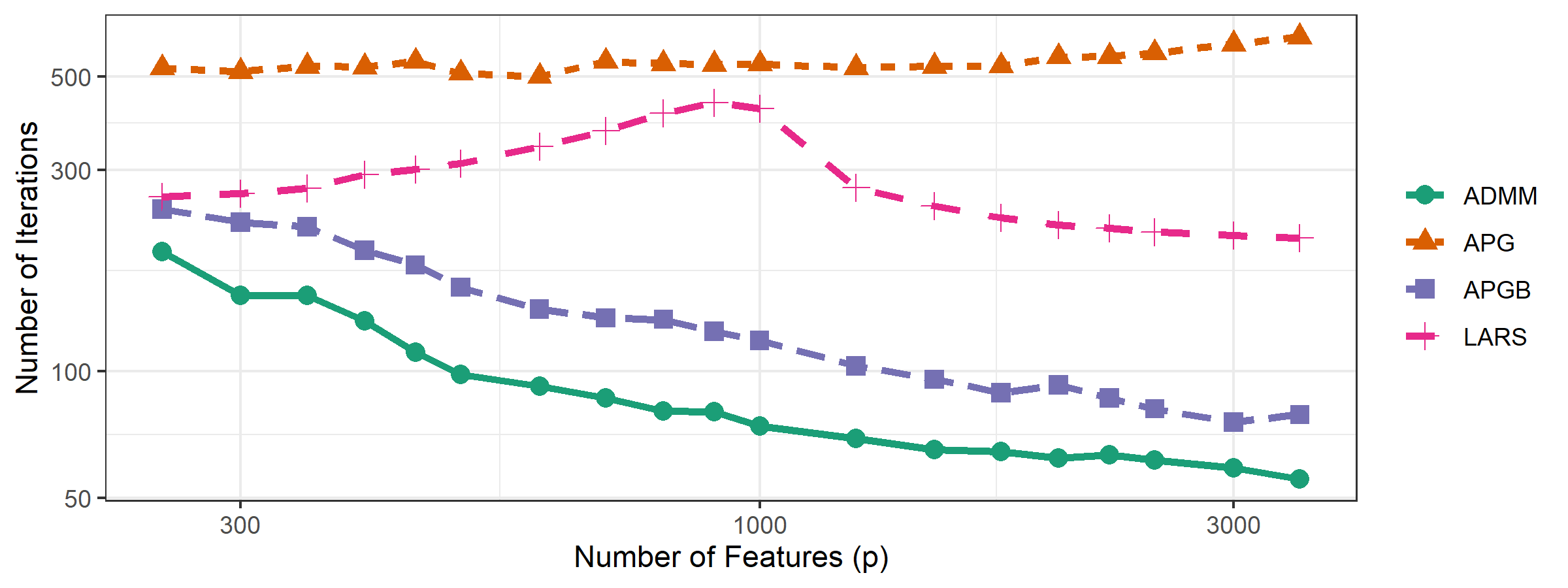}
        \caption{Number of iterations}
        \label{fig:SPscalIter}
    \end{subfigure}

    \begin{subfigure}[b]{0.9\textwidth}
        \includegraphics[width=\textwidth]{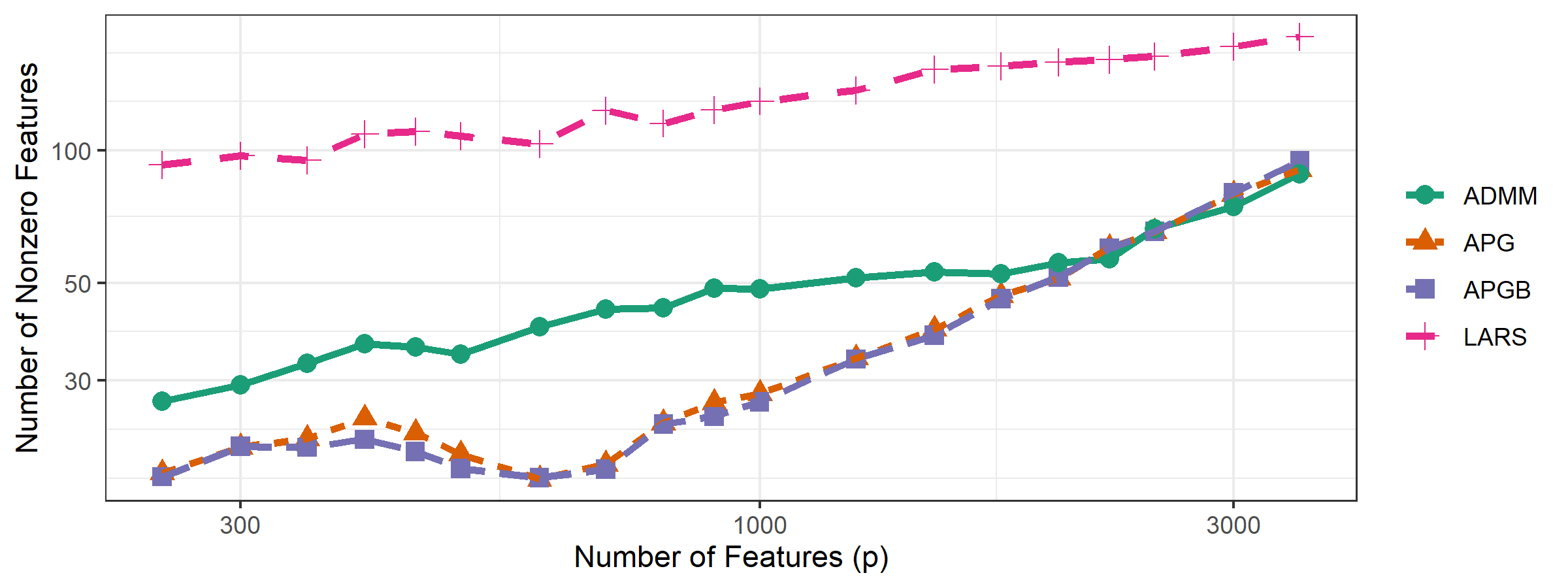}
        \caption{Cardinality}
        \label{fig:SPscalFeat}
    \end{subfigure}
    \caption{Average run-time (in seconds), number of iterations performed, and cardinality of returned solution for SOS problem with fixed number of nonzero entries of $\bs{\mu}_1 - \bs{\mu}_2$ ($200$) plotted as a function of number of features $p$. Values of each statistic were averaged across $50$ trials. All axes use logarithmic scale.}
    \label{fig:sparse-scaling}
\end{figure}

The results of this trial can be found in Figure~\ref{fig:sparse-scaling}. 
As expected, LARS tends to more efficient than APG and ADMM in this setting.
This suggests that one must be careful to consider problem dimension and the underlying properties of the desired classifiers (e.g., sparsity) when choosing a heuristic for solving~\eqref{eq: b prob}. Specifically, if $p$ is small or we seek a sparse discriminant vector with relatively few nonzero entries, then we should use LARS to solve~\eqref{eq: b prob}; otherwise, we should favor APG or ADMM.

\subsection{Classification of Real-World Data} \label{sec:real-data}

We performed similar analyses using data sets drawn from the UC Riverside Time-Series Clustering and Classification data repository~\cite{keogh2006ucr} to verify that the behaviour observed with synthetic data is also observed when classifying real-world data.
We applied each of the methods APG, ADMM, SZVD, and LARS to learn classification rules for each of the data sets in the UCR repository with number of training samples $n$ less than the number of predictive features $p$; this yielded a collection of $63$ data sets to analyze. We omit PG and the backtracking methods from this analysis because our analysis of synthetic data established that these methods are typically less effective than the remaining approaches.
We use each remaining sparse discriminant analysis heuristic to obtain $q = K-1$ sparse discriminant vectors
and then perform nearest-centroid classification after projection onto the subspace spanned
by these discriminant vectors.

In all experiments,
we set $\gamma = 10^{-3}$ and $\bs\Omega$ to be the
$p\times p$ identity matrix $\bs\Omega = \bs I$.
We choose $\lambda$ using $5$ fold cross validation from the set
of potential $\lambda$ of the form~$ \bar\lambda/2^{c}$ for $c=3,2,1,0,-1$
with $\bar\lambda$ defined by~\eqref{eq:barlam}
to be the value of $\lambda$ with fewest average number
of misclassification errors over training-validation splits
amongst all $\lambda$ which yield discriminant
vectors containing at most $25\%$ nonzero entries.
During the cross validation stage, we terminate each proximal algorithm in the inner loop after $1500$ iterations
or a $10^{-4}$ suboptimal solution is obtained.
After $\lambda$ is chosen via cross validation we solve~\eqref{eq: prob} using the full training data set. Here we terminate each proximal algorithm in the inner loop after $2000$ iterations or a $10^{-6}$ suboptimal solution is found and the outer loop
is stopped after one iteration if $K = 2$ or a maximum number of $250$ iterations
or a $10^{-3}$ suboptimal solution has been found otherwise.
The augmented Lagrangian parameter $\mu = 2$ was used in ADMM.

We calculate the full regularization path for $\lambda$ using the LARS algorithm, choosing the set of discriminant vectors from the full path with maximum in-sample classification accuracy\footnote{This choice of method of tuning~$\lambda$ differs from that used in earlier versions of this manuscript, where we chose $\lambda$ via cross-validation or based on out-of-sample classification rate. The results of this set of analyses largely agree with those of the earlier manuscripts, except with significantly decreased run-times for LARS when compared to the approach applying cross-validation, and modestly increased misclassification error when compared to those trained using out-of-sample accuracy.}.
We terminate each call to LARS in the inner loop after a solution with $0.25p$ nonzero entries is found, or stopping tolerance $10^{-6}$ is met, or $3000$ iterations are performed.

We use the augmented Lagrangian parameter $\beta = 5$ and choose the regularization parameter $\gamma$ in SZVD
from the exponentially spaced grid  $\bar{\gamma}/2^c$ for $c = 3,2,1,0,-1, $ with $\bar\gamma$ defined by~\eqref{eq:bargam}
using $5$ fold cross-validation;
$\gamma$ is chosen to minimize average validation set misclassification error amongst
all sets of discriminant vectors with at most $35\%$ nonzero entries.
We stop SZVD after a maximum of 250 iterations or a solution satisfying the stopping
tolerance of $10^{-5}$ is obtained.
These parameters were chosen experimentally to ensure that all methods converge for each data set in the benchmarking set. It is likely that some variation in performance of the heuristics across data sets could be eliminated by more carefully tuning parameters for each individual data set. We assigned the same choice of parameters for each experiment to avoid having to tune parameters separately for all 63 data sets.

\begin{figure}[!htbp]
  \centering
  \includegraphics[width=\textwidth]{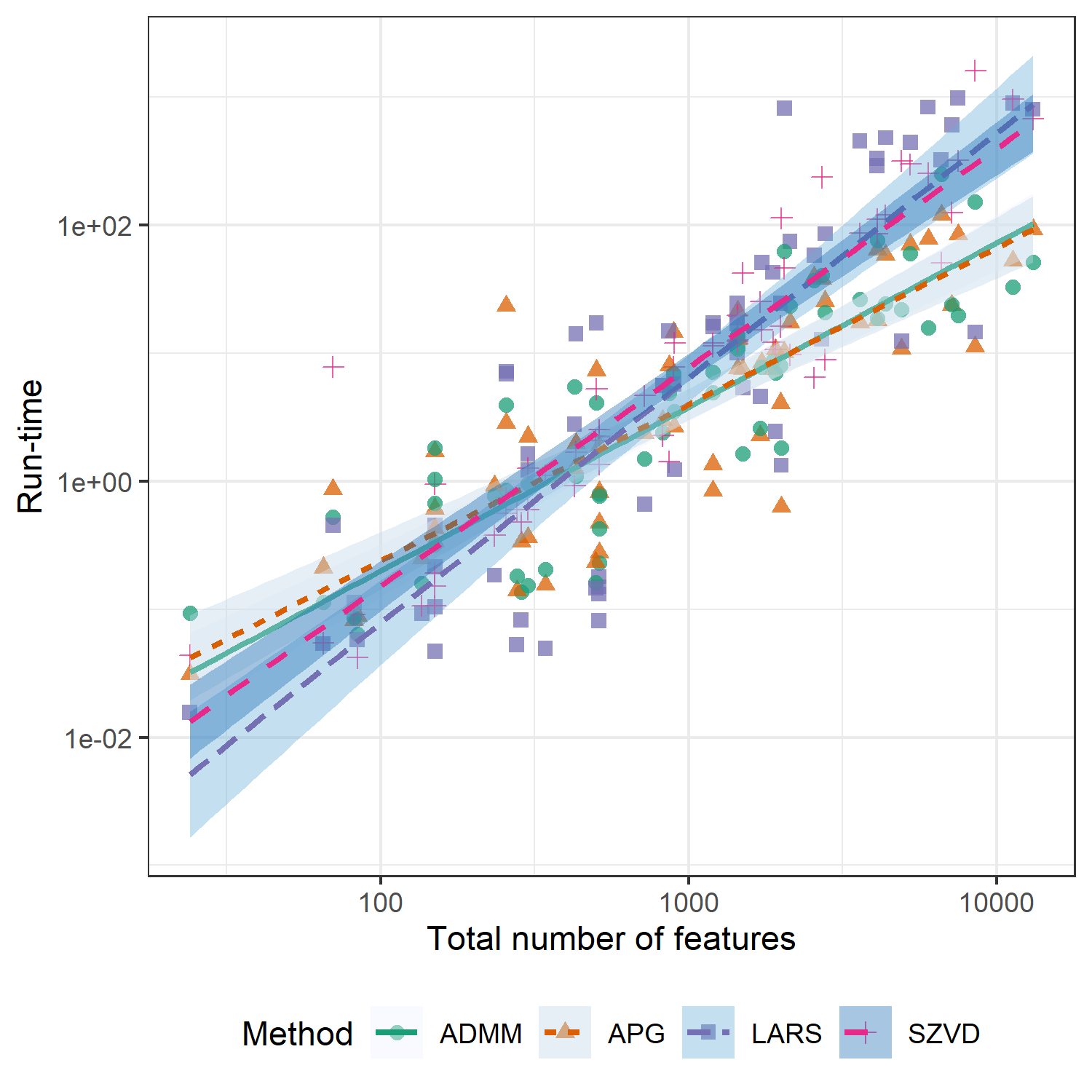}
  \caption{Plot of run-time in seconds of each sparse discriminant heuristic as a function of the total number of predictive features ($p$ times number of discriminant vectors). We also fit a line to the set of run-times for each method and $95\%$-confidence intervals for these linear models. Both axes use logarithmic scale.}
  \label{fig:realtime}
\end{figure}

\renewcommand{\time}{\mathrm{time}}
\begin{figure}[!htbp]
  \centering
  \begin{subfigure}[b]{0.48\textwidth}
    \includegraphics[width=\textwidth]{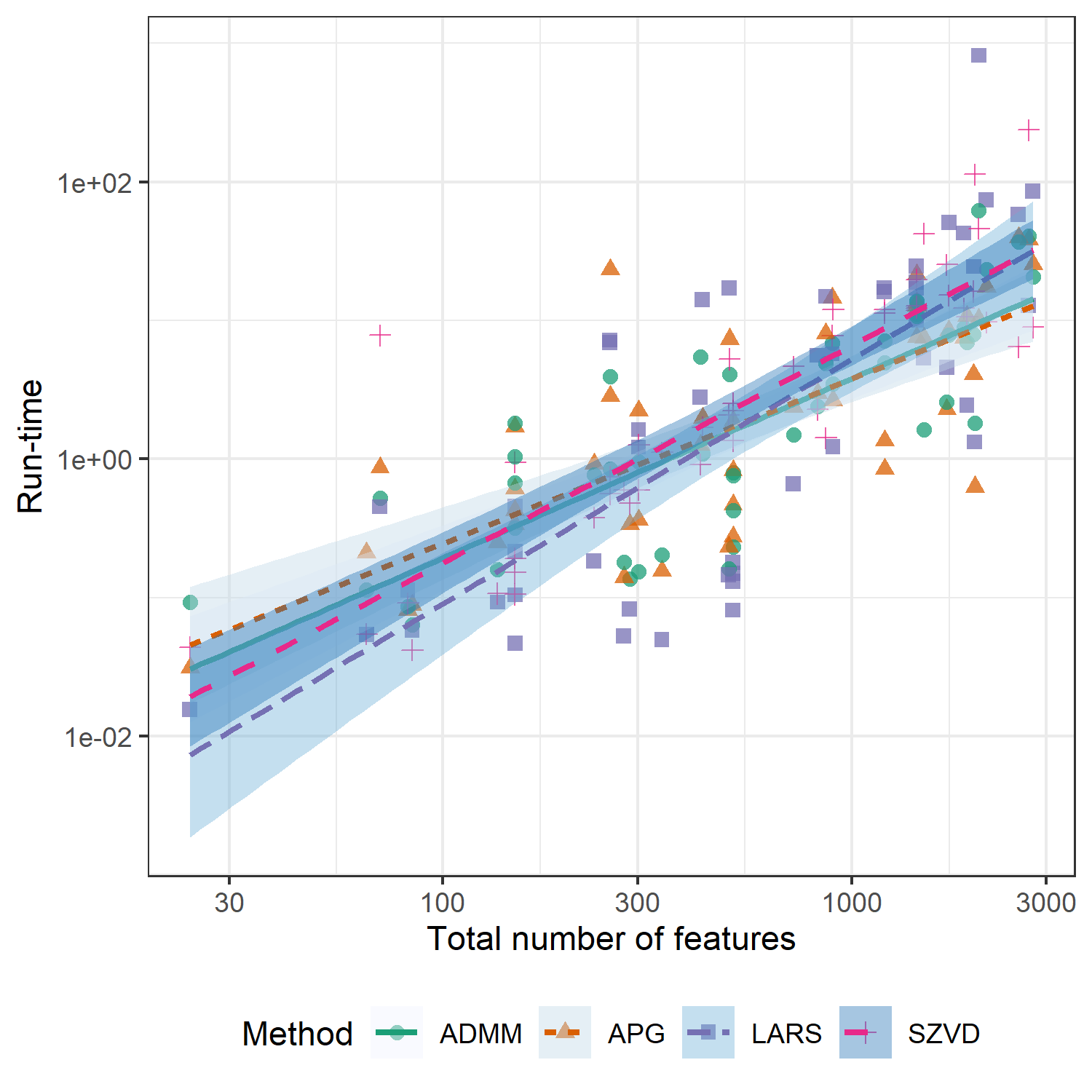}
    \caption{Run-times for Small-scale Data}
    \label{fig:smallRT}
  \end{subfigure}
  ~
  \begin{subfigure}[b]{0.48\textwidth}
    \includegraphics[width=\textwidth]{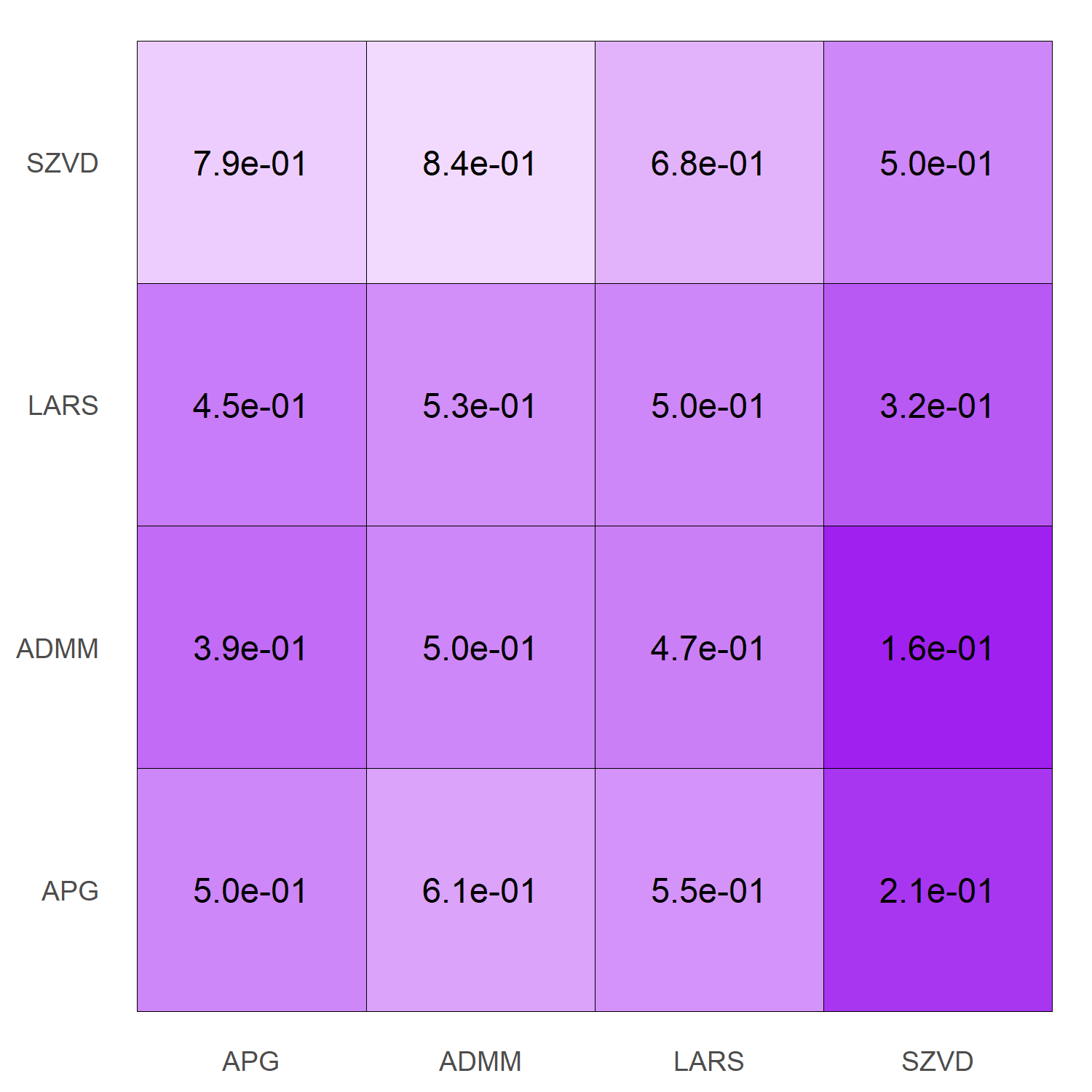}
    \caption{Hypothesis Tests for Small-scale Data}
    \label{fig:p-timeSmall}
  \end{subfigure}

  \begin{subfigure}[b]{0.48\textwidth}
    \includegraphics[width=\textwidth]{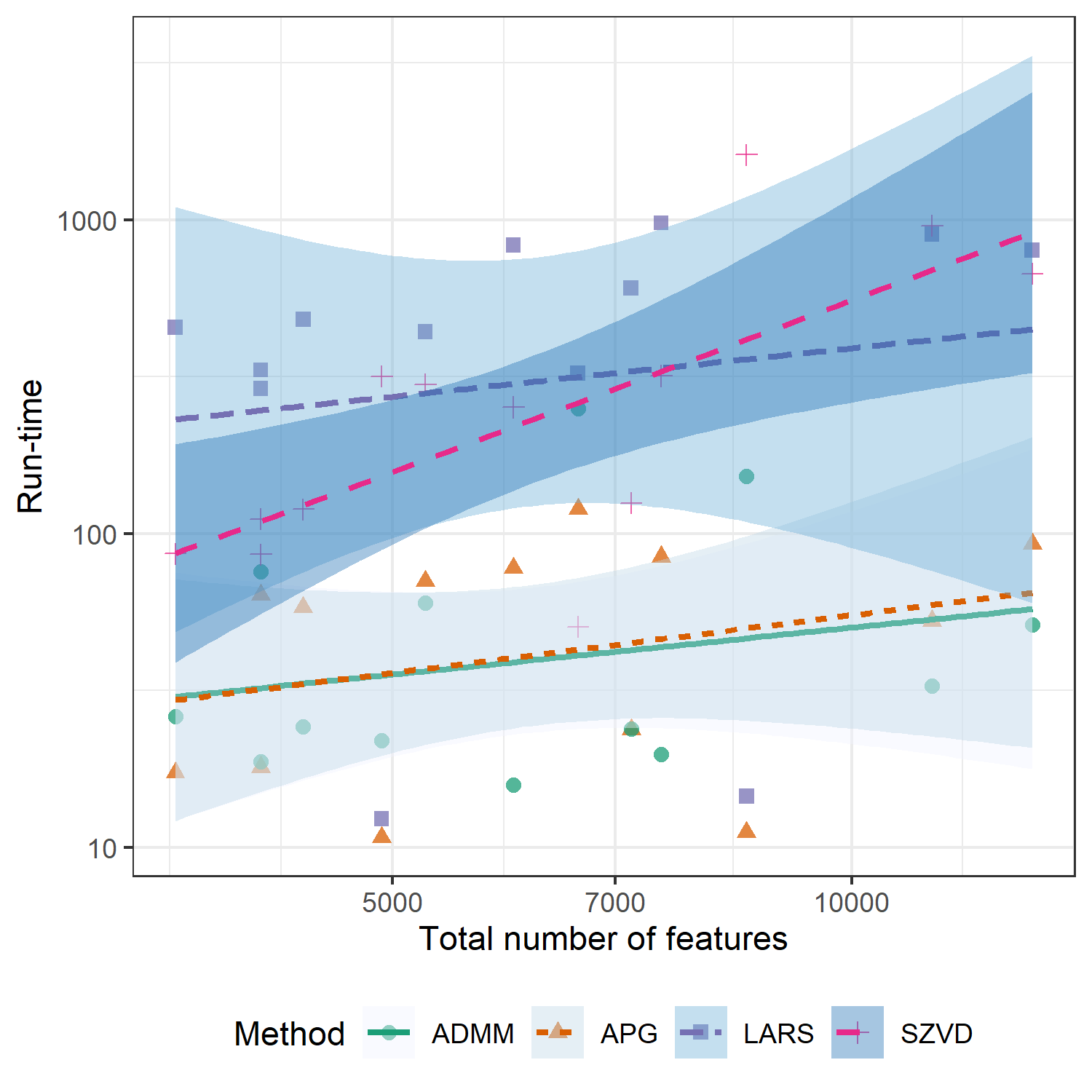}
    \caption{Run-times for Large-scale Data}
    \label{fig:rankingsTime}
  \end{subfigure}
  ~
  \begin{subfigure}[b]{0.48\textwidth}
    \includegraphics[width=\textwidth]{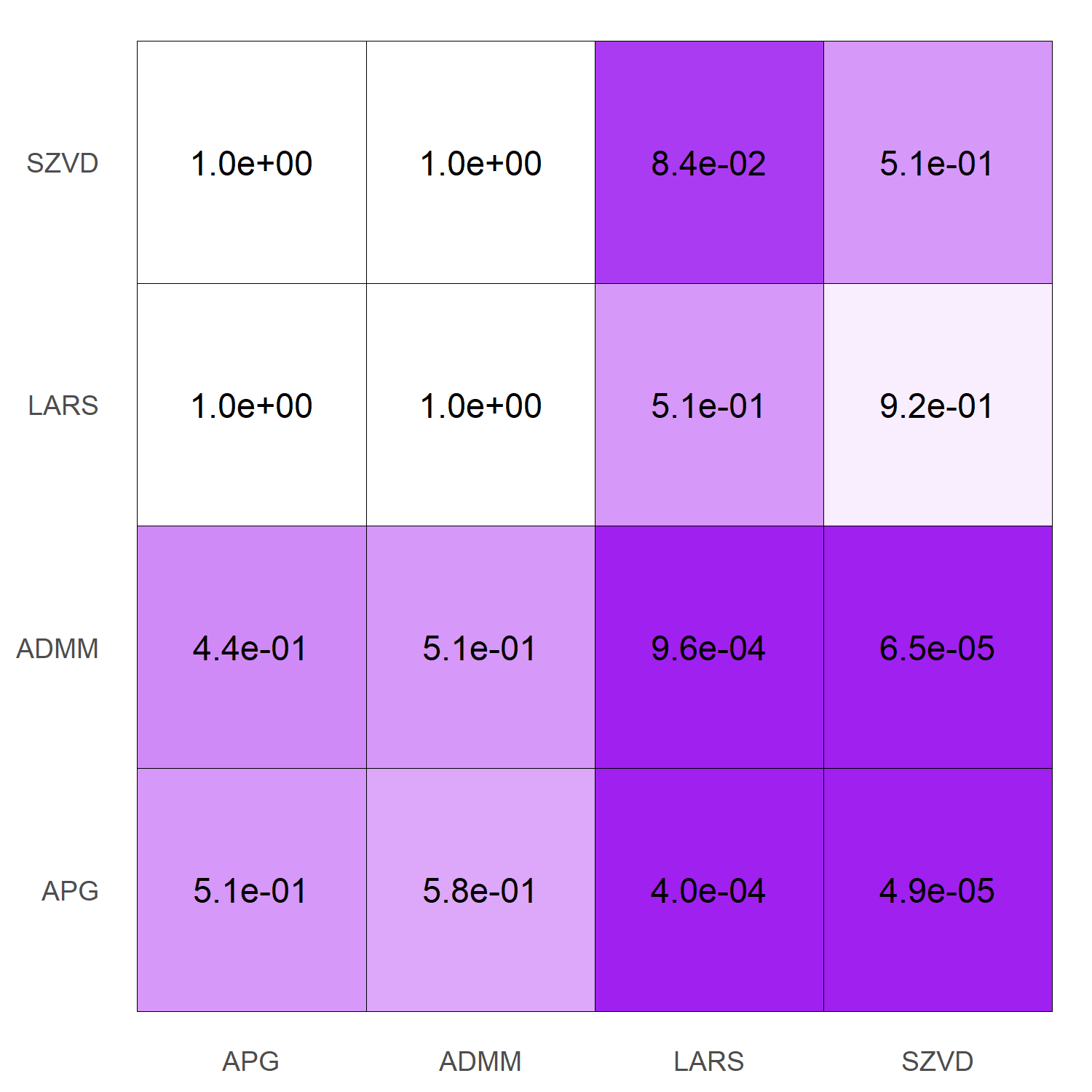}
    \caption{Hypothesis Tests for Large-scale Data}
    \label{fig:p-timeBig}
  \end{subfigure}
  \caption{Plots of run-time for small-scale (total number of features less than 3000) and larger-scale data (total number of features exceeding 3000) included in the UCR benchmarking repository.
  We also include the results of from one-sided Wilcoxon signed-rank tests  comparing computational efficiency of each pair of sparse discriminant analysis heuristics for each subset of benchmarking data.
  The $(i,j)$ box represents the observed $p$-value for the test with null hypothesis $H_0: \time(i) \ge \time(j)$ and alternative hypothesis $H_a: \time(i) < \time(j)$ where $\time(x)$ denotes the expected total run-time of heuristic $x$ on a given data set; darker colors correspond to smaller $p$-values or higher significance. }
  \label{fig:realtime-split}
\end{figure}

\newcommand{\err}{\mathrm{err}}
\begin{figure}[!htbp]
  \centering
  \begin{subfigure}[b]{0.48\textwidth}
      \includegraphics[width=\textwidth]{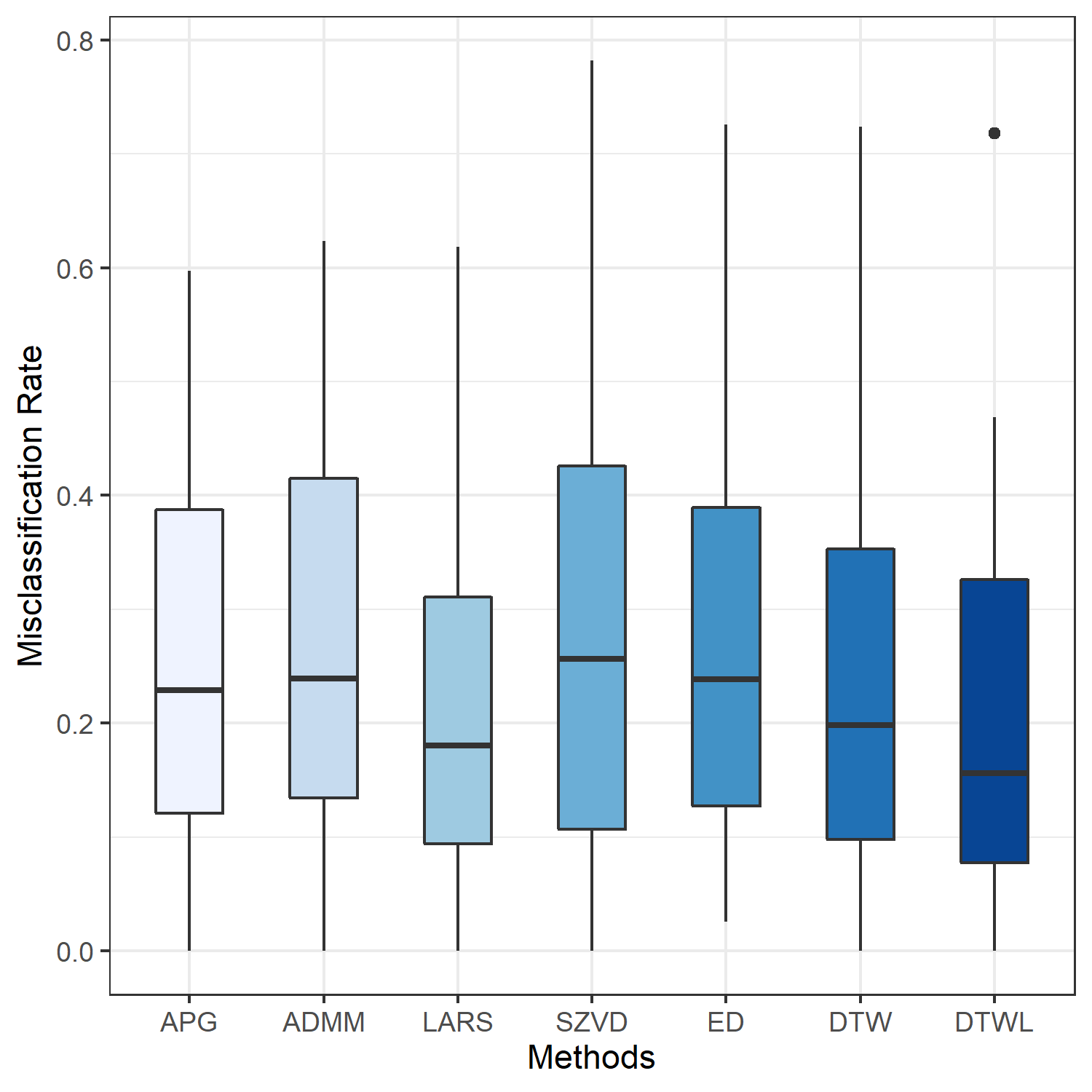}
      \caption{Misclassification Rate}
      \label{fig:boxAcc}
  \end{subfigure}
  ~
  \begin{subfigure}[b]{0.48\textwidth}
    \includegraphics[width=\textwidth]{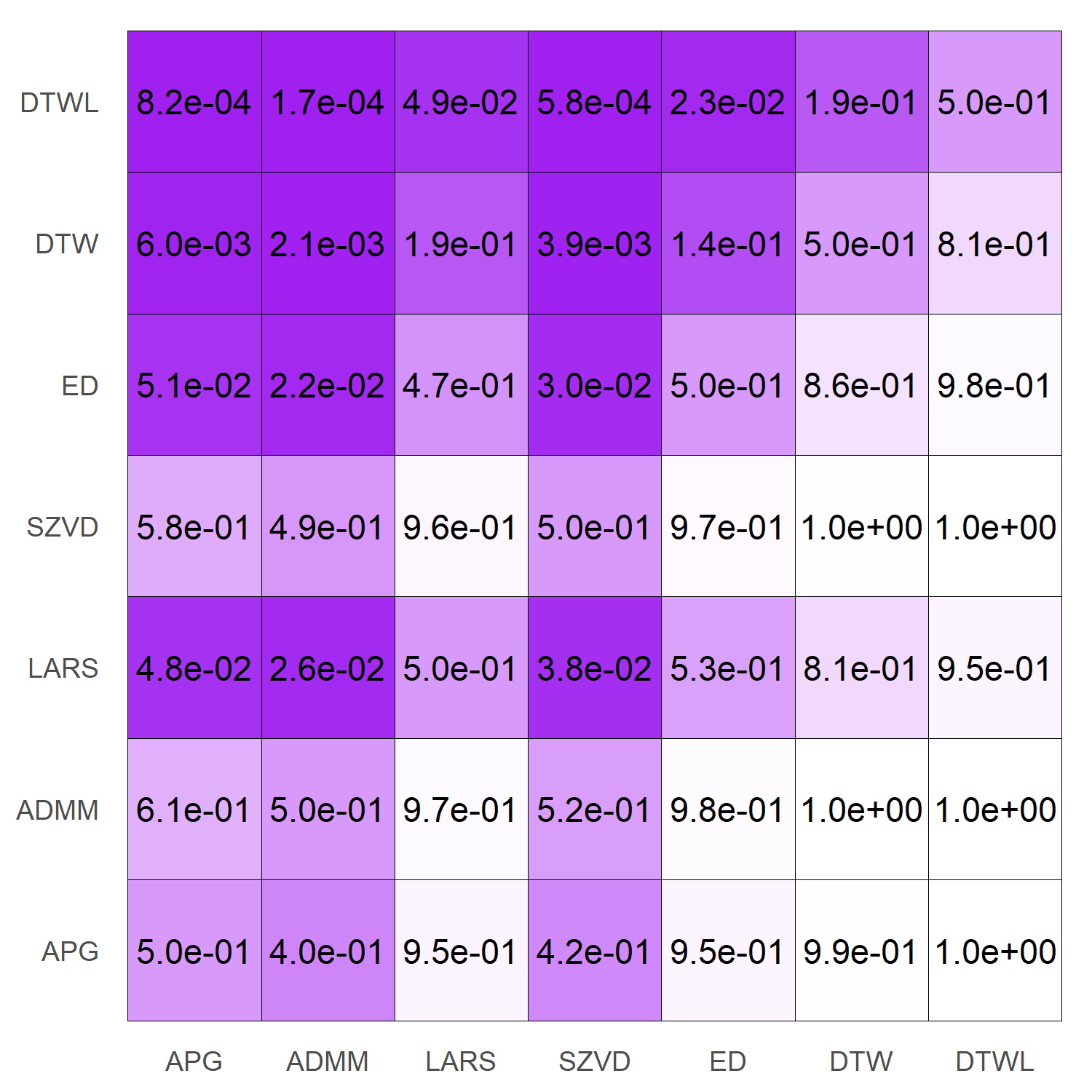}
    \caption{Results of Hypothesis Tests for Accuracy}
    \label{fig:p-accuracy}
  \end{subfigure}

  \caption{Box plots of out-of-sample misclassification rates. We also include box plots for misclassification rate for nearest neighbor classification using Euclidean distance (ED) and Dynamic Time Warping distance  with fixed warping constraint parameter $w = 100$ (DTW), and learned $w$ (DTWL).
  We also plot results of one-sided Wilcoxon signed-rank tests for misclassification rate. The $(i,j)$ box represents the observed $p$-value for the test with null hypothesis $H_0: \err(i) \ge \err(j)$ and alternative hypothesis $H_a: \err(i) < \err(j)$ where $\err(x)$ denotes the expected fraction of misclassified test observations by classification heuristic $x$.
  }
  \label{fig:real-accuracy}
\end{figure}

\newcommand{\card}{\mathrm{card}}
\begin{figure}
  \centering
  \begin{subfigure}[b]{0.48\textwidth}
      \includegraphics[width=\textwidth]{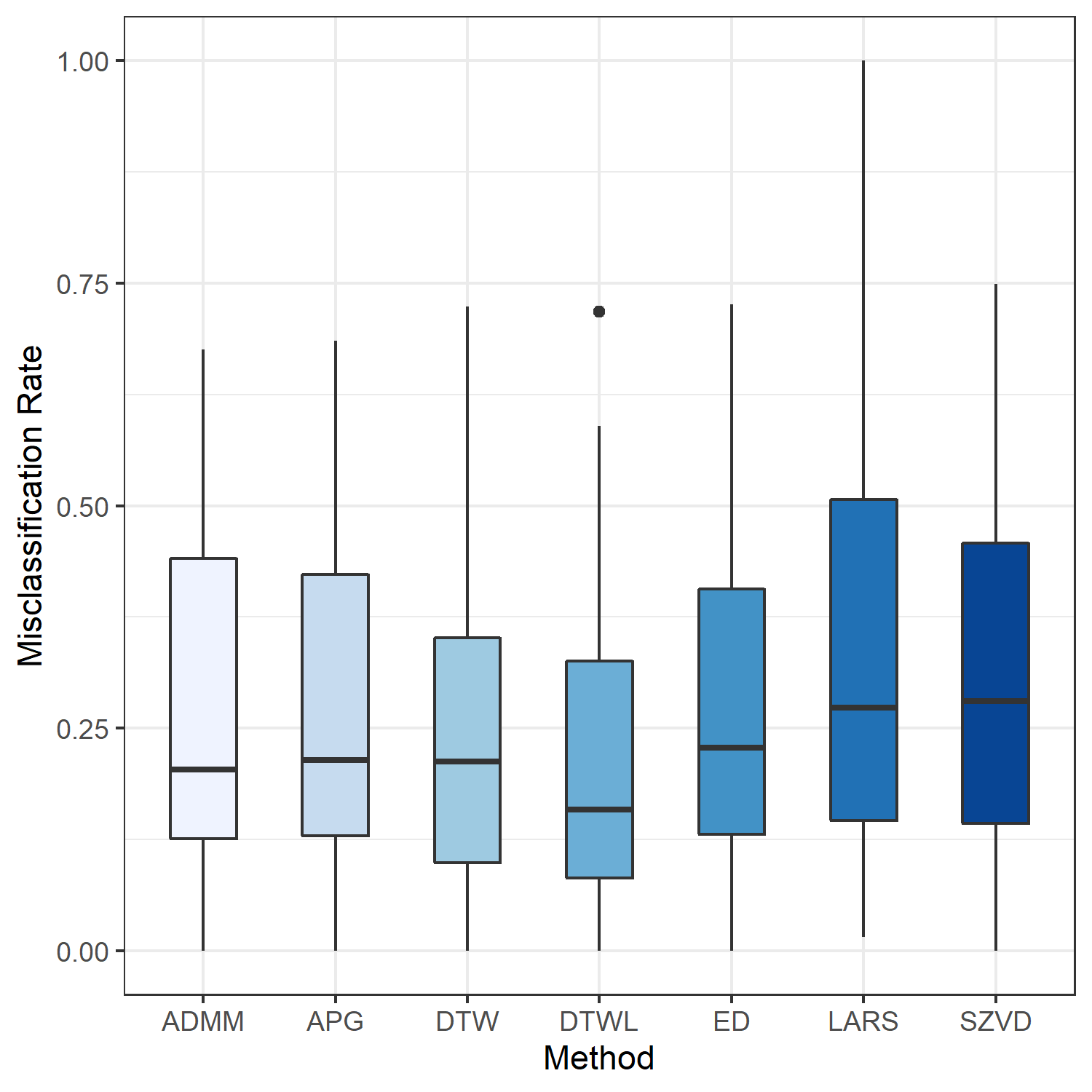}
    \caption{Misclassification Rate with Data Sets Omitted}
    \label{fig:boxSparsity}
  \end{subfigure}
  ~ 
  \begin{subfigure}[b]{0.48\textwidth}
      \includegraphics[width=\textwidth]{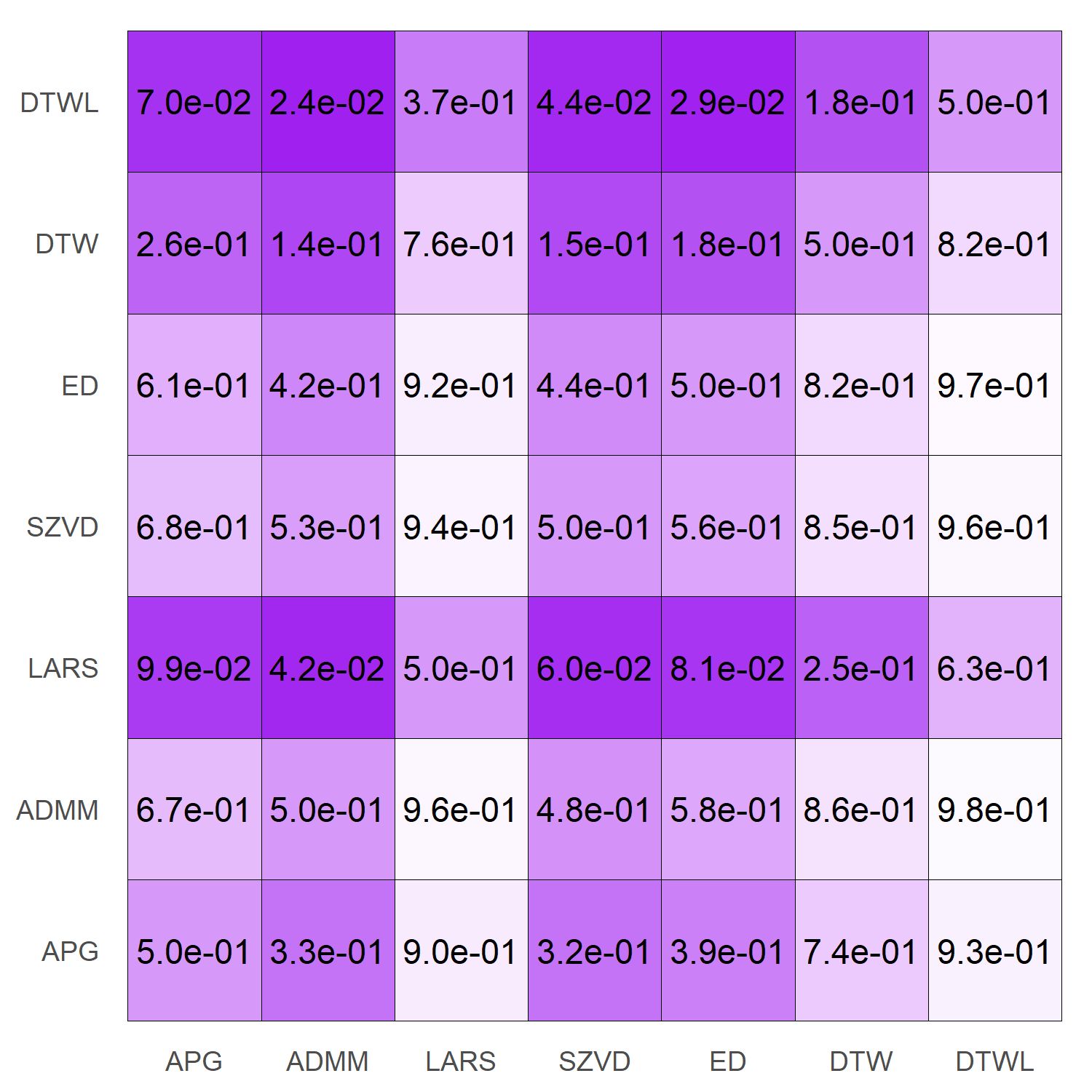}
      \caption{Results of Hypothesis Tests for Accuracy with Data Sets Omitted}
      \label{fig:p-sparsity}
  \end{subfigure}
  \caption{Box plots and attained significance/$p$-values of out-of-sample misclassification rates with data sets with error at least $70\%$ omitted. Here, we observe no significance difference in classification error ($p$-value less than $0.05$) between APG/ADMM classifiers and DTWL.
  }
  \label{fig:real-acc-drops}
\end{figure}

\subsubsection{Comparison of Classification Accuracy}

To empirically test accuracy of each proposed classification heuristic, we calculated the out-of-sample misclassification rate for each data set in the UCR repository. We include baseline accuracies based on the classification results of 1-Nearest Neighbor classifiers: each test observation was assigned the class label of its nearest training observation. As a measure of similarity, we use both \emph{Euclidean distance} (ED) and \emph{dynamic time warping distance} with fixed warping window width $w = 100$ (DTW) and learned window width (DTWL). The out-of-sample misclassification rate for each of these classifiers is provided  by the UCR Time Series Archive; we direct the reader to refer to \cite[Section II]{dau2019ucr} for further details.

For each ordered pair of classification heuristics, we perform a one-sided Wilcoxon signed-rank test. Specifically, for each pair of classification heuristics $(i,j)$ we perform a one-sided Wilcoxon test to test the null hypothesis $H_0: \err(i) = \err(j)$ against the alternative hypothesis $H_a: \err(i) < \err(j)$, where $\err(x)$ denotes the population average misclassification error rate for classifier $x$.
Figure~\ref{fig:real-accuracy} provides a box plot visualizing average misclassification rate for each method, as well as a table of $p$-values for the one-sided Wilcoxon significance tests.
We observe a significant difference between our APG and ADMM classifiers
and nearest neighbors classifiers using dynamic time warping distance with learned window width (DTWL) and fixed length.
We  also observe evidence that APG and ADMM are, on average, less accurate than the LARS classifier ($p$-values $0.048$ and $0.026$, respectively).
On the other hand, the results of these hypothesis tests suggests a significant improvement in accuracy when using the DTWL classifier over all classifiers except the DTW classifier, and all methods provide improved accuracy over SZVD.

The observed differences between the accuracy of nearest neighbors classifiers, particularly those using dynamic time warping distances, and SOS classifiers can be partially explained by the extremely poor accuracy of SOS classifiers for a limited number of data sets.
Linear discriminant analysis-based classifiers are only applicable under the assumption that data is linearly separable following projection onto a lower dimensional subspace and that data from all classes are sampled from distributions with shared covariance matrix. The poor accuracy of the SOS classifiers suggest that these assumptions are not satisfied by this subset of the benchmarking repository.
If we omit the data sets in the UCR repository for which the APG classifier yields a misclassification rate of at least $60\%$, we obtain the average misclassification rates and attained significance visualized in Figure~\ref{fig:real-acc-drops}.
After restricting the benchmarking data set in this way, we observe no significant difference between the APG-trained classifiers and the nearest neighbor classifier DTWL or the LARS-trained classifier at a significance level of $p < 0.05$.

We note that we do not consider this reduced benchmarking set in an attempt to overstate the classification accuracy of the proposed methods APG and ADMM relative to DTW-based nearest neighbors methods. Instead, we want to emphasize that the average difference in accuracy is due to the relatively poor performance of SOS models for a modest number of problems, for which linear classifiers are possibly ill-suited, rather than the SOS models performing poorly on average over all benchmarking data sets.
This suggests that SOS classifiers, including those trained using the LARS algorithm, exhibit comparable classification performance to the baseline provided by nearest neighbors classifiers when restricted to instances where their use is appropriate, i.e., their underlying statistical assumptions are met.

\subsubsection{Comparison of Run-Times}
In terms of computational complexity, we can observe two general trends: LARS is consistently more efficient than APG and ADMM when the total number of predictor variables, $qp$, is small (say, $qp < 1000$), and APG/ADMM are significantly more efficient than LARS when the number of features is moderate to large ($qp \ge `000$). Figure~\ref{fig:realtime} plots the total run-time of each sparse discriminant heuristic (including training of parameters by cross-validation), along with linear models fit to observed run-times. There are clear bifurcation points between $500 <qp < 1000$, where the linear models for run-time of APG and ADMM cross that of LARS. To investigate this phenomena further, we isolated run-times for data sets with $qp < 3000$ and $qp \ge 3000$ and performed one-sided Wilcoxon tests for the null hypothesis $H_0: \time(i) = \time(j)$ and alternative hypothesis $H_a: \time(i) < \time(j)$ under both settings.
The results of these significance tests can be found in Figure~\ref{fig:realtime-split}, along with plots of run-times.
These tests strongly suggest that
both APG and ADMM require significantly less computation than LARS when the total number of predictor variables is greater than $3000$: we observe $p$ values on the order of $10^{-4}$ when testing $H_0: \time(\text{APG}) = \time(\text{LARS})$ and $H_0: \time(\text{ADMM}) = \time(\text{LARS})$.
On the other hand, we see modest evidence that
LARS is more efficient than both ADMM and APG when the total number of predictor variables is less than $3000$, but not at a statistically significant level.
We should also note that the computational cost of the LARS algorithm is at least partially inflated by the fact that we seek somewhat dense discriminant vectors (with terminating cardinality $0.25p$), although LARS regularly terminated with sparse optimal solutions, obtained in fewer than $0.25p$ iterations.
Finally, all methods were more efficient than the SZVD method.
This scaling of computational cost largely agrees with that observed for synthetic data in Section~\ref{sec:scaling} and predicted by~\eqref{eq:APG-best}.

We should note that we do not include nearest neighbor classifiers in this discussion of computational efficiency, although we used these methods to obtain a baseline accuracy to benchmark our proposed methods against. The accuracies of these classifiers were provided with the UCR repository and, thus, did not require retraining of the classifiers. Naive implementation of nearest neighbors methods requires at least $\O(n^2p)$ operations to calculate pairwise Euclidean distances and $\O(n^2p^2)$ flops to calculate DTW distances (plus additional operations for training the window width $w$); optimized methods for DTW reduce this computational cost to $\O(n^2p)$ flops. Therefore, we expect our methods to scale at least as well as these nearest neighbors methods.


	\subsection{Multispectral X-ray images and $\bs\Omega$ of varying rank}
	\label{sec: multispec}

	To demonstrate the improvement in run-time obtained by using a low-rank $\bs\Omega$ in the elastic-net penalty, we perform pixelwise classification on multispectral X-ray images, as presented in~\cite{einarsson2017foreign}. The multispectral X-ray images are scans of food items, where each pixel contains 128 measurements (channels) corresponding to attenuation of X-rays emitted at different wavelengths (see Figure~\ref{fig:meats}). The measurements in each pixel thus give us a profile for the material positioned at that pixel's location (see Figure~\ref{fig:profile}).

  \begin{figure}[!htbp]
  		\centering
  		\includegraphics[width=0.24\columnwidth]{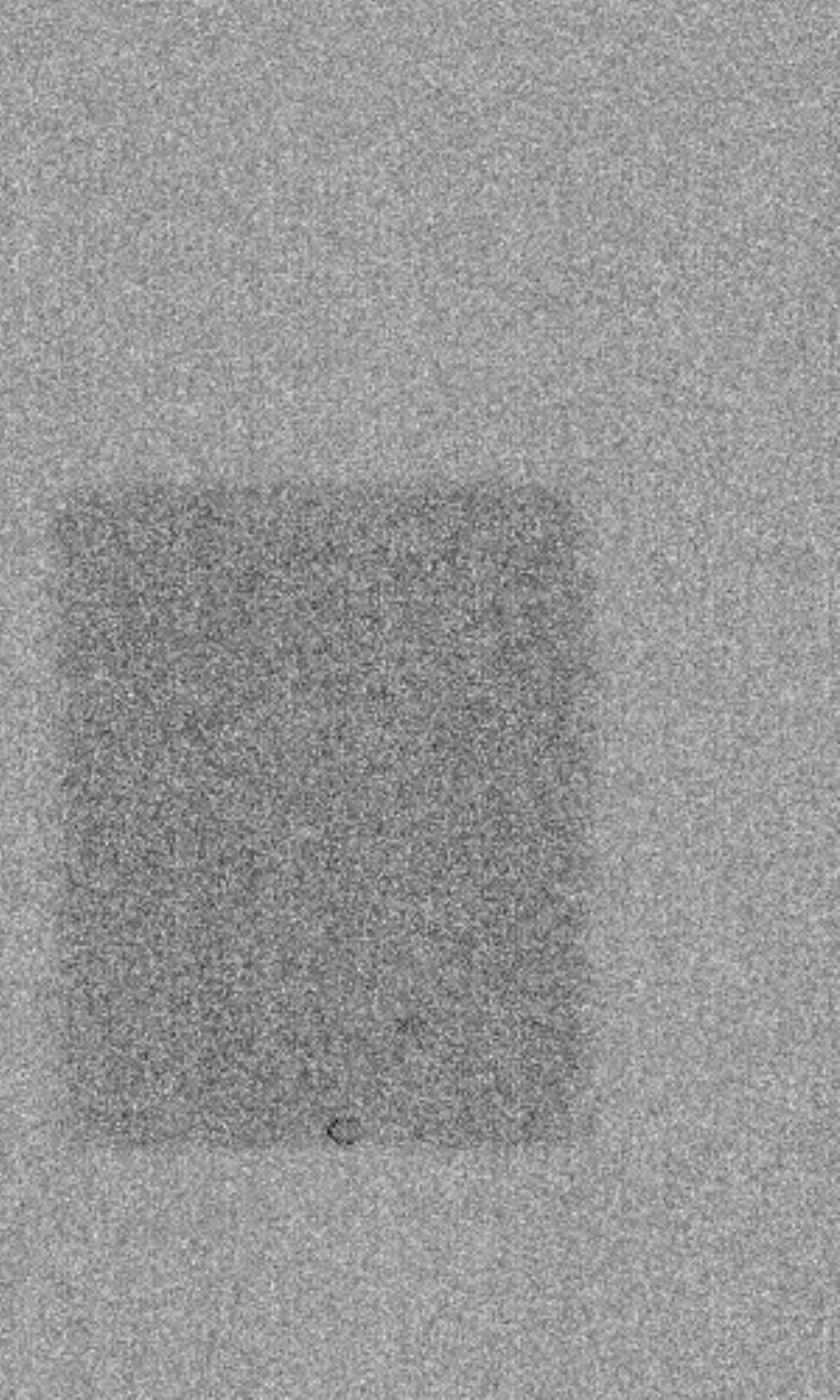}
  		\includegraphics[width=0.24\columnwidth]{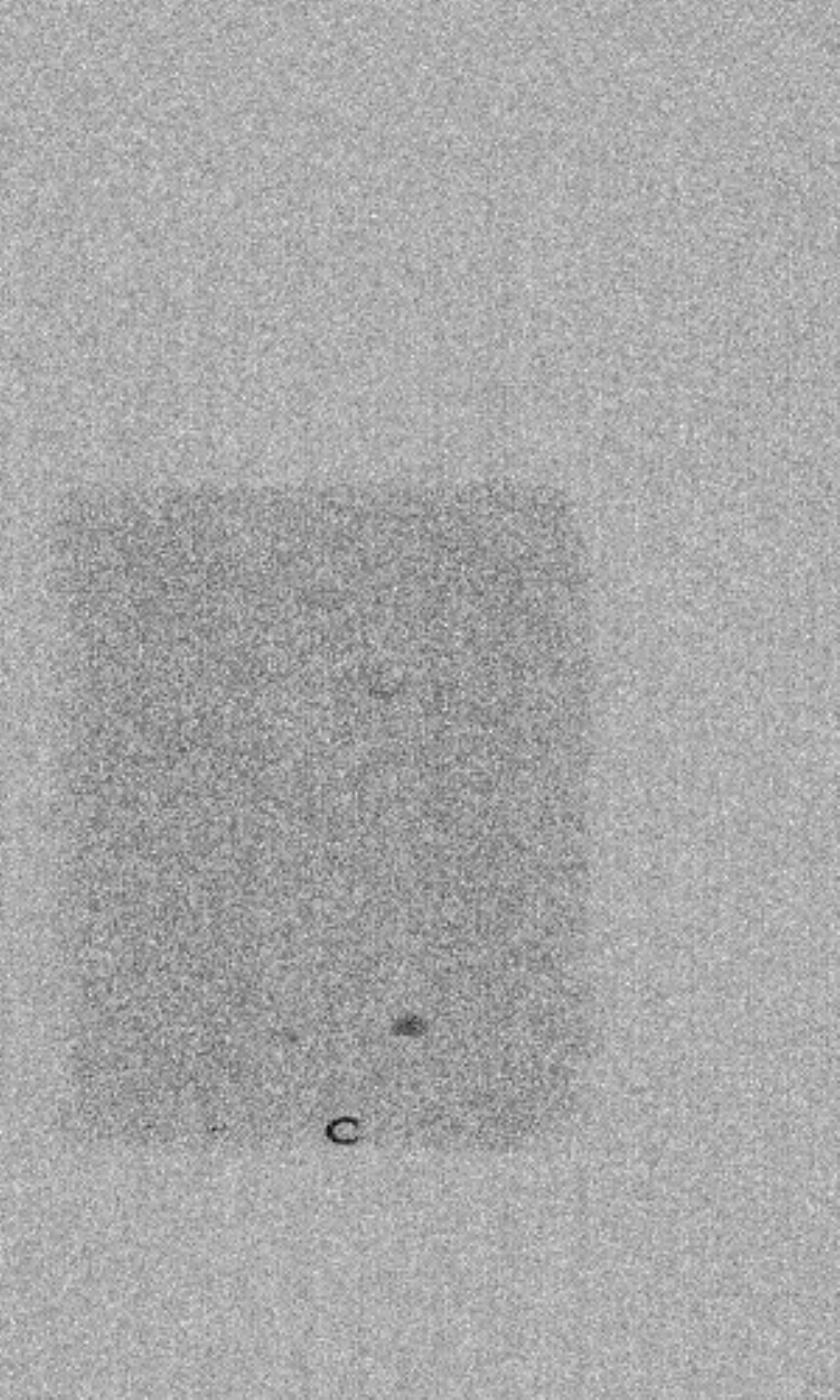}
  		\includegraphics[width=0.24\columnwidth]{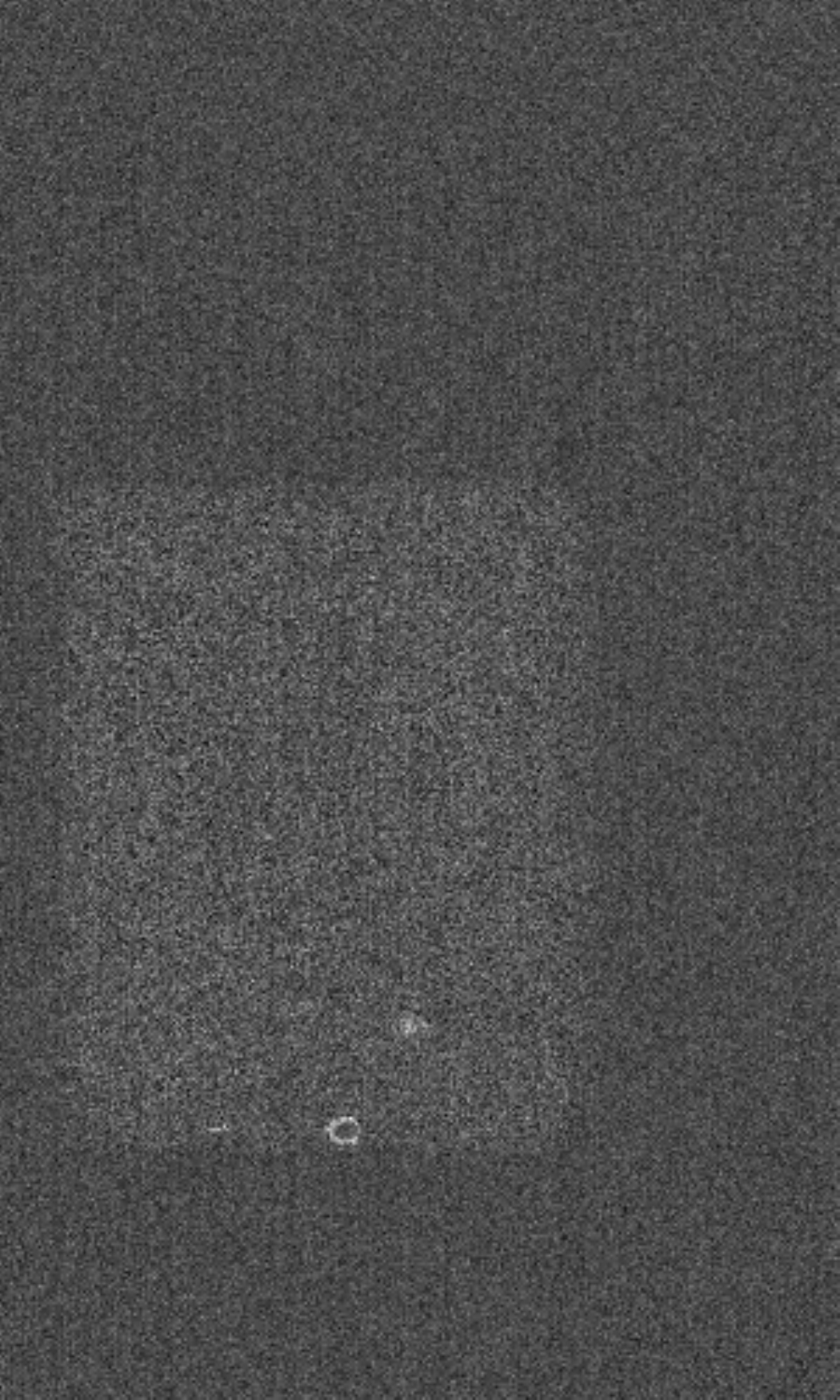}
  		\includegraphics[width=0.24\columnwidth]{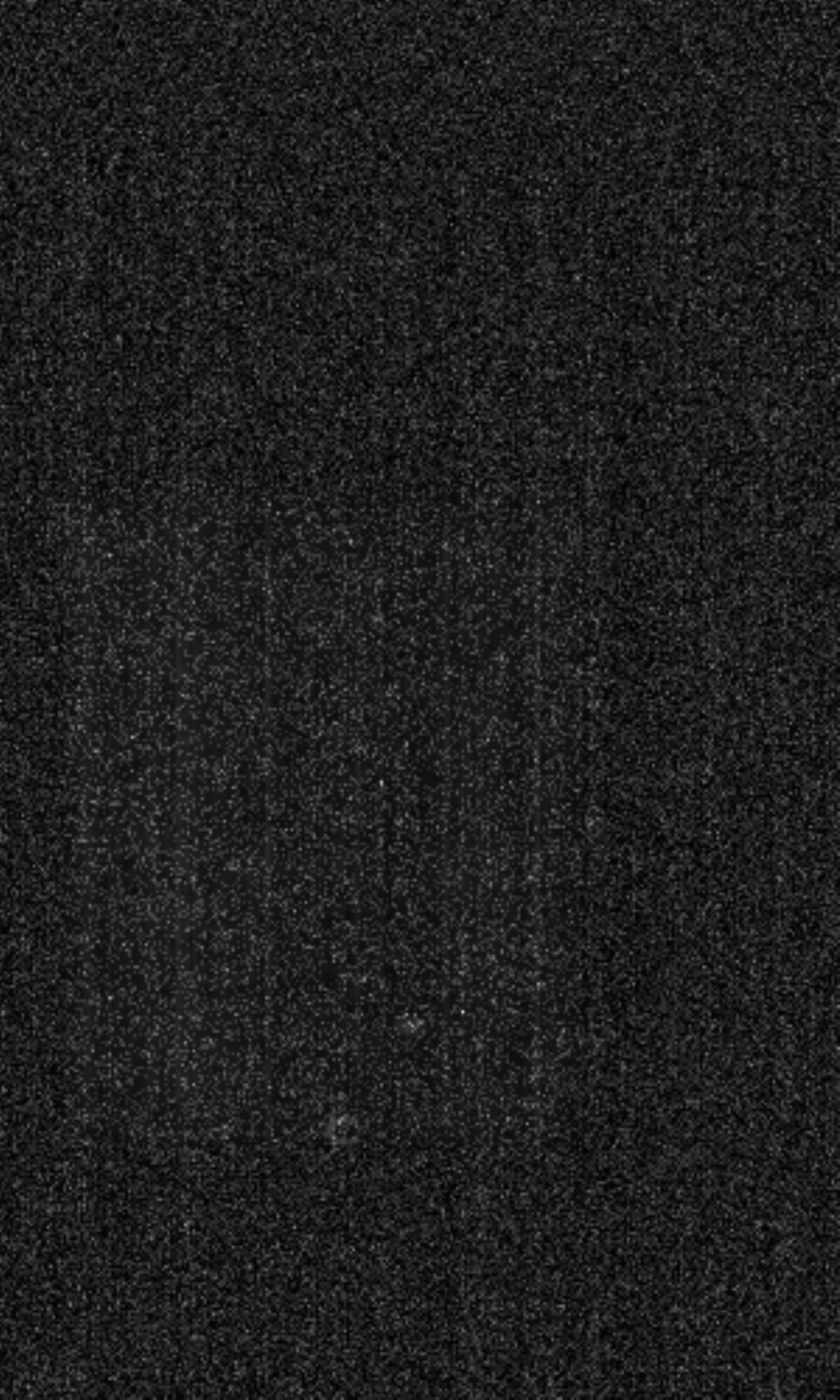}
  		\caption{Grayscale images of different channels from a minced meat sample generated with a multispectral X-ray scanner after all preprocessing. From left to right are channels 2, 20, 50 and 100. The contrast decreases the higher we go in the channels and the variation in the measurements increases. Some foreign objects can be seen as small black dots.}
      \label{fig:meats}
  	\end{figure}
  	\begin{figure}[!htbp]
  		\centering
  		\includegraphics[width=1\columnwidth]{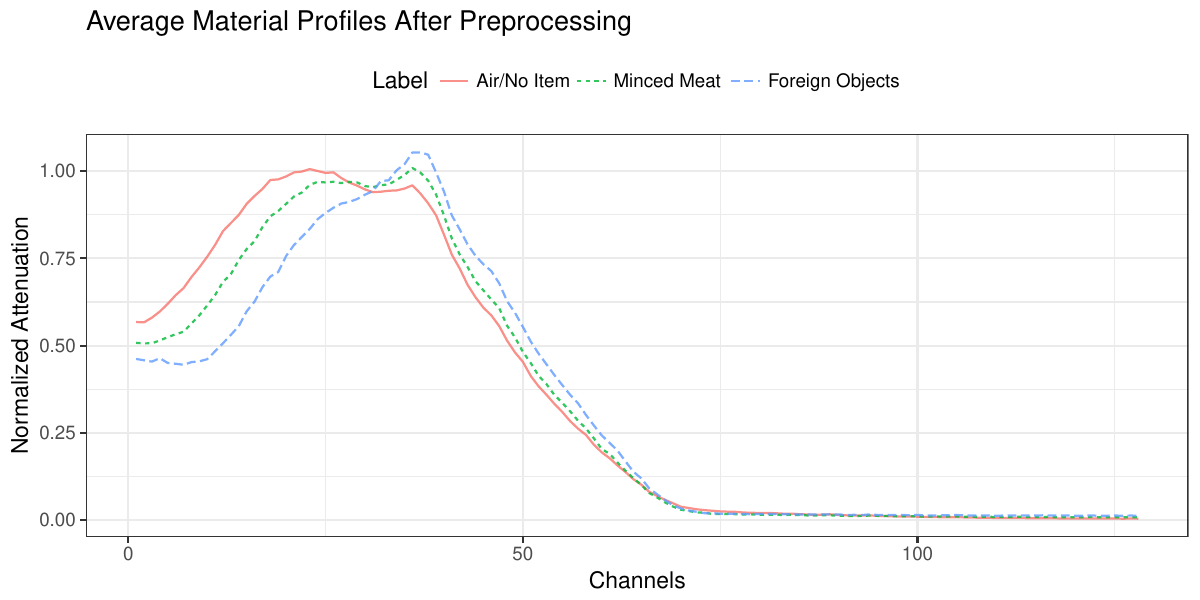}
  		\caption{Profiles of materials seen in Figure~\ref{fig:meats} over the 128 channels. The profile for each type of material, displayed here, is averaged over 500 pixels.}
      \label{fig:profile}
  	\end{figure}

	We start by preprocessing the scans as in \cite{einarsson2017foreign} in order to remove scanning artifacts and normalize the intensities between scans. We scale the measurements in each pixel by the 95\% quantile of the corresponding 128 measurements instead of the maximum. This scaling approach is more robust in the sense that it is less sensitive to outliers compared to using the maximum.
	We create our training data by manually selecting rectangular patches from six scans. We have three classes, namely \textit{background}, \textit{minced meat} and \textit{foreign objects}. We further subsample the observations to have balanced number of observations, where the class \textit{foreign objects} was under represented. In the end we have 521 observations per class, where each observation corresponds to a single pixel. This data was used to generate Figure~\ref{fig:profile}.  For training we use 100 samples per class, and the rest is allocated to a final test set.
	This process yields 128 variables per observation, but in order to get more spatially consistent classification, we also include data from the pixels located above, to the right, below and to the left of the observed pixel. Thus we have $p=5\cdot 128=640$ variables per observation. The measurements corresponding to our observation are thus indexed according to spatial and spectral position, i.e., observation $\mathbf{x}_i$ has measurements $x_{ijk}$, where $j\in\{ 0,1,2,3,4 \}$ indicates which pixel the measurement belongs to (\textit{center, above, right, bottom, left}), and $k\in\{1,2,...,128\}$ indicates which channel.

	We can impose priors according to these relationships of the measurements in the $\bs\Omega$ regularization matrix. We assume that the errors should vary smoothly in space and thus impose a Mat{\'e}rn covariance structure on $\bs\Omega^{-1}$ \cite{matern2013spatial}:
	\begin{equation} \label{eq: matern}
		C_{\nu}(d) = \sigma^2 \frac{2^{1-v}}{\Gamma(\nu)}\left( \sqrt{2\nu}\frac{d}{\rho} \right)^{\nu}K_\nu\left( \sqrt{2\nu}\frac{d}{\rho} \right).
	\end{equation}
	The Mat{\'e}rn covariance structure \eqref{eq: matern} is governed by the distance $d$ between measurements. In \eqref{eq: matern}, $\Gamma$ refers to the gamma function and $K_{\nu}$ is the modified Bessel function of the second kind. For this example we assume that all parameters are 1, except that $\nu$ is $0.5$. We further assume that the distance between measurements $x_{ijk}$ and $x_{ij^{\prime}k^{\prime}}$ from observation $i$ is the Euclidean distance between the points $(x_j,y_j,z_k)$ and $(x_{j^{\prime}},y_{j^{\prime}},z_{k^{\prime}})$, where $x_j,y_j,x_{j^{\prime}},y_{j^{\prime}}\in\{-1,0,1\}$ and $z_k,z_{k^{\prime}}\in\{1,2,...,128\}$. The distance is thus the same as in the image grid (center, top, bottom, left, right pixel location), and $z$-dimension corresponds to the channel.

	We use a stopping tolerance of $10^{-5}$ and a maximum of 1000 iterations for the inner loop using the accelerated proximal algorithm, and a stopping tolerance of $10^{-4}$ and maximum 1000 iterations for the outer block-coordinate loop. The regularization parameter for the $l_1$-norm is selected as $\lambda = 10^{-3}$ and $\gamma = 10^{-1}$ for the Tikhonov regularizer. We present the run-time for varying $r$ in Figure~\ref{fig:rankTime} and the accuracy with respect to varying $r$ in Figure~\ref{fig:accTime}. There is a clear linear trend in rank $r$ for the increase in run-time; this agrees with the analysis of Section~\ref{sec: comp}.
	We also estimate the accuracy for a  identity regularization matrix, i.e., $\bs{\Omega} = \I$, with the same regularization parameters $\gamma$ and $\lambda$ and achieve accuracy of 0.948, which is approximately the same accuracy as when using $\boldsymbol{\Omega}^{400}$.
	To demonstrate the effect that the rank of $\bs\Omega$ has on computational complexity, we obtain the singular value decomposition of $\bs\Omega = \sum_{i=1}^p \sigma_i \u_i \v_i^T$, and construct a low-rank approximation to $\bs\Omega$ using the first $r$ singular vectors and singular values: $\bs{\Omega^{r}} = \sum_{i=1}^r \sigma_i \u_i \v_i^T$.
	We supplied the same parameters to the function \texttt{sda} from the library \texttt{sparseLDA};  \texttt{sda} required 267 seconds to run and achieved an accuracy of $0.949$. The maximum accuracy is achieved with the full regularization matrix, which is $0.957$.

	\begin{figure}
		\centering
		\includegraphics[width=1\columnwidth]{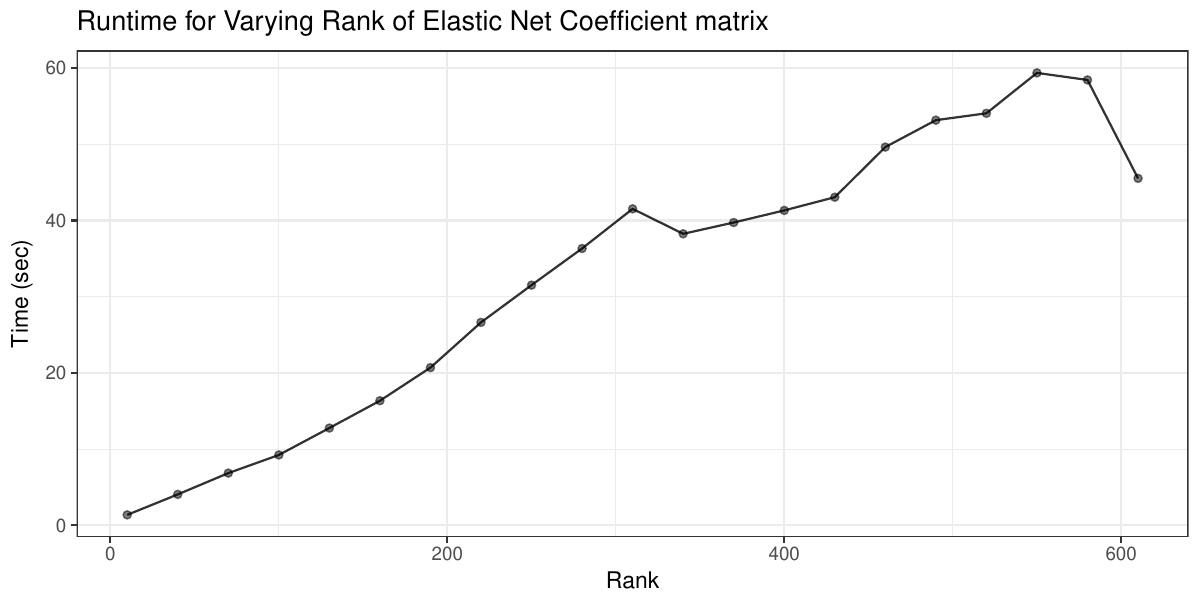}
		\caption{run-time as function of $\rank(\bs\Omega)$. The run-time also includes the creation of the low-rank approximated $\boldsymbol{\Omega}$ matrix.}
		\label{fig:rankTime}
	\end{figure}

	\begin{figure}
		\centering
		\includegraphics[width=1\columnwidth]{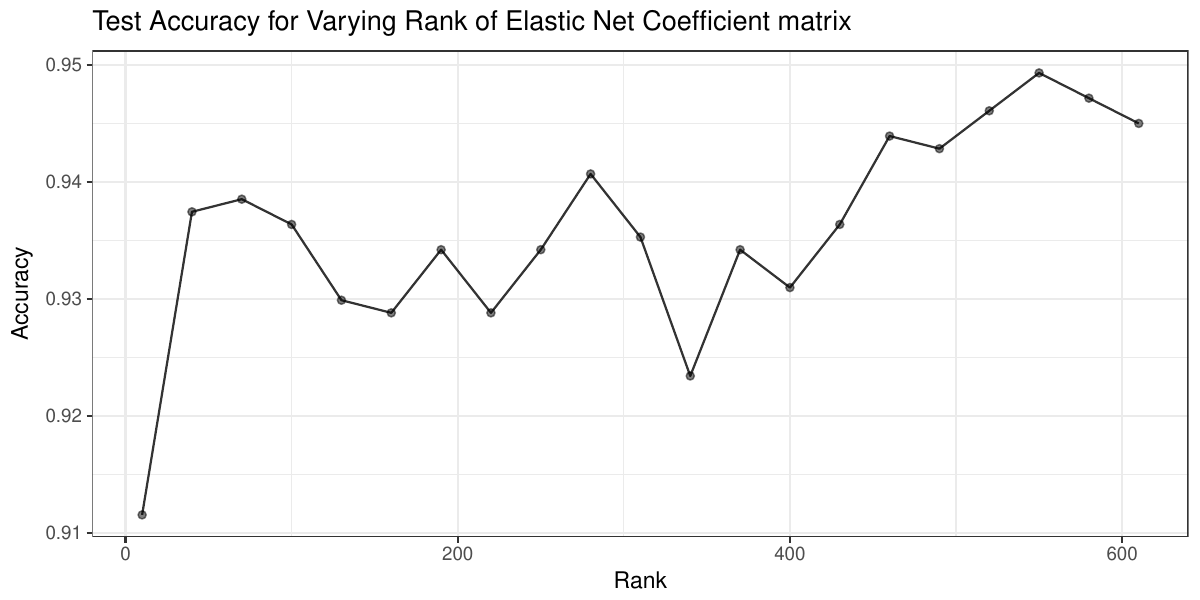}
		\caption{Test accuracy as function of $\rank(\bs\Omega)$}.
		\label{fig:accTime}
	\end{figure}

  \subsection{Summary}

	Our proposed proximal methods for sparse discriminant analysis provide	a decrease in classification error over the existing LARS approach	in almost all experiments.
  Moreover, we see a significant improvement in terms of computational resources used by the	accelerated proximal gradient method (APG and APGB) and alternating direction method of multipliers (ADMM) over LARS in medium to large-scale problem instances, i.e., when the number of predictor variables $p$ exceeds $1000$, without significant loss of classification accuracy when compared to standard nearest neighbors classifiers (ED, DTW); we remind the reader that SOS with APG and ADMM does not match the classification accuracy of the nearest neighbor classifier using dynamic time wapring distance with learned window width (DTWL) due to poor classification performance of SDA for a small number of data sets for which SDA does not seem applicable, and that our methods are more efficient than DTWL, offer an element of feature selection via sparsity, and are amenable to learning tasks other than classification as a general dimension reduction tool.
	We should note that this agrees with the theoretical estimates of computational cost of these methods given in Section~\ref{sec: comp} and Appendix~\ref{sec:it-comp}.
  Specifically, both APG and ADMM converge linearly with per-iteration complexity on the order of $\O(p)$ floating point operations per-iteration, which leads to overall computation time, as measured in floating point operations, to be far less than the
	$\O(p^3)$ flops of the classical LARS method.
  In our experiments, the decrease in run-time is most
	significant when $p$ is large, where
	the cost of $\O(p^3)$ flops for LARS becomes prohibitive.
  It is important to note that the slow convergence of the proximal gradient method (PG/PGB)
	without acceleration yields
	significantly longer run-times despite the decreased per-iteration cost.
  Finally, we note that there appears to be limited benefit from the use of backtracking line search, when compared to a constant step size given by the Frobenius norm estimate $\|\A\|_F$ of the Lipschitz constant.
  Specifically, the results of these experiments indicate that using a constant step length yields similar classification performance to the backtracking approach, but without a significant increase in run-time due to repeated calculation of $\nabla f$.

  \subsubsection{Comparison of ADMM and APG}
  \label{sec:ADMMvsAPG}

  The results of our empirical analysis suggest that the use of either APG and ADMM to solve~\eqref{eq: b prob} may yield a significant improvement over the classical LARS-EN heuristic, particularly when $p$ is large and we seek dense discriminant vectors.
  However, which of these two methods is most efficient seems to vary under different experimental conditions.
  Specifically, APG generally requires less overall run-time than ADMM when analyzing data sets from the UCR benchmarking repository considered in Section~\ref{sec:real-data}.
  On the other hand, ADMM tends to converge more quickly and require less computation than APG  when analyzing synthetic data, particularly in our convergence tests (Sect.~\ref{sec:convtrials}) and scaling tests (Sect.~\ref{sec:scaling}).
  This suggests that performance of the ADMM heuristic is sensitive to the choice of the augmented Lagrangian parameter $\mu$, as this is the only parameter that varies between these different analyses.

  We performed the following analysis to further illustrate this sensitivity to the choice of $\mu$. We generated $100$ different data sets containing $n=200$ training observations sampled from each of the $p=2000$ dimensional Gaussian distributions $N(\bs{\mu}_1, \bs{\Sigma})$ and $N(\bs{\mu}_2, \bs{\Sigma})$ as in Section~\ref{sec:convtrials}.
  For each problem instance, we (approximately) solved~\eqref{eq: prob} using APG and ADMM with $\mu \in \{1/25, 1/5, 1, 5, 25, 125\}$ to solve~\eqref{eq: b prob}. We set $\gamma = 10^{-3}$, $\bs{\Omega} = \bs{I}$, $\lambda = 0.05\bar{\lambda}$ and terminate APG or ADMM if their respective stopping criteria are met with tolerance $10^{-4}/\sqrt{p}$.
  For each data set, we record the objective function value of~\eqref{eq: b prob} at each iteration, as well as cardinality of the obtained discriminant vector, number of iterations performed before termination, total run-time, and out-of-sample classification accuracy for a testing set of $200$ observations drawn from each of $N(\bs{\mu}_1, \bs{\Sigma})$ and $N(\bs{\mu}_2, \bs{\Sigma})$.
  Note that we are essentially repeating our analysis from Section~\ref{sec:convtrials}, except this time we focus only APG and ADMM under varying choices of $\mu$.

  \begin{figure}
    \centering
    \includegraphics[width=\textwidth]{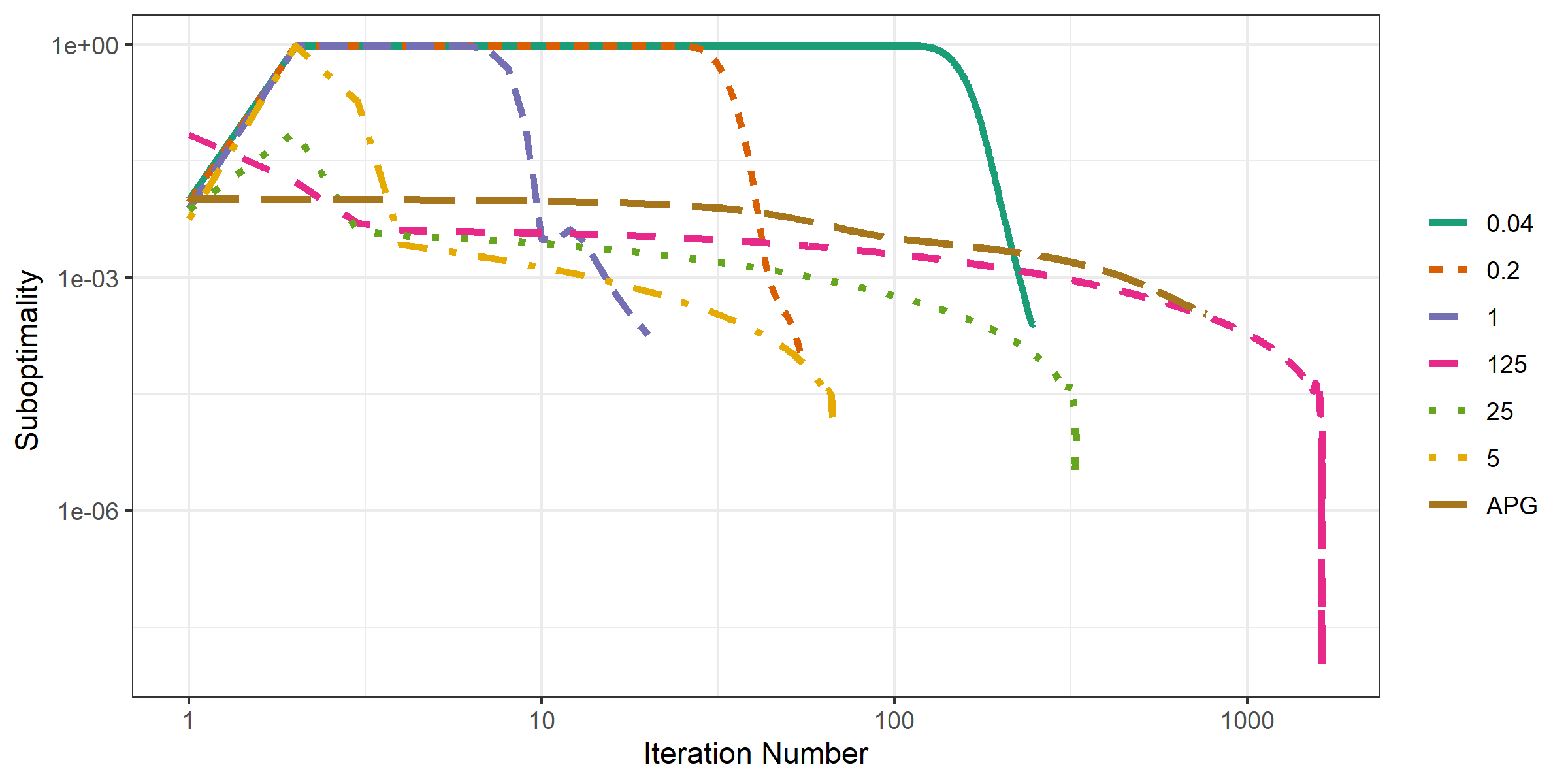}
    \caption{Difference of objective value and optimal value of each iterate averaged across $100$ Gaussian data sets.
    We note that ADMM converges in fewer iterations than APG for all choices of $\mu$ except $\mu = 125$.}
    \label{fig:admm-conv}
  \end{figure}

  \begin{table}[t]
    \centering
    \begin{tabular}{rlll}
      \toprule
     Method & Cardinality & Run-Time & Number of Iterations \\
      \midrule
      APG & 536.1 (92) & 2.902 (0.531) & 766 (136.9) \\
    $\mu = 1/25$ & 280.4 (7.2) & 0.941 (0.066) & 248.7 (3.4) \\
    $\mu = 1/5$ & 291.3 (8.7) & 0.29 (0.022) & 55.9 (0.7) \\
    $\mu = 1$ & 410.1 (12.7) & 0.171 (0.013) & 20.7 (0.6) \\
    $\mu = 5$ & 426.6 (12.8) & 0.33 (0.026) & 67.5 (2.9) \\
    $\mu = 25$ & 428.4 (12.3) & 1.212 (0.1) & 328.6 (14.6) \\
    $\mu = 125$ & 428.7 (12.3) & 5.63 (0.475) & 1639.7 (73.4) \\
       \bottomrule
    \end{tabular}
    \caption{Average number of nonzero entries (of $p=2000$), average run-time in seconds, and average number of iterations performed before termination, with standard deviation in parentheses. We see that ADMM yields sparser solutions than APG for all choices of $\mu$ and is more efficient than APG for all $\mu$ except $\mu = 125$, in terms of both total run time and number of iterations performed. All methods achieved $100\%$ out-of-sample classification rate for all $100$ training/testing data sets.}
    \label{tab:admm-conv}
  \end{table}

  Figure~\ref{fig:admm-conv} and Table~\ref{tab:admm-conv} summarize the results of this analysis.
  Recall that ADMM follows an (approximate) dual ascent applied to the dual functional of~\eqref{eq: ADMM prob}. The augmented Lagrangian parameter $\mu$ controls emphasis between the objective function of~\eqref{eq: ADMM prob} and the quadratic penalty function $\|\x - \y\|^2_2$.
  When $\mu$ is large, iterations of ADMM generally decrease disagreement between $\x$ and $\y$ while making only modest decreases, or even increases, in the objective $\frac{1}{2} \x^T \A \x + \d^T \x + \lambda \|\y\|_1$; this corresponds to the slow convergence observed when $\mu =125$.
  On the other hand, when $\mu$ is very small, we have significant decrease in the objective function each iteration, but many iterations are required before disagreement between $\x$ and $\y$ is small. We observe a two-stage phenomena in our experiments, where early iterations of ADMM feature slow decrease or increase in objective value while the gap between $\x$ and $\y$ decreases, followed by sharp descent in objective value; this initial period is longest when $\mu$ is small.
  This provides empirical evidence that we need to carefully choose the penalty parameter $\mu$ in order to optimize convergence of our ADMM heuristic, and partially explains the gap in efficiency between APG and ADMM observed in our experiments. Specifically, ADMM is more efficient than APG only if a suitable choice of $\mu$ is used; if we do not carefully tune this penalty parameter, APG can be significantly more efficient than ADMM.

	\section{Conclusion}
	We have proposed new algorithms for solving the sparse optimal scoring problem for
	high-dimensional linear discriminant analysis based
	on block coordinate descent and proximal operator evaluations. We observe that these algorithms provide significant
	improvement over existing approaches for solving the SOS problem
	in terms of efficiency and scalability. These improvements are most acute in the case that specially structured Tikhonov
	regularization is employed in the SOS formulation;
	for example, the computational resources required for each iteration scale linearly with
	the dimension of the data if either a diagonal or low-rank matrix is used.
	Moreover, we establish that any convergent subsequence of iterates generated by one of our
	algorithms converges to a stationary point.
	Finally, numerical simulation establishes that our approach provides an improvement
	over existing methods for sparse discriminant analysis, particularly when the number of nonzero predictor variables in the discriminant vectors is relatively large.

	These results present several exciting avenues for future research.
	Although we focus primarily on the solution of the optimal scoring problem
	under regularization in the form of a generalized elastic net penalty,
	our approach should translate immediately to
	formulations with any nonsmooth convex penalty function.
	That is, the framework provided by Algorithm~\ref{alg: BCD} can be applied
	to solve the SOS problem \eqref{eq: prob} obtained by applying
	an arbitrary convex penalty to the objective of the optimal scoring
	problem \eqref{eq: os prob}. The resulting optimization problem
	can be approximately solved by alternately minimizing
	with respect to the score vector $\bt$
	using the formula \eqref{eq: theta formula} and with respect to the
	discriminant vector $\bb$ by solving a modified version of
	\eqref{eq: b prob}.
	The proximal methods outlined in this paper can be applied to
	minimize with respect to $\bb$ if the regularization function is convex;
	however it is unlikely that the computational resources necessary for
	this minimization will scale as favorably as with the generalized
	elastic net penalty.
	On the other hand, the convergence analysis presented in
  Section~\ref{convergence}
  extends immediately to this more general framework.
	Of particular interest is the modification of this approach to provide
	means of learning discriminant vectors for data containing ordinal
	labels, data containing corrupted or missing observations,
	and semi-supervised settings.

	Finally, the results found in Section~\ref{convergence}, as well as Appendices~\ref{sec: convergence1} and~\ref{sec: convergence2} 
  establish that
	any convergent subsequence of iterates generated by our block coordinate descent approach must converge to a stationary point. However, it is still unclear when this sequence of iterates is convergent, or
	at what rate these subsequences converge;
	further study is required to better understand the convergence properties
	of these
	algorithms.
	Similarly, despite the empirical evidence provided in Section~\ref{sims},
	it is unknown what conditions ensure that data is classifiable 
	using sparse optimal scoring and, more generally, linear discriminant
	analysis.
	Extensive consistency analysis is needed to determine theoretical
	error rates for distinguishing random variables
	drawn from distinct distributions.


	\subsection{Acknowledgements}{
	We are grateful to Mingyi Hong for his helpful comments and suggestions, and to the anonymous reviewers, whose suggestions significantly improved this manuscript.
  B.~Ames was supported in part by National Science Foundation Grants \#20212554 and \#2108645, University of Alabama Research Grants RG14678 and RG14838, and a UA Cyberseed Grant. G.~Einarsson's PhD scholarship was funded by the Lundbeck foundation and the Technical University of Denmark.
	S.~Atkins was part of the University Scholars Program at the University of Alabama and received a graduate student fellowship funded by the University of Florida while this research was conducted.
	This work was made possible in part by a grant of high performance computing resources and technical support from the Alabama Supercomputer Authority.}
\vskip 0.2in

\bibliographystyle{spmpsci}      
\bibliography{SDAref}

\appendix

\section{Detailed Calculation of Per-Iteration Complexity}
\label{sec:it-comp}
The most expensive step of both the proximal gradient method (Algorithm~\ref{alg: pg}) and the accelerated proximal gradient method (Algorithm~\ref{alg: apg}) is the evaluation of the gradient $\nabla f$.
Given a vector $\bb \in \R^p$, the gradient at $\bb$ is given by
\begin{align*}
\nabla f(\bb) &= \A \bb = 2\rbra{ \X^T \X + \gamma \bO } \bb= 2 \X^T \X \bb + 2 \gamma \bO \bb.
\end{align*}
The product $ \X^T \X \bb $ can be computed using $\O(np)$ floating point operations (flops)
by computing $\y = \X \bb$  and then $\X^T \y$. 
On the other hand, the product $\bO\bb$ requires $\O(p^2)$  flops for unstructured $\bO$.
However, if we use a \emph{structured} regularization parameter $\bO$ we can
significantly decrease this computational cost. Consider the following examples:

\begin{itemize}
	\item
	Suppose that $ \bO $ is a diagonal matrix: $ \bO = \Diag(\u) $ for some vector
	$\u \in \R^p_+$. Then the product $\bO \bb $ can be computed using
	$\O(p)$ flops:
	$(\bO\bb)_i = u_i \beta_i$.
	Moreover, we can estimate the Lipschitz constant $\|\A\|$
	for use in choosing the step size $\alpha$ by
	$
	\|\A\| \le 2 \gamma \| \bO \| + 2 \|\X\|^2_F
	= 2 \gamma \|\u \|_\infty + 2 \|\X\|^2_F,
	$
	which requires $\O(np)$ flops, primarily to compute the norm
	$\|\X\|^2_F$.
	\item
	If the use of diagonal $\bO$ is inappropriate, we could store $\bO$
	in factored form $\bO = \bs R \bs R^T$ where $\bs R \in \R^{p\times r}$,
	and $r$ is the rank of $\bO$.
	In this case, we have
	$
	\bO \bb = \bs R (\bs R^T \bb),
	$
	which can be computed at a cost of $\O(rp)$ flops.
	Thus, if we use a low-rank parameter $\bO$,
	say $r \le \O(n)$,
	we can compute the gradient using $\O(np)$ flops.
	Similarly, we can estimate the step size $\alpha$ using
	$
	\|\A\| \le 2 \|\bs R \|_F^2 + 2\|\X\|^2_F
	$
	(computed at a cost of $\O(rp + np)$ flops).
\end{itemize}
In either case, using a diagonal $\bO$ or low-rank factored $\bO$, each iteration of
the proximal gradient method or the accelerated proximal gradient method requires
$\O(np)$ flops. Similar improvements can be made if $\bO$ is tridiagonal, banded, sparse, or otherwise nicely structured.

Similarly, the use of structured $\bO$ can lead to significant improvements in computational efficiency
in our ADMM algorithm.
The main computational bottleneck of this method is the solution of the linear system
in the update of $\x$:
\[
(\mu \I + \A) \x^{i+1} = \d + \mu \y^i - \z^i.
\]
Without taking advantage of the structure of $\A$, we can solve this system using a Cholesky
factorization preprocessing step (at a cost of $\O(p^3)$ flops) and substitution to solve
the resulting triangular systems (at a cost of $\O(p^2)$ flops per-iteration).
However, we can often use the Sherman-Morrison-Woodbury Lemma to solve this system more
efficiently using the structure of $\A$.
Indeed, fix $t$ and let $\b = \d + \mu \y^i - \z^i$.
Then we update $\x$ by
$\x = (\mu \I + \A)^{-1} \b.$
If $\M = \mu \I + 2 \gamma \bO$ then we have
\begin{align*} \notag
(\mu \I + \A)^{-1} &= \rbra{ \mu \I + 2 \gamma \bO + 2\X^T \X }^{-1} 
= \rbra{ \M + 2 \X^T \X}^{-1} \nonumber\\
&=\M^{-1} - 2 \M^{-1} \X^T \left({ \I +2\X \M^{-1} \X^T }\right)^{-1} \X\M^{-1}.\nonumber
\label{eq: SMW x}
\end{align*}
The matrix $ \I + 2 \X \M^{-1} \X^T$ is $n\times n$, so we may solve any linear system with this coefficient matrix using $\O(n^3)$ flops; a further $\O(n^2p)$ flops are needed to
compute the coefficient matrix if given $\M^{-1}$.
Thus, the main computational burden of this update step
is the inversion of the matrix $\M$. As before, we want to choose $\bO$ so that we can exploit its structure.
Consider the following cases.
\begin{itemize}
	\item
	If $\bO = \Diag( \u) $ is diagonal, then  $\M$ is also diagonal with
	\[
	[\M^{-1}]_{ii} = \frac{1}{\mu + 2 \gamma u_i}.
	\]
	Thus, we require $\O(p)$ flops to compute $\M^{-1} \bs v$ for any vector $\bs v \in \R^p$.
	\item
	On the other hand, if $\bO = \bs R \bs R^T$, where $\bs R \in \R^{p\times r}$,
	then we may use the
	Sherman-Morrison-Woodbury identity to compute $\M^{-1}$:
	\[
	\M^{-1}
	= \frac{1}{\mu}{\I} - \frac{2 \gamma}{\mu^2} \bs R \rbra{\I + \frac{2 \gamma}{\mu} \bs R^T \bs R }^{-1}
	\bs R^T.
	\]
	Therefore, we can solve any linear system with coefficient matrix $\M$
	at a cost of $\O(r^2 p)$ flops (for the formation and solution of the
	system with coefficient matrix $\I + \frac{2\gamma}{\mu} \bs R^T \bs R$).
\end{itemize}
In either case, we never actually compute the matrices $\M^{-1}$ and $(\mu \I + \A )^{-1} $
explicitly. Instead,  we update $\x$ as the solution of a sequence of linear systems and
matrix-vector multiplications, at a total cost of $\O(n^2 p)$ flops (in the diagonal case)
or $\O((r^2 + n^2) p)$ flops (in the factored case). Thus, if the number of observations $n$
is much smaller than the number of features $p$, then the per-iteration computation scales
roughly linearly with $p$. Table~\ref{flops} summarizes these estimates of per-iteration computational costs for each proposed algorithm.
Further, we should note that these bounds on per-iteration cost
assume that the iterates $\bb$ and $\x$ are dense; the soft-thresholding
step of the proximal gradient algorithm typically induces $\bb$
containing many zeros, suggesting that further improvements can be
made by using sparse arithmetic.
\begin{table}[t]
\begin{center}
	\resizebox{0.95\textwidth}{!}{

		\begin{tabular}{l l l l l l} \toprule
			& & Diagonal $\bs{\Omega}$ & Rank $r$ $\bs{\Omega}$  & Full rank $\bs{\Omega}$ \\ \midrule
			Proximal Gradient &	$\nabla f$ & $\O(np)$ & $\O(rp + np)$ & $\O(p^2)$ \\
			& Bound on $\|\A \|$ &
			$\O(np) $& $\O(rp + np) $& $\O(p^2 \log p)$ \\ 
			ADMM &$ (\mu \I + \A)\x = \b $& $\O(n^3 + n^2p)$ & $\O(n^3 + n^2 p + r^2p)$ & $\O(p^3)$
			\\ \bottomrule
		\end{tabular}
    }

	\end{center}
  \caption{Upper bounds on floating point operation counts for most time consuming steps of each algorithm.
  \label{flops}
}
\end{table}

\section{Proof of Lemma~\ref{lem: theta update}}
\label{sec: theta}

\newcommand{\w}{{\bs{w}}}
\renewcommand{\v}{{\bs{v}}}
We note that \eqref{eq: t prob} has trivial solution $\bt = \e_k$ for every $\bb=\bb^i\in \R^p$ and $j=1$.
Indeed, $\Y\e_k = \e_n$ by the structure of the indicator matrix $\Y$ and $ \sum_{i=1}^n x_{ij}=  0$ for all $j=1, 2,\dots, p$ because our data has
been centered to have sample mean equal to $\bs 0$.
Therefore, we may reformulate \eqref{eq: t prob} as
\begin{equation} \label{eq: tp}
 \begin{array}{rl}
  \ds{\min_{\bt\in\R^K} }& \|\Y\bt - \X\bb\|^2 \\
  \st &\bt^T \Y^T\Y \bt = n,\\
	& \; \bt^T \Y^T\Y \e = 0,\;\\
	& \bt^T \Y^T \Y \bt_\ell = 0\;\; \ell < j,
  \end{array}
\end{equation}
to avoid this trivial solution.
We wish to show that \eqref{eq: tp}
has optimal solution $\hat\bt$ given by
\begin{equation} \label{eq: tsol}
\hat \bt = \frac{\sqrt{n}\bs{w}}{\sqrt{\bs{w}^T \Y^T\Y \bs{w}}},
\end{equation} where
$ \bs{w} = (\I - \frac{1}{n} \bs Q_j \bs Q_j^T \Y^T \Y)(\Y^T \Y)^{-1} \Y^T \X {\bb}$.

To do so, note that \eqref{eq: tp} satisfies the linear independence constraint qualification
because the set of constraint function gradients \[ \{ 2 \Y^T\Y \bt,  \Y^T \Y \e,  \Y^T \Y \bt_1, \dots,  \Y^T\Y \bt_{j-1} \} \]
is linearly independent.
Moreover, the optimal value of \eqref{eq: tp} is bounded below by $0$. Therefore, \eqref{eq: tp} has global minimizer,
$\hat \bt$, which must satisfy the Karush-Kuhn-Tucker conditions, i.e., there exists $\v \in \R^j$, $\psi \in \R$
such that
\newcommand{\h}{\hat\bt}
\renewcommand{\Q}{\bs{Q}}
\begin{equation} \label{eq: stat}
\Y^T\Y\hat\bt -  \Y^T \X \bb +  \psi \Y^T \Y \hat \bt +  \Y^T\Y \Q_j \v = \bs 0,
\end{equation}
where $\Q_j = [\e, \bt_1, \bt_2, \dots, \bt_{j-1}]$.
We consider the following two cases.

First, suppose that $\Y^T \X \bb \notin \range \rbra{\Y^T\Y \Q_j}$.
Rearranging \eqref{eq: stat} yields
\begin{equation} \label{eq: ht}
\h = \frac{ 1}{1 + \psi} \rbra{ \Y^T\Y}^{-1} \rbra{ \Y^T \X\bb - \Y^T \Y \Q_j \v}.
\end{equation}
We choose the dual variables $\psi$ and $\v$ so that $\h$ is feasible for \eqref{eq: tp}.
It is easy to see that the conjugacy constraints are equivalent to
$
\Q_j^T \Y^T\Y \h = \bs 0,
$
which holds if and only if
\begin{align*}
\bs 0 &= \Q_j^T (\Y^T \X\bb - \Y^T \Y \Q_j \v) = \Q_j^T \Y^T \X\bb - \Q_j^T \Y^T \Y \Q_j \v \\
&= \Q_j^T \Y^T \X\bb - n \v,
\end{align*}
where the last equality follows from the fact that $\e^T \Y^T\Y \e = \bt_i^T \Y^T\Y\bt_i = n$
for all $i = 1,2,\dots, j-1$.
It follows immediately that
\begin{equation} \label{eq: v}
\v = \frac{1}{n} \Q_j^T \Y^T \X \bb.
\end{equation}
Substituting \eqref{eq: v} into \eqref{eq: ht} yields
\begin{align}
\h &= \frac{1}{1 + \psi} \rbra{ (\Y^T\Y)^{-1} \Y^T \X\bb - \frac{1}{n} \Q_j\Q_j^T \Y^T\X \bb } \notag \\
&= \frac{1}{1+ \psi} \rbra{ \I - \frac{1}{n}\Q_j\Q_j^T \Y^T \Y } (\Y^T \Y)^{-1} \Y^T\X \bb \notag\\ \label{eq: ht 2}
&= \frac{1}{1+ \psi} \w,
\end{align}
where we  choose $\psi \in \R$ so that $\h^T \Y^T \Y \h = n$:
\begin{equation} \label{eq: psi}
\sqrt{n} (1 + \psi) = \pm \sqrt{\w^T \Y^T \Y \w} = \pm \|\Y\w\|.
\end{equation}
To complete the argument, note that
\begin{align*}
\|\Y\h - \X\bb\|^2
    =  n &\mp \frac{2 \sqrt{n}}{\|\Y\w\|} \bb^T \X^T \rbra{\I - \frac{1}{n} \Q_j \Q_j^T \Y^T\Y}(\Y^T\Y)^{-1} \Y^T \X \bb \\
  & + \|\X\bb\|^2.
\end{align*}
Note further that the matrix $\Y \Q_j\Q_j^T \Y^T$ has decomposition
\begin{align*}
\Y \Q_j\Q_j^T \Y^T
= \Y\e\e^T \Y^T + & \Y\bt_1\bt_1^T \Y^T + \cdots  + \Y\bt_{j-1} \bt_{j-1}^T \Y^T.
\end{align*}
The conjugacy of the columns of $\Q_j$ implies that eigenvectors of $\Y\Q_j \Q_j^T \Y^T$
are $\Y \bt_1$, $\dots,$ $\Y \bt_{j-1}$,
and $\Y \e$, each with eigenvalue $n$; since $\Y \Q_j\Q_j^T \Y^T$ has rank equal to $k$, all remaining eigenvalues of $\Y \Q_j\Q_j^T \Y^T$ must be equal to $0$.
Moreover, for any $\z = \Y\bt_{i}$, $i=1,2,\dots,j-1$, or $\z = \Y\e$, we have
\[
\Y \rbra{(\Y^T \Y)^{-1} - \frac{1}{n} \Q_j\Q_j^T} \Y^T\z = \z - \z = 0.
\]
The matrix $(\Y^T\Y)^{-1}$ is a positive definite diagonal matrix, with $i$th diagonal entry $1/|C_i|$, where $|C_i|$ denotes the number of observations belonging to class $i$;
this implies that $$\Y (\Y^T \Y)^{-1} \Y^T$$ is positive semidefinite.
This establishes that the matrix $ \Y ((\Y^T \Y)^{-1} - \frac{1}{n} \Q_j\Q_j^T)\Y^T$ is positive semidefinite and,
thus,  $\|\Y\bt - \X\bb\|^2$ is minimized by $\h$ with $\psi = + \|\Y\w\|/\sqrt{n} - 1$.


Second, suppose that $\Y^T \X \bb \in \range \rbra{\Y^T\Y \Q_j}$.
This implies that there exists some $\v \in \R^K$ such that
\[
\Y^T \X \bb = \Y^T\Y \Q_j \v.
\]
Substituting into the objective of \eqref{eq: tp}, we see that
\begin{align*}
\|\Y \bt - \X\bb\|^2 &= \bt^T \Y^T \Y \bt - 2 \bt^T \Y^T \X \bb + \bb^T \X^T \X \bb \\
&= n - 2 \bt^T  \Y^T\Y \Q_j \v + \bb^T \X^T \X \bb \\
&= n + \bb^T \X^T \X \bb
\end{align*}
for every feasible solution $\bt$ of \eqref{eq: tp}.
This implies that every feasible solution of \eqref{eq: tp} is also optimal in this case.
In particular, $\h$ given by \eqref{eq: ht 2} is feasible for \eqref{eq: tp} and, therefore, optimal. \qed

\section{Proof of Theorem~\ref{thm: f convergence}}
\label{sec: convergence1}

We next prove Theorem~\ref{thm: f convergence}, which establishes that Algorithm~\ref{alg: BCD} converges in function value.

\begin{proof}
			Suppose that, after $t$ iterations, we have iterates $(\bt^i, \bb^i)$ with objective function
			value $F(\bt^i, \bb^i)$.
			Recall that we obtain $\bb^{i+1}$ as the solution of \eqref{eq: b prob}.
			Moreover, note that $\bb^i$ is also feasible for \eqref{eq: b prob}. This immediately implies that
			\[
			F(\bt^i, \bb^{i}) \ge F(\bt^i, \bb^{i+1}).
			\]
			On the other hand, $\bt^{i+1}$ is the solution of \eqref{eq: t prob} with $\bb = \bb^{i+1}$.
			Therefore, we have
			\[
			F(\bt^i, \bb^{i}) \ge F(\bt^i, \bb^{i+1}) \ge F(\bt^{i+1}, \bb^{i+1}).
			\]
			It follows that the sequence of function values $\{F(\bt^i, \bb^{i})\}_{i=1}^\infty$
			is nonincreasing. Moreover, the objective function $F(\bt, \bb)$
			is nonnegative for all $\bt$ and $\bb$.
			Therefore, $\{F(\bt^i, \bb^{i})\}_{i=1}^\infty$ is convergent as a monotonic bounded sequence.
      \qed
\end{proof}
\section{Proof of Theorem~\ref{thm: convergence}}
\label{sec: convergence2}
To prove Theorem~\ref{thm: convergence}, we first establish the following lemma, which establishes that the limit point
		$(\bt^*, \bb^*)$
		minimizes $F$ with respect to each primal variable with the other fixed;
		that is,
		$\bt^*$ minimizes $F(\cdot, \bb^*)$ and
		$\bb^*$ minimizes $F(\bt^*, \cdot) $.

		\begin{lemma} \label{lem: limit optimality}
			Let $\{(\bt^i, \bb^i)\}_{i=1}^\infty$ be the sequence of points generated by
			Algorithm~\ref{alg: BCD}.
			Suppose that $\{(\bt^{t_j}, \bb^{t_j})\}_{j=1}^\infty$ is a convergent subsequence of
			$\{(\bt^i, \bb^i)\}_{i=1}^\infty$ with limit $ ( \bt^*, \bb^*) $.
			Then
			\begin{align}
			F(\bt, \bb^*) & \ge F(\bt^*, \bb^* )  \label{eq: t opt}  \\
			F(\bt^*, \bb) & \ge F(\bt^*, \bb^*)   \label{eq: b opt}
			\end{align}
			for all feasible $\bt \in \R^k$ and $\bb \in \R^p$.
		\end{lemma}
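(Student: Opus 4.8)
The plan is to obtain each inequality by passing to the limit along the convergent subsequence, exploiting the joint continuity of $F$, the exactness of the two block updates, and the monotone convergence of the objective values established in Theorem~\ref{thm: f convergence}. The essential asymmetry to keep in mind is that, at each step, one block variable is the exact minimizer of $F$ with the \emph{current} value of the other block fixed, whereas the remaining block is optimal only for the \emph{previous} value of its partner. Following the convention of the decrease chain in the proof of Theorem~\ref{thm: f convergence} (in which $\bb^{t+1}$ minimizes $F(\bt^t,\cdot)$ and $\bt^{t+1}$ minimizes $F(\cdot,\bb^{t+1})$), the inequality \eqref{eq: t opt} will follow by a direct limiting argument, while \eqref{eq: b opt} will require the more delicate use of the convergence of the intermediate objective values.

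For \eqref{eq: t opt}, I would first record that the $\bt$-update solves \eqref{eq: t prob} exactly and that, because the terms $\gamma\bb^T\bO\bb + \lambda\|\bb\|_1$ do not depend on $\bt$, this is equivalent to minimizing $F(\cdot,\bb^{t})$ over the feasible set; hence $\bt^{t}$ minimizes $F(\cdot,\bb^{t})$. Thus $F(\bt,\bb^{t_j}) \ge F(\bt^{t_j},\bb^{t_j})$ for every feasible $\bt$ and every $j$. Letting $j\to\infty$ and using $\bb^{t_j}\to\bb^*$ and $\bt^{t_j}\to\bt^*$, joint continuity of $F$ gives $F(\bt,\bb^*) \ge F(\bt^*,\bb^*)$, which is \eqref{eq: t opt}.

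The main obstacle is \eqref{eq: b opt}: here $\bb^{t_j}$ is the exact minimizer of $F$ only for the \emph{previous} scoring vector $\bt^{t_j-1}$, so a naive limit is unavailable because I have no control over $\bt^{t_j-1}$. To circumvent this, I would first show that the intermediate objective values converge to the same limit $F^* := \lim_t F(\bt^t,\bb^t)$, which exists by Theorem~\ref{thm: f convergence}. Concretely, the decrease chain underlying that theorem, $F(\bt^t,\bb^t) \ge F(\bt^t,\bb^{t+1}) \ge F(\bt^{t+1},\bb^{t+1})$, sandwiches $F(\bt^t,\bb^{t+1})$ between consecutive terms of the convergent sequence $\{F(\bt^t,\bb^t)\}$, forcing $F(\bt^t,\bb^{t+1}) \to F^*$; and by joint continuity along the subsequence, $F^* = F(\bt^*,\bb^*)$. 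I would then argue by contradiction: suppose some $\hat\bb\in\R^p$ satisfied $F(\bt^*,\hat\bb) < F(\bt^*,\bb^*)$. Since $\bb^{t_j+1}$ minimizes $F(\bt^{t_j},\cdot)$, we have $F(\bt^{t_j},\bb^{t_j+1}) \le F(\bt^{t_j},\hat\bb)$; letting $j\to\infty$, the left side tends to $F^* = F(\bt^*,\bb^*)$ while the right side tends to $F(\bt^*,\hat\bb)$ by continuity in the $\bt$-slot. This yields $F(\bt^*,\bb^*) \le F(\bt^*,\hat\bb) < F(\bt^*,\bb^*)$, a contradiction, so no such $\hat\bb$ exists and \eqref{eq: b opt} holds. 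Note that this route never requires convergence of $\bb^{t_j+1}$ itself, only of its objective value.

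Two points require care. First, for the $\bb$-updates to be well-defined minimizers I would invoke the positive-definiteness of $\bO$ assumed in \eqref{eq: prob}, which makes $F(\bt,\cdot)$ strongly convex and coercive so that \eqref{eq: b prob} has a (unique) solution at each step. Second, an alternative to the contradiction step is to use compactness: every $\bt$-iterate lies on the fixed ellipsoid $\{\bt : \bt^T\D\bt = 1\}$, so $\{\bt^{t_j-1}\}$ admits a convergent sub-subsequence along which one can pass to the limit directly; however, identifying that limit with $\bt^*$ is precisely the step the value-convergence argument sidesteps, which is why I prefer the contradiction route.
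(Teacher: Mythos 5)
Your proof is correct and follows essentially the same route as the paper's: one inequality is obtained by passing to the limit in the exact block-optimality condition along the subsequence, and the other by exploiting the monotone convergence of the objective values to control the iterate whose partner has an off-by-one index. The only differences are cosmetic --- you adopt the indexing convention used in the proof of Theorem~\ref{thm: f convergence} (so the delicate inequality becomes \eqref{eq: b opt} rather than \eqref{eq: t opt}, a swap the paper itself is ambiguous about), and you phrase the squeeze as convergence of the intermediate values $F(\bt^{t_j},\bb^{t_j+1})$ to $F^*=F(\bt^*,\bb^*)$ followed by a contradiction, whereas the paper simply chains the inequalities down to $F(\bt^{t_{j+1}},\bb^{t_{j+1}})$ and takes limits at both ends.
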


			\begin{proof}
				We first establish \eqref{eq: b opt}. Consider $(\bt^{t_j}, \bb^{t_j})$. By our update step for $\bb$, we note that
				\[
				\bb^{t_j} = \argmin_{\bb\in \R^p} F(\bt^{t_j}, \bb).
				\]
				Thus, for all $j = 1,2,\dots,$ we have
				$ F(\bt^{t_j}, \bb) \ge F(\bt^{t_j}, \bb^{t_j})$
				for all $\bb\in \R^p$.	Taking the limit as $j \ra \infty$ and using the continuity of $F$ establishes
				\eqref{eq: b opt}.

				Next, note that, for every $j = 1,2,\dots $, we have
	                 		 \begin{align*}
				 \begin{array}{rl}
	   				\bs\theta^{t_j+1}=\ds{\argmin_{\bs \theta \in \R^K}} &F(\bs \theta,\bb^{t_j})\\
	   				\st & \bs \theta^T\Y^T\Y\bs \theta = n,\;\; \bs \theta^T\Y^T\Y\bs\theta_{\ell} = 0 \;\; \forall \ell <k.
	 			\end{array}
				\end{align*}
				This implies that
				\begin{align*}
					F(\bt, \bb^{t_j}) \ge F(\bt^{t_j + 1}, \bb^{t_j}) \ge F(\bt^{t_j+1}, \bb^{t_j+1})
					\ge F(\bt^{t_{j+1}}, \bb^{t_{j+1}})
				\end{align*}
				by the monotonicity of the sequence of function values and the fact that $t_j < t_j + 1 \le t_{j+1} $.
				Taking the limit as $j\to\infty$ and using the continuity of $F$ establishes \eqref{eq: t opt}.
			%
				This completes the proof of Lemma~\ref{lem: limit optimality}.
        \qed
			\end{proof}
		\bigskip

		We are now ready to prove Theorem~\ref{thm: convergence}.
		\begin{proof}\emph{of Theorem~\ref{thm: convergence}}
		\newcommand{\g}{\bs{g}}
		\renewcommand{\v}{{\bs{v}}}
		The form of the subdifferential of $\L$ implies that
		$(\g_\bt, \g_\bb)$ belongs to the subdifferential $\partial \L(\bt, \bb, \psi ,\v)$  if and only if
\begin{align}
			\g_\bt &= 2 (1 + \psi) \Y^T \Y \bt - 2 \Y^T \X \bb+  \bs{U}^T \v \label{eq: t diff} \\
			\g_\bb &\in 2 (\X^T \X + \gamma \bO) \bb - 2 \X^T \Y \bt + \lambda \partial \|\bb\|_1 \label{eq: b diff}
			\end{align}
			for all $\v \in \R^{j-1}$ and $\psi \in \R$.
			It is easy to see from \eqref{eq: b opt}  that
			$\bb^* = \argmin_{\bb \in \R^p} F(\bt^*, \bb)$.
			Thus, by the first order necessary conditions for
			unconstrained convex optimization, we must have
			\begin{align} \label{eq: b subdiff}
			\bs 0 &\in \partial \rbra{\frac{1}{2} (\bb^*)^T \A \bb^* + \d^T \bb^* + \lambda \|\bb^*\|_1} \\
&= 2 (\X^T \X + \gamma \bO) \bb^* - 2 \X^T \Y \bt^* + \lambda \partial \| \nonumber
			\bb^* \|_1;
			\end{align}
			here $\partial \|\bb\|_1$ denotes the subdifferential of the $\ell_1$-norm at the point $\bb$.

			On the other hand, \eqref{eq: t opt} implies
			\begin{equation}\label{eq: t* prob}
			 \begin{array}{rl}
			   \bt^* = \ds {\argmin_{\bt\in\R^K}} & \|\Y \bt - \X \bb^*\|^2 \\
			   \st & \bt^T \Y^T\Y \bt = n,\;\;2\bt^T \Y^T \Y \bt_\ell = 0, \; \forall \, \ell < j .
	                   \end{array}
	                  \end{equation}
			Moreover, the problem~\eqref{eq: t* prob} satisfies the linear independence constraint qualification.
			Indeed, the set of active constraint gradients\\
			$ \{ 2 \Y^T\Y \bt, 2 \Y^T \Y \bt_1, \dots, 2 \Y^T\Y \bt_{j-1} \} $
			is linearly independent for any feasible $\bt \in \R^K$  by the $\Y^T\Y$-conjugacy of $\{ \bt, \bt_1, \dots, \bt_{j-1}\}$.
			Therefore, there exist Lagrange multipliers $\psi^{*}$, $\v^{*}$ such that
			\begin{equation} \label{eq: t opt2}
			\bs 0 = 2 (1 + \psi^{*}) \Y^T \Y \bt^* - 2 \Y^T \X \bb^* +  \bs{U}^T \v^*
			\end{equation}
			by the first-order necessary conditions for optimality (see \cite[Theorem~12.1]{nocedal2006numerical}).
			%
			We see that $\bs0 \in \partial \L({\bt}^*, {\bb}^*, \psi^*, \v^*)$
			by combining \eqref{eq: b subdiff} and \eqref{eq: t opt2}.
			This completes the proof. \qed
		\end{proof}
%

\end{document}